\definecolor{darkgreen}{rgb}{0,0.5,0}
\DeclareMathOperator*{\myDelta}{\Delta}
\newcommand{\comment}[1]
\newcommand\myfunc[5]{%
	\begingroup
	\setlength\arraycolsep{0pt}
	#1\colon\begin{array}[t]{c >{{}}c<{{}} c}
		#2 & \to & #3 \\ #4 & \mapsto & #5 
	\end{array}%
	\endgroup}
\title{Emergent Open-Endedness\\ from Contagion of the Fittest }
\author{Felipe S. Abrah\~{a}o, Klaus Wehmuth, Artur Ziviani}
\email{\{fsa,klaus,ziviani\}@lncc.br}
\address[A1,A2,A3]{National Laboratory for Scientific Computing (LNCC)
	\\ 25651-075 – Petropolis, RJ – Brazil}
\thanks{ In~\cite{Abrahao2018b}, some preliminary results of this article are presented as an extended abstract.  }
\thanks{Authors acknowledge the partial support from CNPq through their individual grants: F. S. Abrah\~{a}o (313.043/2016-7), K. Wehmuth (312599/2016-1), and A. Ziviani~(308.729/2015-3). Authors also acknowledge the INCT in Data Science -- INCT-CiD (CNPq 465.560/2014-8). Authors also acknowledge the partial support from CAPES, FAPESP, and FAPERJ}
\begin{document}

	\maketitle\thispagestyle{plain}

	% Appendix conditions

	\newtheorem{thm}{Theorem}[subsection]
	\newtheorem{amslemma}{Lemma}[subsection]
	\newtheorem{amscorollary}{Corollary}[thm]
	\newtheorem{corollaryundersubsection}{Corollary}[subsection]

	\newtheorem{notation}{Notation}[subsection]
	
	\theoremstyle{definition}
	\newtheorem{amsproposition}{Proposition}[subsection]
	\newtheorem{amsdefinition}{Definition}[subsection]
	\newtheorem{subdefinition}{Definition}[amsdefinition]
	\newtheorem{subsubdefinition}{Definition}[subdefinition]
	\newtheorem{autodefinition}{Definition}
	
	\theoremstyle{remark}
	\newtheorem{subnotation}{Notation}[amsdefinition]
	
	\theoremstyle{remark}
	\newtheorem{remark}{Remark}[amsdefinition]
	\newtheorem{remarknote}{Note}[amsdefinition]
	\newtheorem{noteunderlemma}{Note}[amslemma]
	\newtheorem{noteunderthm}{Note}[thm]
	\newtheorem{subremarknote}{Note}[subdefinition]
	\newtheorem{subsubremarknote}{Note}[subsubdefinition]
	\newtheorem{subremarknote2}{Note}[remarknote]
	\newtheorem{note}{\textbf{Note}}[subsection]

	\newtheorem{Bthm}{Theorem}[section]
	\newtheorem{Blemma}{Lemma}[section]
	\newtheorem{Bcorollary}{Corollary}[Bthm]
	\newtheorem{Bcorollaryundersection}{Corollary}[section]
	\newtheorem{Bcorollaryundersubsection}{Corollary}[subsection]
	
	\theoremstyle{definition}
	\newtheorem{Bproposition}{Proposition}[section]
	\newtheorem{Bdefinition}{Definition}[section]
	\newtheorem{Bsubdefinition}{Definition}[Bdefinition]
	\newtheorem{Bsubsubdefinition}{Definition}[Bsubdefinition]
	\newtheorem{Bautodefinition}{Definition}
	\newtheorem{Bnotation}{Notation}[section]
	
	\theoremstyle{remark}
	\newtheorem{Bsubnotation}{Notation}[Bdefinition]
	
	\theoremstyle{remark}
	\newtheorem{Bremark}{Remark}[Bdefinition]
	\newtheorem{Bremarknote}{Note}[Bdefinition]
	\newtheorem{BnoteunderBlemma}{Note}[Blemma]
	\newtheorem{BnoteunderBthm}{Note}[Bthm]
	\newtheorem{Bsubremarknote}{Note}[Bsubdefinition]
	\newtheorem{Bsubsubremarknote}{Note}[Bsubsubdefinition]
	\newtheorem{Bsubremarknote2}{Note}[Bremarknote]
	\newtheorem{Bnote}{\textbf{Note}}[section]
	\newtheorem{BnoteunderBnotation}{Note}[Bnotation]

	\begin{abstract}\label{abstract}
		In this paper, we study emergent irreducible information in populations of randomly generated computable systems that are networked and follow a ``Susceptible-Infected-Susceptible'' contagion model of imitation of the fittest neighbor.  We show that there is a lower bound for the stationary prevalence (or average density of ``infected'' nodes) that triggers an unlimited increase of the expected local emergent algorithmic complexity (or information) of a node as the population size grows. We call this phenomenon expected (local) \emph{emergent open-endedness}. In addition, we show that static networks with a power-law degree distribution following the Barabási-Albert model satisfy this lower bound and, thus, display expected (local) emergent open-endedness.
	\end{abstract}
	
	\keywords{ emergence; information; complexity; complex networks; Turing machines; complex systems; contagion; spreading; Susceptible-Infected-Susceptible; open-endedness }
	
	\subjclass[2010]{ 68Q30; 68Q05; 05C82; 94A15 }

\section{Introduction}\label{sectionIntro}
	
The general scope of this work encompasses complex systems, complex networks, information theory, and computability theory. In particular, we study the general problem of emergence of complexity or information when complex systems are networked compared with when they are isolated. This issue has a pervasive importance in the literature about complex systems with applications on investigating systemic properties of biological, economical, or social systems. As discussed in~\cite{Abrahao2017}, it may be a subject connected to questions ranging from the problem of symbiosis~\cite{MargulisLynn1981}, cooperation~\cite{Axelrod2006}, and integration \cite{Oizumi2014} to biological \cite{Kim2015}, economic \cite{Schweitzer2009}, and social \cite{Miller2007} networks. 

From an information-theoretic perspective, emergence in complex systems is also studied in~\cite{Prokopenko2014,Prokopenko2009,Hernandez-Orozco2018}. In addition, a (statistical and/or algorithmic) information-theoretic study on complex networks or graphs is also found in~\cite{Zenil2014,Mowshowitz2012,Buhrman1999,Sole2004}. Thus, the present work and the investigation on networked computable systems using algorithmic networks~\cite{Abrahao2016b,Abrahao2017,Abrahao2018b} have shown a way on how to bring these topics into an abstract mathematical theory. Therefore, enabling one to formally define sound and crucial properties and to prove fruitful theorems. Besides complex systems and complex networks, our work is also related to, and inspired by, fundamental concepts in distributed computing, multi-agent systems, and evolutionary game theory \cite{Abrahao2017}. For example, such research may point to applicable future strategies for optimizing communication protocols in artificial networks of randomly generated systems which seek for a better solution (or approximation) to an undecidable or an intractable problem \cite{Zenil2016,Prokopenko2017,Hernandez-Orozco2017}.  
	
Following the issue raised in~\cite{Abrahao2017}, we present in this paper a study on the emergence of irreducible information in networked computable systems that follow an information-sharing (or communication) protocol based on contagion or infection models, as described in \cite{Pastor-Satorras2001a,Pastor-Satorras2002,Pastor-Satorras2001}. As supported by these references, such models of spreading using the approach from complex networks have been shown to be relevant in order to study epidemic and disease spreading, computer virus infections or the spreading of polluting agents. Consequently, it has helped, for instance, on immunization strategies, epidemiology, or pollution control \cite{Pastor-Satorras2001a,Pastor-Satorras2002,Pastor-Satorras2001}. 
	
However, instead of focusing on the pathological properties of such complex networks' contagion dynamics, we show that this dynamics may instead trigger an unlimited potential of optimization through diffusion. That is, diffusing the best solution (or the largest integer when one uses the Busy Beaver game~\cite{Abrahao2017} as a toy model) through the network may trigger an unlimited increase of expected emergent algorithmic information of the nodes as the randomly generated population of computable systems (i.e., nodes) grows. Thus, we aim to mathematically investigate under which conditions this phenomenon is expected to happen. For this purpose, we use the theoretical framework for networked computable systems developed in \cite{Abrahao2017} and a Susceptible-Infected-Susceptible (SIS)~\cite{Bailey1975} epidemiological model, which was also studied in \cite{Pastor-Satorras2001a,Pastor-Satorras2002,Pastor-Satorras2001}.
	
As a toy model, such theoretical approach to studying emergence of complexity or information in networked computable systems may help understand and establish foundational properties on why an information dynamics within a system displaying synergistic or emergent behavior might be advantageous from a computational, evolutionary, or game-theoretical point of view \cite{Abrahao2017}. Additionally, as it is our goal to suggest in the present work, these phenomena may be also related to infection dynamics \cite{Pastor-Satorras2001a,Pastor-Satorras2002,Pastor-Satorras2001}---either from computer viruses or diseases. Nevertheless, taken in an opposed but analogous perspective: Contagion of the fittest (or the best solution for a problem) element in a population instead of contagion of pathological or undesirable elements.

% Previous \section{Model}

In order to tackle this general problem, we narrow our scope and we define a mathematical representation for randomly generated computable systems (i.e., systems that can be fully simulated in a Turing machine) that are networked in a time-varying topology (i.e., a dynamic network). Thus, in our model nodes are randomly generated Turing machines that can send and receive information (i.e., partial outputs) as each node runs its computations until returning a final output. We have defined this networked population of randomly generated Turing machines and a more general mathematical model for networked computable systems which we have called as \emph{algorithmic networks} in \cite{Abrahao2016b,Abrahao2017}. 

The population of our present model plays the \emph{Busy Beaver Imitation Game} (BBIG), in which each node always imitates the fittest neighbor only. Nevertheless, differently from the model in \cite{Abrahao2017}, we present a variation on the information-sharing (or communication) protocol. The major difference in respect to this previous work comes from allowing nodes to become ``cured'' (with rate $ \delta $). Additionally, now nodes also get ``infected'' with rate $ \nu $---which may have a value different from $1$. In \cite{Abrahao2017}, one has that $ \nu = 1 $ always holds. Summarizing, although still playing a BBIG, susceptible nodes follow a rule of imitating the neighbor that had output the largest integer (which corresponds to the fittest individual outcome in the population). However, they follow this rule with probability $ \nu $, and ``infected'' nodes come back---become ``cured''---to the initial stage with probability $ \delta $. Thus, the effective spreading rate $ \lambda = \nu / \delta $ defined in \cite{Pastor-Satorras2001a,Pastor-Satorras2002,Pastor-Satorras2001} assumes a direct interpretation of the rate in which the Imitation-of-the-Fittest Protocol \cite{Abrahao2017} was applied on a node---and this is the reason why we are using the words ``infection'' and ``cure'' between quotation marks. Therefore, the diffusion or ``infection'' scheme of the best output returned by a randomly generated node is ruled by the Susceptible-Infected-Susceptible epidemic model (SIS) in which susceptible nodes have a constant probability $ \nu $ of being ``infected'' by a previously ``infected'' neighbor and ``infected'' nodes have a constant probability $ \delta $ of becoming ``cured''. We also assume, as in \cite{Pastor-Satorras2001a,Pastor-Satorras2002,Pastor-Satorras2001}, that the prevalence of ``infected'' nodes (i.e., the average density of ``infected'' nodes) becomes stationary after sufficient time\footnote{ In particular, it holds if this amount of time is upper bounded by a computable function.}.

%\section{Results}

Our proofs follow mainly from information theory, computability theory, and graph theory applied on a variation on the information-sharing protocol of the model in \cite{Abrahao2017}.  In particular, we have proved results for general dynamic networks and for dynamic networks with a small diameter---$ \mathbf{O}( \log(N) ) $ compared to the network size $N$---in~\cite{Abrahao2017}. Further, these results are also directly extended to static networks \cite{Abrahao2017} with the small-diameter property. Therefore, we have shown that there are topological conditions that trigger a phase transition in which eventually the algorithmic network $ \mathfrak{N}_{BB}  $ begins to produce an unlimited amount of bits of average local emergent algorithmic complexity/information. These conditions come from a positive trade-off between the average diffusion density and the number of cycles (i.e., communication rounds).  We have called \emph{expected emergent open-endedness} (EEOE) \cite{Abrahao2017} this systemic property of there being such phase transition in an algorithmic network when the network/population size increases indefinitely. Thus, the diffusion power of a dynamic (or static) network has proved to be paramount with the purpose of optimizing the average fitness/payoff of an algorithmic network that plays the Busy Beaver Imitation Game in a randomly generated population of Turing machines. Furthermore, this diffusion power may come either from the cover time~\cite{Costa2015a} or from a small diameter~\cite{Bollobas2004,Albert1999} compared to the network size. 

Open-endedness is commonly defined in evolutionary computation and evolutionary biology as the inherent potential of a evolutionary process to trigger an endless increase of complexity or irreducible information \cite{Abrahao2017,Adams2017,Hernandez-Orozco2018}. That means that in the long run eventually will appear an organism that is as complex as one may want. It has been formally proved in~\cite{Chaitin2012,Chaitin2014} and experimentally supported by~\cite{Hernandez-Orozco2017} that cumulative darwinian-like evolution is expected to reach $N$ bits of algorithmic complexity/information after---realistic fast---$ \mathbf{ O }( N^2 ( \log(N) )^2 ) $ successive algorithmic mutations on one organism at the time, whether your organisms are computable, sub-computable, or hyper-computable (see also discussion on open-endedness in \cite{Abrahao2017}). However, we have found that open-endedness may also emerge as an akin---but different---phenomenon to evolutionary open-endedness: Instead of achieving an unbounded quantity of algorithmic complexity over time (e.g., after successive mutations), an unbounded quantity of emergent algorithmic complexity is achieved as the population/network size increases indefinitely. And since it is a property that emerges depending on the amount of parts of a system---only when these nodes are interacting somehow (e.g. exchanging information) ---, this additional irreducible information that appears only when the node is networked becomes by definition an emergent systemic property \cite{DOttaviano2004,Prokopenko2014,Prokopenko2009}.

Our proofs for an algorithmic network following the SIS diffusion model as in \cite{Pastor-Satorras2001a,Pastor-Satorras2002,Pastor-Satorras2001} also stems from the main idea of combining an estimation of a lower bound for the average algorithmic complexity/information of a networked node and an estimation of an upper bound for the expected algorithmic complexity/information of an isolated node. Additionally, as in~\cite{Abrahao2017}, the estimation of the latter still comes from the law of large numbers, Gibb's inequality, and algorithmic information theory applied on the randomly generated population.
However, now the estimation of the former comes from the SIS model with a stationary prevalence (i.e., a stationary average density of ``infected'' nodes). It gives directly this lower bound by the fact that the prevalence $ \rho \sim \exp(- 1 / m \lambda) $ in~\cite{Pastor-Satorras2001a,Pastor-Satorras2002,Pastor-Satorras2001} becomes equal to the average diffusion density $ \tau_{\mathbf{E}} $ in~\cite{Abrahao2017}.  

Then, we show that, for big enough values of $ m $ compared to $ \lambda $, if the time for achieving a stationary prevalence of ``infected'' nodes $  \rho \sim \exp(- 1 / m \lambda)  $ is upper bounded by a value given by a computable function, then the expected emergent algorithmic complexity/information of a node (i.e. the expected local emergent algorithmic complexity/information) goes to infinity as the network/population size $ N $ goes to infinity. In other words, the average local irreducible information that emerges when nodes are networked compared with when they are isolated is expected to always increase for large enough populations of randomly generated Turing machines. 

As a direct consequence of~\cite{Pastor-Satorras2001a,Pastor-Satorras2002,Pastor-Satorras2001}, our results also imply that the same emergent phenomenon occurs if the network is static and has a scale-free degree distribution in the form of a power law $ P(k) \sim \frac{2m^2}{ k^3} $. This topology and construction of the networks are defined by a random process connecting new nodes under a probability distribution given by a preferential attachment as in \cite{Barabasi1999}. That is, new nodes are more likely to be have connections to higher degree previous nodes. Thus, it will be a corollary of our main result that such scale-free static algorithmic networks also display expected local emergent open-endedness.

\section{Model}

In this section, we present the model of algorithmic networks on which we prove lemmas and theorems. The main idea that defines these algorithmic networks $ \mathfrak{N'}_{BB} $  is to formalize a Susceptible-Infected-Susceptible contagion scheme applied on the model previously defined in \cite{Abrahao2017}. Thus, in this Section, we provide only some basic ideas that were previously established. And, therefore, we focus on a description of the model. \footnote{ For extended formal definitions and extensive discussions see Appendix~\ref{appendix}. }

First, remember that \emph{algorithmic networks} $ \mathfrak{N} = (G, \mathfrak{P}, b)$ are defined\footnote{See Definition~\ref{BdefAN}.} in~\cite{Abrahao2017} upon a population of theoretical machines $\mathfrak{P}$, a generalized\footnote{See Definition~\ref{BdefGraph}.} graph $G=(\mathscr{A},\mathscr{E})$, and a function $b$ that makes aspects of $G$ to correspond to properties of $\mathfrak{ P }$, so that a node in $\mathrm{V}(G)$ is mapped one-to-one to an element of $\mathfrak{ P }$. 
The communication channels through which nodes can send or receive information from its neighbors are defined precisely by edges
%/arrows/links 
in $ G $. 

Graphs $G$, which are in fact \emph{MultiAscpect Graphs} (MAGs), are generalized representations for different types of graphs \cite{Wehmuth2016b}. 
Since we aim at a wider range of different network configurations, MAGs allow one to mathematically represent abstract aspects that may appear in complex high-order networks. For example, these may be dynamic (or time-varying) networks, multicolored nodes or edges, multilayer networks, among others. Moreover, this representation facilitates network analysis by showing that their aspects can be isomorphically mapped into a classical directed graph~\cite{Wehmuth2016b}. Thus, the MAG abstraction has proved to be crucial in~\cite{Abrahao2017} to establish connections between the characteristics of the network and the properties of the population composed of theoretical machines.

As in~\cite{Abrahao2017}, we narrow our theoretical approach in order to study general and fruitful toy models. Thus, now we define\footnote{ See Definition~\ref{BdefN'_BB}.} a class of algorithmic networks $  \mathfrak{N'}_{BB}( N, f, t, j ) $---which can also be denoted as $ ( G_t, \mathfrak{ P' }_{BB}(N), b_j ) $---in which their populations $   {\mathfrak{P'}_{BB}}(N)  $ and graphs $ G_t \in \mathbb{G}_{SIS}( f, t ) $ have determined properties. The terms in parenthesis determines the fully characterization of the algorithmic network. $N$ is the network/population size, i.e., the number of nodes, and $j$ is the index of the arbitrarily chosen function $b_j$. Terms $f$ and $t$ are intrinsically defined\footnote{ See Definition~\ref{BdefFamilyGSIS}.} by the family of graphs $ \mathbb{G}_{SIS}( f, t ) $, as we will explain below. Each element of the population corresponds one-to-one to a node/vertex in $ G_t $ and each cycle of the population corresponds one-to-one to a time instant in $ G_t $. These mappings are also defined\footnote{ See Definition~\ref{BdefFunctionbinAN}.} by the function $ b_j $. The main idea is to build a modification on the algorithmic networks $  \mathfrak{N}_{BB}( N, f, t, \tau, j ) $ presented in~\cite{Abrahao2017} in order to make the networked nodes to play the SIS contagion scheme instead of a plain diffusion through imitation of the fittest.

The population $  {\mathfrak{P'}_{BB}}(N) $ is composed of randomly generated Turing machines (or randomly generated self-delimiting programs) that are represented in a self-delimiting universal programming language $ \mathbf{L_U} $.\footnote{ See Section~\ref{appendixsubsectionmachinesandlanguage} and Definitions~\ref{BdefL'_BB} and~\ref{BdefP'_BB}.} The population is also synchronous in respect to halting cycles, that is, in the end of a cycle (or communication round, as in distributed computing) every node returns its partial and final outputs at the same time.\footnote{ See Definition~\ref{BdefSynchronous}.} Nodes that do not halt in any cycle always return as final output the lowest fitness/payoff, that is, the integer value $ 0 $.\footnote{ Thus, as presented in~\cite{Abrahao2017}, these nodes are programs that ultimately run on an oracle Turing machine (or Hypercomputer) $ \mathbf{U'} $---this requirement is also analogous to the one presented in \cite{Chaitin2012,Chaitin2014,Chaitin2018}, which deal with a sole program at the time and not a population of them. The difference in the present work is that the oracle Turing machine also needs access to a randomly generated number in order to deal with the probabilities $\nu$ and $\delta$ in the SIS. See Definition~\ref{Bdefsensitivetooracles}. } Here, a straightforward interpretation is that nodes that eventually do not halt in a cycle are ``killed''\footnote{ See also \cite{Abrahao2017,Chaitin2012} for a complete evolutionary formalization of this property. Note that now there is a population of software, while in\cite{Chaitin2012, Hernandez-Orozco2018,Hernandez-Orozco2017} there is only one single organism at the time. }, so that their final output has the ``worst'' fitness/payoff. 

In addition, the networked population $  {\mathfrak{P'}_{BB}}(N)  $ follows an \emph{Imitation-of-the-Fittest Protocol} by a \emph{Susceptible-Infected-Susceptible} scheme (IFPSIS)\footnote{ See Definition~\ref{BdefIFPSIS}.} on the fittest randomly generated node (i.e., the node that partially outputs the largest integer in cycle $1$)\footnote{ As in~\cite{Abrahao2017,Abrahao2016b}, note that we still use the Busy Beaver function as our fitness function. Therefore, the largest integer directly represents the fittest final output of a node.}. Thus, every node still obeys the Imitation-of-the-Fittest Protocol (IFP) as in~\cite{Abrahao2017}, in which after the first cycle (i.e., after the first round of partial outputs) every node only imitates the neighbor that has partially output the largest integer, repeating this value as its own partial output in the next cycle. However, the difference now is that, if a node has not been ``infected'' by the fittest randomly generated node and one of its neighbors sends the largest integer, then the node obeys the IFP with probability  $ \nu $. Otherwise, the node just continues to be susceptible with probability $ 1- \nu $. Another difference is that, if a node got ``infected'' by the largest integer, then it may be ``cured'', returning to its partial output from cycle $1$, with probability $ \delta $. Otherwise, it remains ``infected'' with probability $ 1 - \delta $.

Graphs $ G_t =( \mathrm{V}, \mathscr{E}, \mathrm{T} ) $ are Time-Varying Graphs (TVGs) as defined\footnote{ See also Definition~\ref{BdefTVG}.} in~\cite{Costa2015a,Abrahao2017,Wehmuth2016b}. These are a special case of MAGs that have only one additional aspect relative to variation over time in respect to the set of nodes/vertices. Therefore, $\mathrm{V}( G_t )$ is the set of nodes, $\mathrm{T}( G_t )$ is the set of time instants, and $\mathscr{E} \subseteq \mathrm{V}( G_t ) \times \mathrm{T}( G_t ) \times \mathrm{V}( G_t ) \times \mathrm{T}( G_t )$ is the set of edges regarding $G_t$.\footnote{ We assume that an undirected graph (or MAG) is a special case of a directed graph (or MAG) in which each edge represents two opposing arrows.} In addition, a static network $ G_s $ is\footnote{ See Definition~\ref{BdefStaticNetwork}.} also  a special case of MAGs, which is obtained from collapsing all the aspects in $ \mathscr{A} $ into just one aspect (i.e., into the set of vertices/nodes $ \mathrm{V} $) where the set of edges of this MAG is invariant under any relation other than the set of vertices/nodes---see also sub-determination in~\cite{Wehmuth2016b}. Thus, a static network $ G_s $ is a classical graph $ G=( \mathrm{V}, \mathrm{E} ) $ with all relations (e.g., in respect to time instants or layers) depending only on its set of edges $ \mathrm{E} $. However, for present purposes, a TVG is sufficient to deal with the SIS model and, hence, there is only one aspect we are collapsing. Therefore, we define a \emph{static network} $ G_s =(\mathrm{V},\mathscr{E}, \mathrm{T}) $ as a TVG in which, for every fixed values of $ t_i, t_j, t_k, t_h \in \mathrm{T} $, 
\[
\{ ( v_i, v_j ) \mid ( v_i, t_ i , v_j , t_j ) \in \mathscr{E} \}
=
\{ ( v_i, v_j ) \mid ( v_i, t_ k , v_j , t_h ) \in \mathscr{E} \}
\]

Inspired by the networks in~\cite{Pastor-Satorras2001a,Pastor-Satorras2002,Pastor-Satorras2001}, let $ \mathbb{G}_{SIS}( f, t ) $ be\footnote{ See Definition~\ref{BdefFamilyGSIS}.} a family of Time-Varying Graphs in which
every $ G_t \in  \mathbb{G}_{SIS}( f, t ) $ achieves stationary  prevalence $ \rho $  in a number of time intervals $ \Delta^*_{t}  $ (after an arbitrary time instant $ t \in \mathrm{T}(G_t) $ from which contagion may have been started in first place) following the SIS scheme. Thus, $ \mathbb{G}_{SIS}( f, t ) $ defines a family of dynamic networks~\cite{Guimaraes2013,Costa2015a} that follows the SIS model. Since we have defined static networks as a special case of dynamic networks, family $ \mathbb{G}_{SIS}( f, t )  $ can be seen as a generalization of the model presented in~\cite{Pastor-Satorras2001a,Pastor-Satorras2002,Pastor-Satorras2001} to dynamic networks. Since function $ f $ and time instant $ t $ are not specified in the condition of the set $ \mathbb{G}_{SIS}( f, t ) $, then this family is independent of the choice of $ ( f, t ) $. However, the reader will see that this is crucial for extending the results in~\cite{Abrahao2017} in order to build the proof of Theorem~\ref{BthmMainCentralTimeSIS} and Corollary~\ref{BcorSIS}.

% (i.e., a stationary average density of ``infected'' nodes $ { \tau_{\mathbf{E}(\rho)}( N,f,t_i ) }|_{ t }^{ t' } $ in a time interval between $t$ and $t'$ in which contagion started at time instant $t_i$.)

In addition, we define\footnote{ See Definition~\ref{BdefFamilyGBA}.}  a family  $ \mathbb{G}_{BA}( f, t ) $ of TVGs in $ \mathbb{G}_{SIS} $ that are static networks following a classical Barabási-Albert model~\cite{Barabasi1999,Barabasi1999a}. They have a scale-free distribution of connectivities as a consequence of an application of preferential attachment at the addition of each new node, which results in a degree distribution in the form of a power law $ P(k) \sim \frac{2m^2}{ k^3} $ as the number of nodes goes to infinity. The finite number of nodes of each graph in this family may vary from $1$ to $\infty$ as each new node is added with $m$ edges linked to previous nodes $i$ under probability distribution
\[
\Pi( k_i ) = \frac{ k_i }{ \sum_j k_j }
\]
\noindent Note that, as shown in~\cite{Pastor-Satorras2001,Pastor-Satorras2001a,Pastor-Satorras2002}, these networks in $ \mathbb{G}_{BA}( f, t ) $ are expected to display a stationary prevalence 
\[
\rho \sim \exp( - \frac{1}{m \lambda}) 
\]
\noindent for a large enough network size and for a small enough spreading rate $ \lambda $. If these two conditions are met, then $ \mathbb{G}_{BA}( f, t ) \subseteq \mathbb{G}_{SIS}( f, t )  $. Therefore, family $ \mathbb{G}_{BA}( f, t ) $ is defined to directly correspond to the networks presented in~\cite{Pastor-Satorras2001,Pastor-Satorras2001a,Pastor-Satorras2002}.

Thus, $ \mathfrak{N'}_{BB}( N, f, t, j ) $ is a synchronous algorithmic network populated by\footnote{ Note that in our model, once the first cycle is started, the population remains fixed. Thus, during the cycles (i.e., when the algorithmic networks is running its computations) no new node is created and no node is ``killed''.  } $N$ randomly generated nodes such that, after the first (or arbitrary $c_0$ cycles) cycle, it starts a diffusion process of the biggest partial output (given at the end of the first cycle) determined by network $ G_t $ that belongs to a family of graphs $ \mathbb{G}_{SIS}( f, t ) $---remember that each network in $ \mathbb{G}_{SIS}( f, t ) $ follows a SIS contagion scheme. At the first time instant each node may receive a network input $w$, which is given to every node in the network, and runs separately (i.e. not networked), returning its respective first partial output. At the last time instant contagion stops and one cycle (or more) is spent in order to make each node to return a final output.

\noindent \\

\section{Expected local emergent open-endedness from a SIS model}\label{sectionResults}

In this section, we present the central theorem and its two corollaries with the purpose of showing that $ \mathfrak{N'}_{BB}( N, f, t, j ) = ( G_t, \mathfrak{P'}_{BB}(N), b_j ) $ is an algorithmic network capable of exhibiting expected (local) emergent open-endedness (see EEOE in \cite{Abrahao2017})\footnote{ See Definition~\ref{BdefEEOESIS}.}. We show that it occurs under certain topological conditions of the graph $G_t$ in which the prevalence (or average density of ``infected'' nodes) becomes stationary within a computably bigger time interval. During these time intervals, the algorithmic network is running under the Imitation-of-the-Fittest Protocol with a SIS contagion scheme. As in \cite{Abrahao2017}, the proof follows from the fact that there is a trade-off between the prevalence and the cycle-bounded conditional halting probability\footnote{ See Definition~\ref{BdefOmegawcSIS}.} $ \Omega( w, x ) $, where $w$ is the initial network input and $x$ is the number of cycles, when estimating the lower bound for the expected (local) emergent algorithmic complexity of a node. Note that in algorithmic networks $ \mathfrak{N'}_{BB}( N, f, t, j ) $ the stationary prevalence $\rho$ becomes\footnote{ See Section~\ref{appendixsubsectionPrevalence} in Appendix.} exactly equal to the stationary average density of ``infected'' nodes $ {  \tau_{\mathbf{E}(\rho)}( N,f,t_i ) }|_{ t }^{ t' } $ in a time interval between $t$ and $t'$ in which contagion started at time instant $t_i$.

Moreover, once these topological properties are met, the concept of central time (denoted as $t_{cen_1}$) to trigger expected emergent open-endedness within the minimum number of cycles becomes well-defined. We define\footnote{ See Definition~\ref{BdefTimecentrality1SIS}} the central time $ t_{cen_1} $ in generating \emph{unlimited} expected emergent algorithmic complexity of a node (i.e., expected \emph{local} emergent algorithmic complexity) in a network $ \mathfrak{N'}_{BB}(N,f,t_{z_0}, j ) $ during $ c( t_{cen_1}(c) + f( N, t_{cen_1}(c)  ) + 1 ) $ cycles, where $c(x)$ is a non-decreasing total computable function and $f$ is an arbitrary function\footnote{ See Definition \ref{BdefN'_BB}.}, as the minimum time instant $t$ in which the expected local emergent algorithmic complexity goes to infinity as $N \to \infty$ after $ c( t + f( N, t  ) + 1 ) $ cycles. Note that the arbitrarily chosen function $f$ may not behave monotonically with $t$ in general. 

The expected local\footnote{ The term ``local'' here refers to the emergent algorithmic complexity of a node. The investigation of the emergent algorithmic complexity of the population as a whole, as also mentioned in~\cite{Abrahao2017}, is out of our current scope. One may call this latter as \emph{global} emergent algorithmic complexity. Note that it may behave differently from the local one. Thus, we leave the investigation of expected \emph{global} emergent open-endedness for future research.} emergent algorithmic complexity is defined\footnote{ See Definition~\ref{BdefEAC}.} in~\cite{Abrahao2017,Abrahao2016b} as the number of extra bits of algorithmic complexity (or information) that emerges from a comparison of the algorithmic complexity of the final output of a networked node with the algorithmic complexity of the final output of the same node in the case it was isolated.

As in \cite{Abrahao2017}, the main idea behind the construction of the proof of Theorem \ref{thmMainCentralTimeSIS} comes from combining an estimation of a lower bound for the average algorithmic complexity of a \emph{networked} node and an estimation of an upper bound for the expected algorithmic complexity of an \emph{isolated} node. While the estimation of the former comes from the very BBIG dynamics in a SIS contagion scheme, the estimation of the latter comes from the law of large numbers, Gibb's inequality, and algorithmic information theory applied on the randomly generated population $ \mathfrak{P'}_{BB}(N) $, which is analogously the same as $ \mathfrak{P}_{BB}(N) $ in \cite{Abrahao2017} for the isolated case. Thus, calculating the former estimation minus the latter gives\footnote{ See Definition~\ref{BdefEEACN'_BB}.} directly a lower bound for the expected local emergent algorithmic complexity of a node.

In this section, we present short proofs of Theorem \ref{BthmMainCentralTimeSIS} and Corollary \ref{BcorSIS}. These proof steps are based on a direct analogy to the proof steps developed in~\cite{Abrahao2017}, so only in Corollaries \ref{BcorSIS} and \ref{BcorBA} our new model would introduce new conceptual substantial differences in the mathematical formal text. For complete and self-contained definitions, lemmas, theorems, and corollaries, see Appendix~\ref{appendix}.

\subsection{Central time to trigger EEOE}
%%%% Central theorem on time centralities 

\begin{thm}[or extended Theorem \ref{thmMainCentralTimeSIS}] \label{BthmMainCentralTimeSIS}
	Let $ w \in \mathbf{L_U} $ be a network input. Let $ 0 < N \in \mathbb{N} $. Let $f$ be an arbitrary function where 
	\[
	\myfunc{f}{ \mathbb{N^*} \times X \subseteq \mathrm{T}(G_t)  } { \mathbb{N} } { ( x, t ) } { y }
	\]
	\noindent Let $ \myfunc{c}{ \mathbb{N} } { \mathfrak{C_{BB}} } { x } { c(x)=y } $, where $ \mathfrak{C_{BB}} $ is the set of cycles of the population $ \mathfrak{ P' }_{BB}(N) $, be a total computable non-decreasing function where 
	\[ 
	c(z + f( N, t_z  ) + 2) \geq c_0 + z + f( N, t_z  ) + 2 
	\]
	and
	\[
	c( z + f( N, t_z  ) + 2 ) - c_0 - 1 \leq t_{ |\mathrm{T}(G_t)|-1 } 
	\] 
	\noindent If there is $ 0 \leq z_0 \leq | \mathrm{T}(G_t) | -1 $ and $ \epsilon, \, \epsilon_2 > 0 $ such that\footnote{ $\lg(x)$ denotes the binary logarithm $ \log_2(x) $.} 
	\[
	z_0 + f( N, t_{z_0}  )  + 2
	= 
	\mathbf{ O }
	\left( \frac
	{ N^{ C } }
	{ \lg(N) } 
	\right)
	\]
	\noindent where
	\[ 
	0
	\leq
	C = 
	\]
	\[
	=
	\frac
	{
		{ \tau_{\mathbf{E}(\rho)}( N,f,t_{z_0}  ) }|_{ t_{z_0} }^{ c( z_0 + f( N, t_{z_0}  ) + 2 ) - c_0 - 1  }
		-
		\Omega(w, c_0 + z_0 + f( N, t_{z_0} ) + 2 )
		-
		\epsilon
	}
	{ \Omega(w,  c_0 + z_0 + f( N, t_{z_0}  ) + 2  ) }
	\leq
	\]
	\[
	\leq
	\frac{1}{ \epsilon_2 }
	\] \\
	\noindent and $ \mathfrak{N'}_{BB}(N,f,t_{z_0} ,j) = ( G_t, \mathfrak{ P' }_{BB}(N), b_j ) $ is well-defined.
	Then, there is $ t_{cen_1}(c) $ such that 
	\[
	 t_{cen_1}(c) \leq t_{ z_0 }
	\]
	\begin{proof}[Short proof]
		This proof follows from the six Lemmas, Theorem 8.1 and Corollary 8.1.1 in~\cite{Abrahao2017}.
		First, replace algorithmic network $ \mathfrak{N}_{BB}(N,f,t_{z_0},\tau,j) = ( G_t, \mathfrak{ P}_{BB}(N), b_j ) $ and its respective characteristics, e.g., population $ \mathfrak{ P}_{BB}(N) $ and family of graphs $ \mathbb{G}( f, t, \tau ) $, with $ \mathfrak{N'}_{BB}(N,f,t_{z_0} ,j) = ( G_t, \mathfrak{ P' }_{BB}(N), b_j ) $, $ \mathfrak{ P'}_{BB}(N) $,  $ \mathbb{G}_{SIS}( f, t ) $, etc in the six Lemmas, Theorem 8.1 and Corollary 8.1.1 in~\cite{Abrahao2017}.
		Note that in the proof of the sixth Lemma the average (singleton) diffusion density $ \tau_{\mathbf{E}(max)} $ is replaced with the prevalence $ \tau_{\mathbf{E}(\rho)} $.
		Also note that in Corollary 8.1.1 in~\cite{Abrahao2017} the last time instant $  t_{ z + f( N, t_z, \tau ) }  $ is replaced with $ c( z_0 + f( N, t_{z_0}  ) + 2 ) - c_0 - 1 $.
		Then, the proof of Theorem~\ref{BthmMainCentralTimeSIS} follows directly analogous to Theorem 8.2 in~\cite{Abrahao2017}.
	\end{proof}

\subsection{EEOE from a stationary prevalence}

\begin{corollaryundersubsection}[or extended Corollary~\ref{corSIS}]\label{BcorSIS}
	Let $ w \in \mathbf{L_U} $ be a network input. Let $ 0 < N \in \mathbb{N} $. Let $ {\mathfrak{N'}_{BB}}(N,f,t_{z_0} ,j) = ( G_t, \mathfrak{ P' }_{BB}(N), b_j ) $ be well-defined. Let $ \myfunc{c}{ \mathbb{N} } { \mathfrak{C_{BB}} } { x } { c(x)=y } $  be a total computable non-decreasing function where
	\[ 
	c(z_0 + f( N, t_{z_0}  ) + 2) \geq c_0 + z_0 + f( N, t_{z_0}  ) + 2 
	\]
	and
	\[
	c( z_0 + f( N, t_{z_0}  ) + 2 ) - c_0 - 1 \leq t_{ |\mathrm{T}(G_t)|-1 } 
	\]
	\noindent If 
	\[ f(N,t_{z_0} ) =\mathbf{O}\big( \lg( N ) \big) \]
	where every $ G_t \in \mathbb{G}_{SIS}( f, t_{z_0} ) $ achieves stationary  prevalence $ \rho $ in a number of time intervals
	\[
	\Delta^*_{t_{z_0}}  \leq c( z_0 + f( N, t_{z_0}  ) + 2 ) - c_0 - 1
	\]
	\noindent after time instant $ t_{z_0} $ and
	\[ 
	\rho \sim \exp( - \frac{1}{m \lambda}) 
	>
	\Omega(w, c_0 + z_0 + f( N, t_{z_0}  ) + 2 ) 
	\]
	\noindent then, there is $ t_{cen_1}(c) $ such that
	\[
	t_{cen_1}(c) \leq t_{ z_0 }
	\]

	\begin{proof}[Short proof]
	The proof follows directly from Theorem~\ref{BthmMainCentralTimeSIS} and the definition of the algorithmic network $ {\mathfrak{N'}_{BB}}(N,f,t_{z_0} ,j) = ( G_t, \mathfrak{ P' }_{BB}(N), b_j ) $ (see Definition~\ref{BdefN'_BB}) by noting that:
	\begin{align*}
	z_0 + f( N, t_{z_0}  )  + 2
	=
	z_0 + \mathbf{ O }\left( \lg(N) 
	\right)
	+ 2
	=
	\mathbf{ O }
	\left( \lg(N) 
	\right)
	\end{align*}
	\noindent and that there is $ \epsilon > 0 $ such that
	\begin{align*}
	& \frac{-1 - \epsilon}{ \epsilon_2 } 
	< 0 
	<
	C = 
	\frac
	{
		\frac{1}{ e^{ \left(\frac{1}{m \lambda} \right) } }
		-
		\Omega(w, c_0 + z_0 + f( N, t_{z_0}  ) + 2 )
		-
		\epsilon
	}
	{ \Omega(w,  c_0 + z_0 + f( N, t_{z_0}  ) + 2  ) }
	=
	\end{align*}
	\[
	=
	\frac
	{
		{ \tau_{\mathbf{E}(\rho)}( N,f,t_{z_0}  ) }|_{ t_{z_0} }^{ c( z_0 + f( N, t_{z_0}  ) + 2 ) - c_0 - 1  }
		-
		\Omega(w, c_0 + z_0 + f( N, t_{z_0}  ) + 2 )
		-
		\epsilon
	}
	{ \Omega(w,  c_0 + z_0 + f( N, t_{z_0}  ) + 2  ) }
	\leq
	\]
	\[
	\leq
	\frac
	{
		1
		-
		\Omega(w, c_0 + z_0 + f( N, t_{z_0}  ) + 2 )
		-
		\epsilon
	}
	{ \Omega(w,  c_0 + z_0 + f( N, t_{z_0}  ) + 2  ) }
	\leq
	\frac{1}{ \epsilon_2 }
	\]
	
	\end{proof}

\end{corollaryundersubsection}
%%%%%%%%%

\end{thm}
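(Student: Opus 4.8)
The plan is to deduce this corollary as a specialization of Theorem~\ref{BthmMainCentralTimeSIS}: the entire task reduces to verifying that the two quantitative hypotheses of that theorem---namely the asymptotic bound $z_0 + f(N,t_{z_0}) + 2 = \mathbf{O}\!\left( N^{C}/\lg(N) \right)$ and the double inequality $0 \leq C \leq 1/\epsilon_2$ for the displayed value of $C$---follow from the cleaner assumptions $f(N,t_{z_0}) = \mathbf{O}(\lg(N))$ and $\rho \sim \exp(-1/(m\lambda)) > \Omega(w, c_0 + z_0 + f(N,t_{z_0}) + 2)$. The remaining hypotheses of Theorem~\ref{BthmMainCentralTimeSIS} (the two constraints on the computable non-decreasing function $c$ and the well-definedness of $\mathfrak{N'}_{BB}$) are assumed verbatim in the corollary, and the bound $\Delta^*_{t_{z_0}} \leq c(z_0+f(N,t_{z_0})+2)-c_0-1$ guarantees that stationary prevalence is actually reached within the window of cycles over which the complexity lower bound is computed, so no extra work is needed there.

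First I would handle the double inequality. The conceptual crux is the model-level identification, recorded in Section~\ref{sectionResults}, that the stationary average diffusion density $\tau_{\mathbf{E}(\rho)}(N,f,t_{z_0})|_{t_{z_0}}^{\,c(z_0+f(N,t_{z_0})+2)-c_0-1}$ coincides exactly with the SIS stationary prevalence $\rho \sim \exp(-1/(m\lambda))$. Substituting this into the numerator of $C$ turns the abstract expression into $\big(\exp(-1/(m\lambda)) - \Omega - \epsilon\big)/\Omega$. Using the hypothesis $\rho > \Omega$, I would choose $\epsilon > 0$ small enough that $\rho - \Omega - \epsilon > 0$; since $\Omega > 0$ this already yields $C > 0$, hence $C \geq 0$ as required. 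For the upper bound I would use that a prevalence is a density, so $\rho \leq 1$, whence the numerator is at most $1 - \Omega - \epsilon$ and $C \leq (1-\Omega-\epsilon)/\Omega < \infty$; choosing $\epsilon_2 > 0$ with $1/\epsilon_2 \geq (1-\Omega-\epsilon)/\Omega$ closes the chain exactly as displayed in the short proof.

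Next I would verify the asymptotic bound. From $f(N,t_{z_0}) = \mathbf{O}(\lg(N))$ we immediately get $z_0 + f(N,t_{z_0}) + 2 = \mathbf{O}(\lg(N))$. It then suffices to observe $\mathbf{O}(\lg(N)) \subseteq \mathbf{O}(N^{C}/\lg(N))$, which holds because $(\lg(N))^2 = \mathbf{o}(N^{C})$ for any fixed $C>0$. This is where the strict positivity of $C$ established in the previous step is indispensable. With both hypotheses of Theorem~\ref{BthmMainCentralTimeSIS} now in force for this particular $z_0$ and the chosen $\epsilon,\epsilon_2$, the theorem delivers a central time $t_{cen_1}(c)$ with $t_{cen_1}(c) \leq t_{z_0}$, which is the desired conclusion.

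The main obstacle I anticipate is not the algebra but securing that $C$ is bounded strictly away from $0$ \emph{uniformly in $N$}, since otherwise the inclusion $\mathbf{O}(\lg(N)) \subseteq \mathbf{O}(N^{C}/\lg(N))$ could fail asymptotically. This is exactly where the strict inequality $\rho > \Omega$ in the hypothesis does the real work, together with the fact that the SIS prevalence $\exp(-1/(m\lambda))$ is an asymptotic constant independent of $N$ (for fixed $m$ and $\lambda$); provided $\Omega$ stays bounded below away from the value of $\rho$, the gap $\rho - \Omega - \epsilon$ remains positive as $N \to \infty$, so a single choice of $\epsilon$ works throughout and $C$ does not degenerate.
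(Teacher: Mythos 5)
Your proposal is correct and follows essentially the same route as the paper's own proof of Corollary~\ref{BcorSIS}: identify the stationary prevalence $\rho \sim \exp(-1/(m\lambda))$ with ${\tau_{\mathbf{E}(\rho)}(N,f,t_{z_0})}|_{t_{z_0}}^{c(z_0+f(N,t_{z_0})+2)-c_0-1}$, use $\rho > \Omega$ to pick $\epsilon>0$ giving $0 < C$, bound $C$ above via $\rho \leq 1$ and $\epsilon_2 \leq \Omega$, and then upgrade $z_0+f(N,t_{z_0})+2=\mathbf{O}(\lg N)$ to $\mathbf{O}(N^{C}/\lg N)$ so that Theorem~\ref{BthmMainCentralTimeSIS} applies. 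Your closing remark on keeping $C$ bounded away from $0$ uniformly in $N$ is a sound extra sanity check that the paper leaves implicit.
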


\subsection{EEOE from a scale-free algorithmic network}

\begin{corollaryundersubsection}\label{BcorBA}
	Let $ w \in \mathbf{L_U} $ be a network input. Let $ 0 < N \in \mathbb{N} $. Let $ {\mathfrak{N'}_{BB}}(N,f,t_{z_0} ,j) = ( G_s, \mathfrak{ P' }_{BB}(N), b_j ) $ be well-defined for every $ G_s \in \mathbb{G}_{BA} ( f, t ) $. Let $ \myfunc{c}{ \mathbb{N} } { \mathfrak{C_{BB}} } { x } { c(x)=y } $  be a total computable non-decreasing function where
	\[ 
	c(z_0 + f( N, t_{z_0}  ) + 2) \geq c_0 + z_0 + f( N, t_{z_0}  ) + 2 
	\]
	and
	\[
	c( z_0 + f( N, t_{z_0}  ) + 2 ) - c_0 - 1 \leq t_{ |\mathrm{T}(G_t)|-1 } 
	\]
	\noindent If 
	\[ f(N,t_{z_0} ) =\mathbf{O}\big( \lg( N ) \big) \]
	where every $ G_t \in \mathbb{G}_{SIS}( f, t_{z_0} ) $ achieves stationary  prevalence $ \rho $ in a number of time intervals
	\[
	\Delta^*_{t_{z_0}}  \leq c( z_0 + f( N, t_{z_0}  ) + 2 ) - c_0 - 1
	\]
	\noindent after time instant $ t_{z_0} $, then for a small enough value of $ \lambda = \frac{ \nu }{ \delta } $, there are $ t_{cen_1}(c) $ and a big enough value of $ m $ such that
	\[
	t_0 =  t_{cen_1}(c) \leq t_{ z_0 }
	\] 

	\begin{proof}
		Since, by supposition, $ {\mathfrak{N'}_{BB}}(N,f,t_{z_0} ,j) = ( G_s, \mathfrak{ P' }_{BB}(N), b_j ) $ and $ G_s \in \mathbb{G}_{BA} ( f, t ) $, then we will have from \cite{Pastor-Satorras2001,Pastor-Satorras2001a,Pastor-Satorras2002} that
		\[ 
		\mathbb{G}_{BA} \subseteq \mathbb{G}_{SIS}  
		\]
		\noindent and
		\[
		\rho \sim \exp( - \frac{1}{m \lambda}) 
		\]
		\noindent for sufficiently large populations and for a small enough value of $ \lambda $.
		Thus, as Theorem~\ref{BthmMainCentralTimeSIS} and Corollary~\ref{BcorSIS} hold where the population size tends to $ \infty $, we will have that condition
		\[ 
		\rho \sim \exp( - \frac{1}{m \lambda}) 
		>
		\Omega(w, c_0 + z_0 + f( N, t_{z_0}  ) + 2 ) 
		\]
		\noindent in Corollary~\ref{BcorSIS} holds for a big enough value of $m$ given a small enough value of $ \lambda $.
		Thus, from Corollary~\ref{BcorSIS}, we will have that there is $ t_{cen_1}(c) $ such that
		\[
		 t_{cen_1}(c) \leq t_{ z_0 }
		\]
		\noindent And, since every $ G_s $ is a static network, then 
		\[
		t_0 =  t_{cen_1}(c) \leq t_{ z_0 }
		\]
	\end{proof}

\end{corollaryundersubsection}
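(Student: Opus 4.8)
The plan is to obtain this statement as a specialization of Corollary~\ref{BcorSIS}, so the bulk of the work reduces to checking that the hypotheses of that corollary hold once the family $\mathbb{G}_{SIS}(f,t)$ is restricted to its Barabási-Albert subfamily $\mathbb{G}_{BA}(f,t)$. Since the present statement already carries over verbatim the assumptions $f(N,t_{z_0}) = \mathbf{O}\big(\lg(N)\big)$, the computable-bound conditions on $c$, and that stationary prevalence is reached within $\Delta^*_{t_{z_0}} \leq c(z_0 + f(N, t_{z_0}) + 2) - c_0 - 1$ time intervals, the only hypothesis of Corollary~\ref{BcorSIS} not literally present here is the prevalence inequality $\rho \sim \exp(-1/(m\lambda)) > \Omega(w, c_0 + z_0 + f(N, t_{z_0}) + 2)$. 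First I would invoke the scale-free SIS analysis of~\cite{Pastor-Satorras2001,Pastor-Satorras2001a,Pastor-Satorras2002}: for a network grown by preferential attachment with degree distribution $P(k) \sim \frac{2m^2}{k^3}$, in the regime of large network size and small spreading rate $\lambda$, the stationary prevalence obeys $\rho \sim \exp(-\frac{1}{m\lambda})$. This simultaneously yields the inclusion $\mathbb{G}_{BA}(f,t) \subseteq \mathbb{G}_{SIS}(f,t)$, which is what places each $G_s$ inside the domain of Theorem~\ref{BthmMainCentralTimeSIS}.

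Next I would secure the prevalence inequality by tuning the two free parameters $m$ and $\lambda$. The quantity $\Omega(w, c_0 + z_0 + f(N, t_{z_0}) + 2)$ is a cycle-bounded conditional halting probability, hence a fixed real strictly below $1$, determined by $w$ and the number of cycles. Observing that $\exp(-\frac{1}{m\lambda}) \to 1$ as $m \to \infty$ for any fixed $\lambda > 0$, I would fix $\lambda$ small enough for the asymptotic prevalence formula of~\cite{Pastor-Satorras2001,Pastor-Satorras2001a,Pastor-Satorras2002} to apply, and then choose $m$ large enough that $\exp(-\frac{1}{m\lambda})$ overshoots $\Omega(w, c_0 + z_0 + f(N, t_{z_0}) + 2)$. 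With this choice every remaining hypothesis of Corollary~\ref{BcorSIS} is met, so that corollary supplies a central time $t_{cen_1}(c)$ with $t_{cen_1}(c) \leq t_{z_0}$.

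Finally I would upgrade the conclusion from the inequality to the equality $t_0 = t_{cen_1}(c)$ using the fact that each $G_s \in \mathbb{G}_{BA}(f,t)$ is a static network: because the edge set of a static network is invariant under the time aspect (by the definition of static networks given in the Model section), the contagion dynamics do not depend on which initial time instant is selected, so the earliest admissible central time collapses onto the initial instant $t_0$. The main obstacle I anticipate is the joint parameter tuning in the second step, since the formula $\rho \sim \exp(-\frac{1}{m\lambda})$ is an asymptotic valid only for small $\lambda$, whereas forcing $\rho$ above $\Omega$ is easiest when $m\lambda$ is large; one must therefore confirm that the regime ``$\lambda$ small enough for the formula, $m$ large enough for the inequality'' is nonempty. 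This is resolved by noting that $m$ and $\lambda$ may be varied independently: fixing $\lambda$ first and then sending $m \to \infty$ keeps $m\lambda$ as large as required while respecting the small-$\lambda$ hypothesis, which is exactly the order of quantification in the statement (``for a small enough value of $\lambda$ \ldots a big enough value of $m$'').
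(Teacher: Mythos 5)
Your proposal is correct and follows essentially the same route as the paper's own proof: reduce to Corollary~\ref{BcorSIS} by importing the inclusion $\mathbb{G}_{BA} \subseteq \mathbb{G}_{SIS}$ and the prevalence formula $\rho \sim \exp(-1/(m\lambda))$ from the Pastor-Satorras et al.\ analysis, secure the inequality $\rho > \Omega(w, c_0 + z_0 + f(N,t_{z_0})+2)$ by fixing $\lambda$ small and taking $m$ large, and then use staticity of $G_s$ to identify $t_{cen_1}(c)$ with $t_0$. Your explicit discussion of the order of quantification between $\lambda$ and $m$, and of why staticity collapses the central time onto $t_0$, only spells out steps the paper leaves implicit.
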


\section{Conclusion}\label{sectionConclusion}

In this article, we have presented a model for networked computable systems in order to investigate the problem of emergence of algorithmic complexity. In particular, we have mathematically investigated conditions that enable the triggering of emergent open-endedness, that is, the conditions that trigger an unlimited increase of emergent complexity as the population size grows toward infinity. We have shown that these conditions are met by dynamic networks (or static networks) that exhibit a stationary prevalence of ``infected'' nodes under a SIS model for contagion of the fittest randomly generated node.  As pointed in~\cite{Abrahao2017}, such research may be crucial for optimizing communication protocols in artificial networks of randomly generated systems which seek for a better solution to a problem.    

Our model for networked computable systems is based on that previously established in~\cite{Abrahao2017}. Nodes are randomly generated Turing machines that can send and receive information (partial outputs) as each node runs its computations until returning a final output and edges (or arrows) are communication channels. Thus, as defined in~\cite{Abrahao2017}, these algorithmic networks are composed of a synchronous population that follows a protocol of imitation of the ``best information'' shared by a neighbor. However, the present article introduced a variation on this model such that this protocol is followed under a Susceptible-Infected-Susceptible model~\cite{Pastor-Satorras2001,Pastor-Satorras2001a,Pastor-Satorras2002}.  

We have shown that, for big enough arbitrary values of $ m \in \mathbb{N} $ compared to the effective spreading rate $ \lambda $, if the time for achieving a stationary prevalence of ``infected'' nodes $  \rho \sim \exp(- 1 / m \lambda)  $ is upper bounded by a computably big enough function of $ \mathbf{O}( \log(N) ) $,\footnote{ For example, as a function of the expected diameter or average shortest path length in scale-free networks or in classical random networks \cite{Bollobas2004,Lewis2009}.} then a lower bound for the expected emergent algorithmic complexity/information of a node goes to infinity as the network/population size $ N $ goes to infinity. That is, the average local irreducible information that emerges when nodes are networked (from a comparison with the isolated case) is expected to always increase for large enough populations of randomly generated Turing machines. Thus, these dynamic (or static) algorithmic networks with stationary prevalence may cross the phase that we call expected local emergent open-endedness \cite{Abrahao2017} for sufficiently large randomly generated populations.

In addition, since our main result only depends on assuming a stationary prevalence in the form of $  \rho \sim \exp(- 1 / m \lambda)  $, we have shown as a corollary from our theorems and from~\cite{Pastor-Satorras2001,Pastor-Satorras2001a,Pastor-Satorras2002} that under the same conditions on $m$ and $\lambda$ the same lower bound holds for static algorithmic networks with a scale-free 
degree distribution in the form of a power law $ P(k) \sim \frac{2m^2}{ k^3} $ \cite{Barabasi1999}.  Therefore, synchronous algorithmic networks with a randomly generated population of computable systems and with a topology and a contagion model sufficiently close to the ones studied in ~\cite{Pastor-Satorras2001,Pastor-Satorras2001a,Pastor-Satorras2002} are also expected to display expected local emergent open-endedness. This suggests that contagion schemes like the SIS model, which have been shown to be important for studying epidemic and disease spreading and computer virus infections, may be also related to the emergence of complexity or irreducible information~\cite{Abrahao2016b,Abrahao2017} in networked systems.

Regarding only the lower bound for the expected emergent algorithmic complexity of a node, our main results show that a version of the halting probability for synchronous algorithmic networks may work like an asymptotic threshold for triggering expected local emergent open-endedness through a stationary prevalence. For example, in the case of the static algorithmic network with a Barabási-Albert scale-free degree distribution \cite{Barabasi1999}, we have shown that arbitrarily small values of the spreading rate can be overcome by big enough values of $m$ (i.e., the number of new edges per node addition) in order to surpass this ``threshold'', triggering the expected local emergent open-endedness. However, since we have only investigated a lower bound, this halting probability may not be actually the threshold for the actual expected emergent algorithmic complexity of a node. Thus, in order to study the existence of such threshold, we suggest for future research the investigation of an upper bound and an asymptotically tight bound for the expected local emergent algorithmic complexity.

\bibliographystyle{amsplain}
\bibliography{2.2.1_paper2018-3_FelipeKlausArtur.bib}

%%%%%%%%%%%%%%%%%%%%%%%%%%%%%%%%%%%%%

\section{Appendix}\label{appendix}

%%%%% Extended Definitions and notations %%%%%

In this section, we present a self-contained appendix with definitions, notes and extended versions of the lemmas and theorems concerning section~\ref{sectionResults}. In order to improve readability and help check the proofs and definitions in comparison to the model and results in~\cite{Abrahao2017}, the definitions and notes that introduce new features or variations are marked with ``SIS''. The ones  without this mark are totally analogous to \cite{Abrahao2017}.

%%%%%%%%%%
\subsection{Definition of MultiAspect Graphs}

\begin{Bdefinition}\label{BdefGraph}
	As defined in \cite{Costa2015a,Wehmuth2016b}, let $G=(\mathscr{A},\mathscr{E})$ be a graph, where $\mathscr{E}$ is the set of edges of the graph and $\mathscr{A}$ is a class of sets, each of which is an aspect.
	
	\begin{Bremarknote}
		Note that $\mathscr{E}$ determines the (dynamic or not) topology of $G$.
	\end{Bremarknote}

	\begin{Bremarknote}\label{noteStaticMAGs}
		Each aspect in $\mathscr{A}$ determines which variant of a graph $G$ will be (and how the set $\mathscr{E}$ will be defined). As in \cite{Costa2015a}, we will deal only with Time-Varying Graphs $G_t$ hereafter, so there will be only two aspects ($|\mathscr{A}|=2$): the set of nodes (or vertices) $\mathrm{V}(G_t)$ and the set of time instants $\mathrm{T}(G_t)$. An element in $ \mathrm{V}(G_t) \times \mathrm{T}(G_t) $ is a \textbf{composite vertex} (or composite node). The family of graphs $G$ that we will use in the present paper will be better explained in Definition \ref{BdefFamilyGSIS}. 
	\end{Bremarknote}
	
\end{Bdefinition}

\subsection{Definitions of general algorithmic networks}
\begin{Bdefinition}\label{BdefAN}
	We define an \textbf{algorithmic network} $ \mathfrak{N} = (G, \mathfrak{P}, b)$ upon a population of theoretical machines $\mathfrak{P}$, a graph $G=(\mathscr{A},\mathscr{E})$ and a function $b$ that makes aspects of $G$ correspond\footnote{ See Definition \ref{BdefFunctionbinAN}.} to properties of $\mathfrak{ P }$, so that a node in $\mathrm{ V(G) }$ corresponds one-to-one to an element of $\mathfrak{ P }$. The graph $G$ was previously defined in \ref{BdefGraph}, and we will define $\mathfrak{ P }$ and $b$ in definitions \ref{BdefPopulation} and \ref{BdefFunctionbinAN} , respectively.

	\begin{Bsubdefinition}\label{BdefPopulation}
		
		Let the \textbf{population} $\mathfrak{P}$ be a subset of $\mathrm{{L}}$ in which repetitions\footnote{ Thus, a population is a set or language which might contain repetitions among its elements. See also Definitions~\ref{BdefHaltSIS} and~\ref{BdefL'_BB}. } are allowed, where $\mathrm{{L}}$ is the language on which the chosen theoretical machine $\mathrm{{U}}$ are running. Each member of this population may receive inputs and return outputs through communication channels.
		
	\end{Bsubdefinition}
	
	\begin{Bsubremarknote}
		The choice of $\mathrm{{L}}$ and $\mathrm{{U}}$ determines the class of nodes/systems. For example, one may allow only time-bounded Turing machines in the population. In the present work, $\mathbf{L_U}$ will be a self-delimiting universal programming language for a extended universal Turing machine $\mathrm{\textbf{U'}}_R$ (see Definitions \ref{BdefU'_R} and \ref{BdefL'_BB}) --- i.e., an oracle Turing machine --- that returns zero whenever a non-halting computation occur.  
	\end{Bsubremarknote}
	
	\begin{Bsubsubdefinition}\label{BdefCycle} %And partial output
		Let $ \mathfrak{C} $ be a set of the maximum number of cycles that any node/program $o$ in the population $ \mathfrak{P} $ can perform in order to return a final output. A \textbf{node cycle} in an algorithmic network $ \mathfrak{N} $ is defined as a node/program\footnote{ Once there is a mapping of the set of nodes into the population of programs, the expression ``node/program'' becomes well-defined within the theory of algorithmic networks.} returning a \textbf{partial output} (which, depending on the language and the theoretical machine the nodes are running on, is equivalent to a node completing a halting computation)\footnote{ In the present article for example --- see Definition \ref{BdefU'_R}.}  and sharing (or not) this partial output  with its neighbors (accordingly to a specific information-sharing protocol or not --- see Definition \ref{BdefP'_BB}). 
	\end{Bsubsubdefinition}
	
	\begin{Bsubsubremarknote} %And final output
		So, if the algorithmic network is asynchronous, a cycle can be seen as an individual communication round that doesn't depend on whether its neighbors are still running or not, while if the network is synchronous a cycle can be seen as the usual communication round in synchronous distributed computing. Also note that one may also refer to a \textbf{network cycle}, which denotes when all nodes of the algorithmic network have returned their \textbf{final outputs} (if it is the case). Thus, a network cycle must not be confused with a node cycle.
	\end{Bsubsubremarknote}
	
	\begin{Bsubdefinition}
		A \textbf{communication channel} between a pair of elements from $\mathfrak{P}$ is defined in $\mathscr{E}$ by an edge (whether directed or not) linking this pair of nodes/programs.
		
		\begin{Bsubremarknote}
			A directed edge (or arrow) determines which node/program sends an output to another node/program that takes this information as input. An undirected edge (or line) may be interpreted as two opposing arrows. 
		\end{Bsubremarknote}
		
	\end{Bsubdefinition}

	\begin{Bsubdefinition}\label{BdefFunctionbinAN}
		Let 
		\[ \myfunc{b}{ Y \subseteq \mathscr{A}(G) } { X \subseteq Pr(\mathfrak{P}) } { \mathbf{\overline{a}}  } { b( \mathbf{\overline{a}} ) = \mathbf{\overline{p_r}} } \]
		be a function that \textbf{maps} a subspace of aspects $Y$ in $\mathscr{A}$ into a subspace of properties $X$ in the set of properties $Pr(\mathfrak{P})$ of the respective population in the graph $G=(\mathscr{A},\mathscr{E})$ such that there is an bijective function $ f_{V\mathfrak{ P }} $ such that, for every $ (v,\mathbf{ \overline{x} }) \in Y \subseteq \mathscr{A}(G) $ where $ b( v,\mathbf{ \overline{x} } ) = ( o, b_{ | Y | - 1 }( \mathbf{ \overline{x} } ) ) \in X$, $v$ is a vertex, and $o$ is an element of $\mathfrak{ P }$
		\[ \myfunc{f_{V\mathfrak{ P }}}{ \mathrm{ V(G) } } { \mathfrak{P} } { v } { f_{V\mathfrak{ P }}(v) = o } \]

	\end{Bsubdefinition}

\end{Bdefinition}

\begin{Bdefinition}
	We say an element $ o_i \in \mathfrak{P} $ is \textbf{networked} \textit{iff} there is $ \mathfrak{N} = (G, \mathfrak{P}, b) $, where $G$ has a non-empty set of edges, such that $o_i$ is running on it.
	
	\begin{Bsubdefinition}
		We say $o_i$ is \textbf{isolated} otherwise. That is, it is only functioning as an element of $ \mathfrak{P} $ and not $ \mathfrak{N} = (G, \mathfrak{P}, b) $.
	\end{Bsubdefinition}
\end{Bdefinition}

\begin{Bdefinition}\label{BdefNetworkinput}
	We say that an input $ w \in \mathrm{L} $ is a \textbf{network input} \textit{iff} it is the only external source of information every node/program receives and it is given to every node/program before the algorithmic network begins any computation. 
	
	\begin{Bremarknote}
		Note that letter $w$ may also appear across the text as denoting an arbitrary element of a language. It will be specified in the assumptions before $w$ appears or in the statement of the definition, lemma, theorem or corollary.
	\end{Bremarknote}
\end{Bdefinition}

\noindent \\

\subsection{Definitions on networks and graphs}

\begin{Bdefinition}\label{BdefTVG} 
	As defined in \cite{Costa2015a}, let $G_t=(\mathrm{V},\mathscr{E},\mathrm{T})$ be a \textbf{Time-Varying Graph} (TVG), where $\mathrm{V}$ is the set of nodes, $\mathrm{T}$ is the set of time instants, and $\mathscr{E} \subseteq \mathrm{V} \times \mathrm{T} \times \mathrm{V} \times \mathrm{T}$ is the set of edges.
	
	\begin{Bsubnotation}
		Let $\mathrm{V}(G_t)$ denote the set of nodes (or vertices) of $G_t$.
	\end{Bsubnotation}
	
	\begin{Bsubnotation}
		Let $|\mathrm{V}(G_t)|$ be the size of the set of nodes in $G_t$.
	\end{Bsubnotation}
	
	\begin{Bsubnotation}
		Let $\mathrm{T}(G_t)$ denote the set of time instants in $G_t=(\mathrm{V},\mathscr{E},\mathrm{T})$. 
	\end{Bsubnotation}
	
	\begin{Bsubnotation}
		Let $ G_t( t ) $ denote the graph $G_t$ at time instant $ t \in \mathrm{T}(G_t) $
	\end{Bsubnotation}

	\begin{Bsubdefinition}
		We define the set of time instants of the graph $G_t$ as $ \mathrm{T}(G_t)=\{t_0, t_1, \dotsc, t_{|\mathrm{T}(G_t)|-1} \} $.
	\end{Bsubdefinition}
	
	\begin{Bsubremarknote}
		For the sake of simplifying our notations in the theorems below one can take a natural ordering for $ \mathrm{T}(G_t) $ such that
		\[ 
		\forall i \in \mathbb{N} \; \left( \, 0 \leq i \leq | \mathrm{T}(G_t) | - 1 \implies t_i = i + 1 \, \right) 
		\].
	\end{Bsubremarknote}
	
\end{Bdefinition}

\begin{Bdefinition}
	Let $d_t(G_t, t_i, u, \tau)$ be the minimum number of time instants (steps, time intervals \cite{Pan2011a} or, in our case, cycles) for a diffusion starting on node $u$ at time instant $t_i$ to reach a fraction $\tau$ of nodes in the graph $G_t$.
	
	\begin{Bsubnotation}\label{BdefTemporaldiameter}
		Let $D(G_t,t)$ denote the \emph{temporal diffusion diameter} of the graph $G_t$ taking time instant $t$ as the starting time instant of the diffusion process. That is,
		\[
		D(G_t,t) =
		\begin{cases}
		max \{ x \mid \, x=d_t(G_t,t,u,1) \, \land \, u \in \mathrm{V}(G_t) \}; \\
		\infty \quad \quad if \, \,  \exists u \in \mathrm{V}(G_t) \forall x \in \mathbb{N}  \big( x \neq d_t(G_t,t,u,1) \big);
		\end{cases}
		\]
	\end{Bsubnotation}
\end{Bdefinition}

\begin{Bdefinition}[SIS]\label{BdefStaticNetwork}
	Let $G_s=(\mathrm{V},\mathscr{E}, \mathrm{T})$ be a \textbf{static network}, where $G_s$ is a TVG in which, for every fixed values of $ t_i, t_j, t_k, t_h \in \mathrm{T} $, 
	\[
	\{ ( v_i, v_j ) \mid ( v_i, t_ i , v_j , t_j ) \in \mathscr{E} \}
	=
	\{ ( v_i, v_j ) \mid ( v_i, t_ k , v_j , t_h ) \in \mathscr{E} \}
	\]
	
	\begin{Bremarknote}[SIS]
		A general way to define a classical static graph is from collapsing all the aspects in $ \mathscr{A} $ into just one aspect (i.e., into the set of vertices/nodes $ \mathrm{V} $) where the set of edges of this MAG is invariant under any relation other than the set of vertices/nodes --- see also sub-determination in~\cite{Wehmuth2016b}. Thus, a static network is a classical static graph $ G=( \mathrm{V}, \mathrm{E} ) $ for all relations depending only on its set of edges $ E $. However, for present purposes a TVG (which is a subclass of MAGs) is sufficient to deal with the SIS model and, hence, there is only one aspect we are collapsing.
	\end{Bremarknote}

	\begin{Bremarknote}[SIS]
		This definition~\ref{BdefStaticNetwork} is quite general, so that it comprises even the case in which there could be arrows pointing backwards in time. However, for the present purposes in the static networks in~\cite{Pastor-Satorras2001,Pastor-Satorras2001a,Pastor-Satorras2002} a static network can be more easily defined as a snapshot dynamic network \cite{Rossetti2018} in which the topology is exactly the same in each temporal snapshot of the network. 
	\end{Bremarknote}
	
\end{Bdefinition}

%%%%%%%

%%%%%%%

% New Definition Family G

\begin{Bdefinition}[SIS]\label{BdefFamilyGSIS}
	Inspired by the networks defined in~\cite{Pastor-Satorras2001a,Pastor-Satorras2002,Pastor-Satorras2001}, let 
	\begin{align*}
	\mathbb{G}_{SIS}( f, t ) = \Bigl\{ G_t \Big|  & \, i = |\mathrm{V}(G_t)| \,\land \,  \forall i \in \mathbb{N^*} \exists!G_t \in \mathbb{G}_{SIS}( f, t )  (\, |\mathrm{V}(G_t)|=i \,)  \Bigr\}
	\end{align*} 
	\noindent where 
	\[
	\myfunc{f}{ \mathbb{N^*} \times X \subseteq \mathrm{T}(G_t)  } { \mathbb{N} } { ( x, t ) } { y }
	\]
	\noindent be a family of TVGs that depends on the choice of function $f$, the time instant $ t $ and on the fact that every $ G_t \in  \mathbb{G}_{SIS} $ achieves stationary  prevalence $ \rho $  (i.e., the average density of ``infected'' nodes  $ \tau_{\mathbf{E}(\rho)} $ in Definition~\ref{BdefPrevalence} ) in a number of time intervals $ \Delta^*_t  $ (after an arbitrary time instant $ t \in \mathrm{T}(G_t) $ ) following a Susceptible-Infected-Susceptible (SIS) contagion scheme.
	
	\begin{Bremarknote}[SIS]
		Thus, $ \mathbb{G}_{SIS}( f, t ) $ defines a family of dynamic networks \cite{Pan2011a,Guimaraes2013,Costa2015a,Rossetti2018} that follows the SIS model. 
	\end{Bremarknote}
	
	\begin{Bremarknote}
		Since function $ f $ and time instant $ t $ are not specified in the condition of the set $ \mathbb{G}_{SIS}( f, t ) $, then this family is independent of the choice of $ ( f, t ) $. However, the reader will see that this is crucial for extending the results in~\cite{Abrahao2017} in order to build the proof of Theorem~\ref{thmMainCentralTimeSIS} and Corollary~\ref{corSIS}.
	\end{Bremarknote}

\end{Bdefinition}

%%%%% Family of static networks BA

\begin{Bdefinition}[SIS] \label{BdefFamilyGBA}
	We define a family of static networks analogous to the ones presented in~\cite{Pastor-Satorras2001,Pastor-Satorras2001a,Pastor-Satorras2002} as
	\begin{align*}
		\mathbb{G}_{BA}( f, t ) = \Bigl\{ G_s \Big|  & \, i = |\mathrm{ V(G_s) }| \,\land \,  \exists!G_s \in \mathbb{G}_{BA}( f, t )  (\, |\mathrm{ V(G_s) }|=i \,)  \Bigr\}
	\end{align*} 
	\noindent where 
	\[
	\myfunc{f}{ \mathbb{N^*} \times X \subseteq \mathrm{T}(G_t)  } { \mathbb{N} } { ( x, t ) } { y }
	\]
	\noindent Thus, $ \mathbb{G}_{BA}( f, t )  $ is a family of TVGs that are static networks following a classical Barabási-Albert model~\cite{Barabasi1999,Barabasi1999a} such that every $ G_s \in  \mathbb{G}_{BA} $ achieves stationary  prevalence $ \rho $  (i.e., the average density of ``infected'' nodes  $ \tau_{\mathbf{E}(\rho)} $ in Definition~\ref{BdefPrevalence} ) in a number of time intervals $ \Delta^*_t  $ (after an arbitrary time instant $ t \in \mathrm{T}(G_t) $ ) following a Susceptible-Infected-Susceptible (SIS) contagion scheme. These networks have a scale-free distribution of connectivities as a consequence of an application of preferential attachment at the addition of each new node, which results in a degree distribution in the form of a power law $ P(k) \sim \frac{2m^2}{ k^3} $ as the number of nodes goes to infinity. The number of nodes of each graph in this family may vary from $1$ to $\infty$ as each new node is added with $m$ edges linked to previous nodes $i$ under probability distribution
	\[
	\Pi( k_i ) = \frac{ k_i }{ \sum_j k_j }
	\]
	
	\begin{Bremarknote}[SIS]
		Note that, as shown in~\cite{Pastor-Satorras2001,Pastor-Satorras2001a,Pastor-Satorras2002}, these networks in $ \mathbb{G}_{BA}( f, t ) $ are expected to display a stationary prevalence 
		\[
		\rho \sim \exp( - \frac{1}{m \lambda}) 
		\]
		\noindent for a large enough network size and for a small enough spreading rate $ \lambda $. Thus, if these two conditions are met, then $ \mathbb{G}_{BA}( f, t )  \subseteq \mathbb{G}_{SIS}( f, t )  $.
	\end{Bremarknote}

\end{Bdefinition}

\subsection{Definitions on Turing machines and languages}\label{appendixsubsectionmachinesandlanguage}
%%%%

\begin{Bnotation}
	Let $\lg(x)$ denote the binary logarithm $\log_{2}(x)$.
\end{Bnotation}

\begin{Bnotation}\label{BdefFunctionU}
	Let $ \mathbf{U}(x) $ denote the output of a universal Turing machine $\mathbf{U}$ when $x$ is given as input in its tape. So, $ \mathbf{U}(x) $ denote a partial recursive function $\varphi_{\mathbf{U}}(x)$ that is a universal partial function \cite{Rogers1987,Li1997}. If $x$ is a non-halting program on $\mathbf{U}$, then this function $\mathbf{U}(x)$ is undefined for $x$.
\end{Bnotation}

\begin{Bnotation}\label{BdefConcatenation}
	Let $ \mathrm{\textbf{L}}_{\mathrm{\textbf{U}}} $ be a binary self-delimiting universal programming language for a universal Turing machine $\mathrm{\textbf{U}}$ such that there is a concatenation of strings $w_1, \dots , w_k$ in the language $ \mathrm{\textbf{L}}_{\mathrm{\textbf{U}}} $, which preserves\footnote{ For example, by adding a prefix to the entire concatenated string $ w_1 w_2  \dots  w_k $ that encodes the number of concatenations. Note that each string was already self-delimiting. See also \cite{Abrahao2016}. } the self-delimiting (prefix-free) property of the resulting string, denoted by 
	\[
	w_1 \circ  \dots  \circ w_k \in \mathrm{\textbf{L}}_{\mathrm{\textbf{U}}}
	\]
\end{Bnotation}

\begin{Bnotation}\label{BdefAlgComp}
	Let $ \mathrm{\textbf{L}}_{\mathrm{\textbf{U}}} $ be a binary self-delimiting universal programming language for a universal Turing machine $\mathrm{\textbf{U}}$.
	The (prefix) \textbf{algorithmic complexity} (Kolmogorov complexity, program-size complexity or Solomonoff-Komogorov-Chaitin complexity) of a string $ w \in \mathrm{\textbf{L}}_\mathrm{\textbf{U}} $, denoted by $A(w)$, is the size of the smallest program $p^* \in \mathrm{\textbf{L}}_\mathrm{\textbf{U}}$ such that $ \mathrm{\textbf{U}}(p^*) = w $.

	\begin{BnoteunderBnotation}[SIS]
		The reader may also find in the literature the prefix algorithmic complexity denoted by $H(w)$ or --- more frequently used --- $K(w)$. As introduced in~\cite{Abrahao2017}, this work might have several intersections with other fields. Thus, we choose a self-explaining approach on notation in order to avoid ambiguity and notation conflicts in future work. We denote the (prefix) algorithmic complexity/information\footnote{ That is, the algorithmic information contained in a object about itself \cite{Li1997}.} by $ I_A( w ) $. However, for the sake of simplifying our notation, we chose to denote it only by $ A( w ) $ in \cite{Abrahao2017} and in the present article.  
	\end{BnoteunderBnotation} 
	
\end{Bnotation}

% Revised 1

	%%%%%%%% New oracle machine %%%%%

\begin{Bdefinition}[SIS]\label{BdefU'_R}
	Given a binary self-delimiting universal programming language $ \mathrm{\textbf{L}}_{\mathrm{\textbf{U}}} $ for a universal Turing machine $\mathrm{\textbf{U}}$, where there is a constant $ \epsilon \in \mathbb{R} $, with $ 0 < \epsilon \leq 1 $, and a constant $ 0 \leq C_{L} \in \mathbb{N} $ such that, for every $ N \in \mathbb{N} $, 
	\[
	A(N) \leq \lg(N) + ( 1+\epsilon )\lg(\lg(N)) + C_{L}
	\] 
	\noindent we then define an oracle\footnote{ Or any hypercomputer with a respective Turing degree higher than or equal to $1$.} Turing machine $\mathrm{\textbf{U'}}_R$ such that, for arbitrarily chosen $ S, I \in  \mathrm{\textbf{L}}_{\mathrm{\textbf{U}}} $,
	\begin{enumerate} 
		\item for every $  p, w \in \mathbf{L_U} $ 
		\[ {\mathrm{\textbf{U'}}}(w) =
	\begin{cases}
	{\mathrm{\textbf{U}}}(w) + 1 & \quad \text{  if  } \mathrm{\textbf{U}} \text{ halts on } w \\
	0 & \quad \text{  if  } \mathrm{\textbf{U}} \text{ does not halt on } w \\
	\end{cases}
	\]
	
	\item $ \exists P_{prot'} \in \mathbf{L_U} $ such that, for every $w, p \in \mathrm{\textbf{L}}_{\mathrm{\textbf{U}}}$, 
	\[ \mathbf{U'}(P_{prot'} \circ p \circ z \circ w) =
	\begin{cases}
	\mathbf{U'}(P_{prot_\nu} \circ p \circ w)	& \,  \text{ if } x=``1'' \in \mathbf{L_U} \text{ and } y=S \in \mathbf{L_U}	\text{ and } z=x \circ y \circ c \\
	\mathbf{U'}(P_{prot_\nu} \circ p \circ w)	& \, \text{ if } x=``0'' \in \mathbf{L_U} \text{ and } y=S \in \mathbf{L_U}		\text{ and } z=x \circ y \circ c \\
	\mathbf{U'}(P_{prot_\delta} \circ p \circ w) & \, \text{ if } x=``1''\text{ and } y=I \in \mathbf{L_U}	 	\text{ and } z=x \circ y \circ c \\
	\mathbf{U'}(P_{prot_\delta} \circ p \circ w) & \, \text{ if } x=``0''\text{ and } y=I \in \mathbf{L_U}		\text{ and } z=x \circ y \circ c  \\
	\mathbf{U'}(P_{prot_1} \circ p \circ w) & \, \text{ if } c=1  \\
	\mathbf{U'}(P_{prot_f} \circ p \circ w) & \, \text{ if } c=max \{ c \in \mathfrak{C}_{BB} \}  \\
	\end{cases}
	\]
	
	\item $ \exists P_{prot'_R} \in \mathbf{L_U} $ such that, for every $ w, p \in \mathbf{L_U} $ , $ \mathbf{U'}_R ( P_{prot'_R} \circ w ) $ has access to a randomly generated number $ x \in \{ 1, 0 \} $ in way such that
	\[ \mathbf{U'}_R ( P_{prot'_R} \circ p \circ z \circ w ) =
	\begin{cases}
		 \mathbf{U'}(P_{prot'} \circ p \circ 1 \circ y \circ c \circ w) & \quad \text{ with probability } \nu  \text{ if } y=S \in \mathbf{L_U} \\
		 \mathbf{U'}(P_{prot'} \circ p \circ 0 \circ y \circ c \circ w) & \quad \text{ with probability } 1- \nu \text{ if } y=S \in \mathbf{L_U} \\
		 \mathbf{U'}(P_{prot'} \circ p \circ 1 \circ y \circ c \circ w) & \quad \text{ with probability } \delta \text{ if } y=I \in \mathbf{L_U} \\
		 \mathbf{U'}(P_{prot'} \circ p \circ 0 \circ y \circ c \circ w) & \quad \text{ with probability } 1- \delta \text{ if } y=I \in \mathbf{L_U} \\
		 \mathbf{U'}(P_{prot'} \circ p \circ z \circ w) & \quad \text{ otherwise } \\
	\end{cases}
	\]
	
	\item For every $ P,w \in \mathbf{L_U} $, if $P$ accesses a randomly generated number, then 
	\[
	\mathbf{U'}_R ( P \circ w )  = \mathbf{U'}_R ( P_{prot'_R} \circ w ) 
	\]
	\end{enumerate}
	\begin{Bremarknote}[SIS]
		The oracle Turing machine is basically (except for a trivial bijection and a specific and determined way to access a randomly generated number) the same as the chosen universal Turing machine. The oracle is only triggered to know whether the program halts or not in first place (see \cite{Chaitin2012,Chaitin2014,Chaitin2018}) or to call a function or procedure that reads the output from a external probabilistic source\footnote{ A non published paper by Jef Raskin containing discussions on such machines was hosted at \href{http://humane.sourceforge.net/unpublished/turing_machine.html}{ Computers are not Turing Machines}. }). In fact, most programming languages have a function random, the difference here is that it is called depending on the Susceptible-Infected-Susceptible procedure and the source is independent and identically distributed with the respective probability distributions. Also note that $ \mathbf{U'}( w ) $ and $ \mathbf{U'}_R( w ) $ are \textbf{total} functions, and not a partial function as $ \mathbf{U}(w) $ --- see Definition \ref{BdefFunctionU}.
	\end{Bremarknote}

	\begin{Bremarknote}
		Note that from algorithmic information theory (AIT) we know that the algorithmic complexity (see Definition \ref{BdefAlgComp} ) $ A(\textbf{U'}(w)) $ only differs from $ A(\textbf{U}(w)) $ by a constant, if $\textbf{U}$ halts on $w$. This constant is always limited by the size of the smallest program that adds (or subtracts --- whichever is larger) $1$ to any other halting computation. Therefore, these machines belong to an algorithmic complexity equivalence class (the modulus of the subtraction upper bounded by a constant --- see also the invariance theorem in~\cite{Li1997}) everytime $w$ is a halting program. This is the reason why the algorithmic complexity of the final outputs of node/programs in $ \mathfrak{N'}_{BB} (N, f, t, j) $ only differ by a constant, should nodes be halting programs. Also note that, since a non-halting program gives an output always equal to zero when running on machine $ \textbf{U'} $,  the algorithmic complexity of the output of $w$ on $ \textbf{U'} $ is always equal to a constant (see Lemma~\ref{lemmaComplexityonBarHaltSIS}). Then, these make Lemma~\ref{lemmaComplexityp_iSIS} and the Definition~\ref{BdefEAC} sound.
	\end{Bremarknote}
	
\end{Bdefinition}
	
\subsection{Definitions on the populations of algorithmic networks}
	%%% Definitions on the populations
	
	\begin{Bdefinition}\label{BdefRandompopulation}
		We say a population $ \mathfrak{ P } \subseteq \mathbf{L_U} $ is \textbf{randomly generated} \textit{iff} $ \mathfrak{ P } $ is a sample generated by $ | \mathfrak{ P } | $  i.i.d. trials accordingly to a probability distribution where, for a constant $C$, 
		
		\[
		p \in \mathfrak{ P }
		\]
		\[ iff \]
		\[
		\mathbf{P} \left[ \, p \, \right] = C \frac{1}{2^{ | p | }}
		\]
		
		%%%%

		\begin{Bremarknote}
			The constant $C$ is important for us because it allows us to characterise population $ \mathfrak{P'}_{BB}(N) $ in Definition \ref{BdefP'_BB} as randomly generated, taking into account that there are global information-sharing protocols that could not be previously determined. However, our forthcoming proofs stems from the idea that only the suffixes were randomly generated and the global information-sharing protocol was previously given (i.e., determined) as an assumption in our Lemmas, Theorems and Corollaries. Hence, this constant $C$ is not taken into account in the present work, so that one assume $ C=1 $.
		\end{Bremarknote}
	
		\begin{Bremarknote}
			The reader is invited to note that this constant $C$ would only affect Lemma~\ref{lemmaSLLNandAITSIS} as a subtractive constant and the other Lemmas, Theorems and Corollaries as a multiplicative constant on where $ \Omega( w, c(x) ) $ already appears as multiplicative constant. Therefore, since we are investigating asymptotic behaviors as the population size grows toward infinity, our final results hold in the case of considering the probability of generating the global information-sharing protocol too. 
		\end{Bremarknote}
		
	\end{Bdefinition}

	\begin{Bdefinition}\label{Bdefsensitivetooracles}
		We say a population $\mathfrak{P}$ is \textbf{sensitive to oracles} \textit{iff} whenever an oracle is triggered during any cycle in order to return a partial output the final output of the respective node/program is also $0$.\footnote{ Since we are assuming $0$ as the assigned non-halting output for $\mathbf{U'}$ in relation to the machine $\textbf{U}$.} Or more formally: 
		
		Let $ p_{net_{ \mathbf{U} } } $ be a program such that $ \mathbf{U}( p_{net_{ \mathbf{U} } } \circ o_i \circ c ) $ computes on machine $ \mathbf{U} $ cycle-by-cycle what a node/program $o_i \in \mathfrak{P} $ does on machine $ \mathbf{U'}_R $ until cycle $c$ when networked. Let $ p_{iso_{ \mathbf{U} } } $ be a program such that $ \mathbf{U}( p_{iso_{ \mathbf{U} } } \circ o_i \circ c ) $ computes on machine $ \mathbf{U} $ cycle-by-cycle what a node/program $o_i \in \mathfrak{P} $ does on machine $ \mathbf{U'}_R $ until cycle $c$ when isolated. Let $p_{o_i,c}$ be the \textbf{partial output} sent by node/program $o_i$ at the end of cycle $c$. Also, $p_{o_i,max \{ c \mid c \in \mathfrak{C} \}}$ denotes the final output of the node/program $o_i$. Then, for every $ o_i \in \mathfrak{P} $, if there is $c$ such that $ \nexists p_{net_{ \mathbf{U} } } $ or $ \nexists p_{iso_{ \mathbf{U} } } $, then the respective networked or isolated final output
		$ p_{o_i,max \{ c \mid c \in \mathfrak{C} \}} = 0 $.  
		
		%%%%%
	
		\begin{Bremarknote}
			So the property of being sensitive to oracles may be also understood as the extension of machine $ \mathbf{U'}_R $ defined in~\ref{BdefU'_R} for returning zero for non-halting network cycles (see Definition~\ref{BdefCycle}). This way, one may also define another oracle machine $ \mathbf{U''}_R $ that runs the entire algorithmic network sensitive to oracles.
		\end{Bremarknote}
	\end{Bdefinition}
	
	\begin{Bdefinition}\label{BdefSynchronous}
		In a \textbf{synchronous} population of an algorithmic network each node/program is \textbf{only} allowed to receive inputs from its incoming neighbors and to send information to its outgoing neighbors at the end of each cycle (or communication round\footnote{ As in distributed computing.}), except for the last cycle. Each node cycle always begins and ends at the same time even if the computation time of the nodes/programs is arbitrarily different. Or more formally: 
		
		Let $G_t$ be a Time-Varying Graph such that each time interval corresponds to nodes sending information to their neighbors at the same time.
		Then, there is a partial function $f$ such that for every $ c \in \mathfrak{C} $ there is a constant $ t \in \mathrm{T}(G_t) $ such that for every $ o_i \in \mathfrak{P} $ where $ c(o_i) = c $
		\[ \myfunc{f}{ \mathfrak{C}(o_i) } { \mathrm{T}(G_t) } { c } { f(c)=t }  \]
		\noindent where $\mathfrak{C}(o_i)$ is the set of node cycles of node $o_i$.
		
		\begin{Bremarknote}
			If the population is \textbf{isolated}, then only each partial output counts in the respective individual loop.
		\end{Bremarknote}
		
		\begin{Bsubdefinition}
			In the \textbf{last cycle}, every node only returns its \textbf{final output}. 
		\end{Bsubdefinition}

	\end{Bdefinition}

%%%%% Old IFP

%%%%%%%% New SIS IFP

\begin{Bdefinition}[SIS]\label{BdefIFPSIS}
	We say a networked population $ \mathfrak{P} $ follows an \textbf{Imitation-of-the-Fittest Protocol} by a \textbf{Susceptible-Infected-Susceptible} (IFPSIS) scheme on the fittest randomly generated node (i.e., the node that partially outputs the largest integer in cycle $1$) \textit{iff} each susceptible node/program obeys protocols defined in \ref{BdefBBcontagionSIS} , \ref{BdefMaxCoopSIS} and \ref{BdefITFOprotSIS} with probability $ \nu $ and each infected node/program to its initial stage at cycle $1$ with probability $ \delta $. Or, more formally, one can describe the algorithm (which runs on $ \mathbf{U'}_R $) for the protocol as: 
	
	Let $\mathbf{X}_{neighbors}(o_j,c)$ be the set of incoming neighbors of node/program $o_j$ that have sent partial outputs to it at the end of the cycle $c$.
	Let $ \{ p_{o_i,c} \mid o_i \in \mathbf{X}_{neighbors}(o_j, c) \land i \in \mathbb{N} \land c \in \mathfrak{C} \} $ be the set of partial outputs relative to $\mathbf{X}_{neighbors}(o_j,c)$.
	Let $ w $ be the network input as defined in \ref{BdefNetworkinput}. Let $\circ$ denote a recursively determined concatenation of finite strings.
	Then, for every $ o_j,o_i \in \mathfrak{ P } $ and $ c,c-1 \in \mathfrak{C} $,
	\begin{enumerate}
		\item if $ max \{ c \mid c \in \mathfrak{C} \} = 1 $, then
		\[
		p_{o_j,c} = \mathbf{U'}_R( o_j \circ w )
		\] \label{Bprotcmax1}
		
		\item if $c=1$ and $ c \neq max \{ c \mid c \in \mathfrak{C} \} $ and $ \mathbf{U'}_R( o_j \circ w ) \neq max \{ \mathbf{U'}_R ( o_i \circ w ) \vert o_i \in \mathfrak{P} \} $, then
		\[
		p_{o_j,c} = S \circ c \circ w \circ o_j \circ \mathbf{U'}_R( o_j \circ w )
		\] \label{Bprotc=1neqmax}
		
		\item if $c=1$ and $ c \neq max \{ c \mid c \in \mathfrak{C} \} $ and $ \mathbf{U'}_R( o_j \circ w ) = max \{ \mathbf{U'}_R ( o_i \circ w ) \vert o_i \in \mathfrak{P} \} $, then
		\[
		p_{o_j,c} = I \circ c \circ w \circ o_j \circ \mathbf{U'}_R( o_j \circ w )
		\] \label{Bprotc=1max}
		
		\item if $ c \neq 1 $ and $ c \neq max \{ c \mid c \in \mathfrak{C} \} $ and $ p_{o_j,c-1} = S \circ c-1 \circ w \circ o_j \circ x $ and 
		\begin{align*} & max \left\{ 
		x \, \middle\vert \begin{array}{l}
		p_{o_j,c-1}= y \circ c-1 \circ w \circ o_j \circ x \, \lor  \\
		\lor \,  y \circ c-1 \circ w \circ o_i \circ x \in  \left\{ p_{o_i,c-1} \middle\vert \begin{array}{l}
		o_i \in \mathbf{X}_{neighbors}(o_j, c-1) \land \\
		\land \, i \in \mathbb{N} \land c-1 \in \mathfrak{C} 
		\end{array} \right\}
		\end{array}
		\right\} = \\
		& = max \{ \mathbf{U'}_R ( o_i \circ w ) \vert o_i \in \mathfrak{P} \} 
		\end{align*}
		 \noindent then 
		\begin{align*}
		p_{o_j,c} = & I \circ c \circ w \circ o_j \circ \\
		& \circ max \left\{ 
		x \, \middle\vert \begin{array}{l}
		p_{o_j,c-1}= y \circ c-1 \circ w \circ o_j \circ x \, \lor  \\
		\lor \, y \circ c-1 \circ w \circ o_i \circ x \in  \left\{ p_{o_i,c-1} \middle\vert \begin{array}{l}
		o_i \in \mathbf{X}_{neighbors}(o_j, c-1) \land \\
		\land \, i \in \mathbb{N} \land c-1 \in \mathfrak{C} 
		\end{array} \right\}
		\end{array}
		\right\}
		\end{align*}
		\noindent with probability $ \nu $. \\ \label{BprotSusceptiblenumax}
		
		\item if $ c \neq 1 $ and $ c \neq max \{ c \mid c \in \mathfrak{C} \} $ and $ p_{o_j,c-1} = S \circ c-1 \circ w \circ o_j \circ x $ and
		\begin{align*} & max \left\{ 
		x \, \middle\vert \begin{array}{l}
		p_{o_j,c-1}= y \circ c-1 \circ w \circ o_j \circ x \, \lor  \\
		\lor \, y \circ c-1 \circ w \circ o_i \circ x \in  \left\{ p_{o_i,c-1} \middle\vert \begin{array}{l}
		o_i \in \mathbf{X}_{neighbors}(o_j, c-1) \land \\
		\land \, i \in \mathbb{N} \land c-1 \in \mathfrak{C} 
		\end{array} \right\}
		\end{array}
		\right\} \neq \\
		& \neq max \{ \mathbf{U'}_R ( o_i \circ w ) \vert o_i \in \mathfrak{P} \} 
		\end{align*}
		\noindent then 
		\begin{align*}
		p_{o_j,c} & = p_{ o_j, c-1 }
		\end{align*} \\ \label{BprotSusceptiblenuneqmax}

		 \item if $ c \neq 1 $ and $ c \neq max \{ c \mid c \in \mathfrak{C} \} $ and $ p_{o_j,c-1} = S \circ c-1 \circ w \circ o_j \circ x $ and
		 \begin{align*} & max \left\{ 
		 	x \, \middle\vert \begin{array}{l}
		 		p_{o_j,c-1}= y \circ c-1 \circ w \circ o_j \circ x \, \lor  \\
		 		\lor \,  y \circ c-1 \circ w \circ o_i \circ x \in  \left\{ p_{o_i,c-1} \middle\vert \begin{array}{l}
		 			o_i \in \mathbf{X}_{neighbors}(o_j, c-1) \land \\
		 			\land \, i \in \mathbb{N} \land c-1 \in \mathfrak{C} 
		 		\end{array} \right\}
		 	\end{array}
		 	\right\} = \\
		 	& = max \{ \mathbf{U'}_R ( o_i \circ w ) \vert o_i \in \mathfrak{P} \} 
		 \end{align*}
		 \noindent then
		\begin{align*}
		p_{o_j,c} & = p_{ o_j, c-1 }
		\end{align*}
		\noindent with probability $ 1 - \nu $. \\ \label{BprotSusceptible1-nu}

		\item if $ c \neq 1 $ and $ c \neq max \{ c \mid c \in \mathfrak{C} \} $ and $ p_{o_j,c-1} = I \circ c-1 \circ w \circ o_j \circ x $, then 
		\begin{align*}
		p_{o_j,c} & = p_{ o_j, 1 }
		\end{align*}
		\noindent with probability $ \delta $. \\ \label{BprotInfecteddelta}
		
		\item if $ c \neq 1 $ and $ c \neq max \{ c \mid c \in \mathfrak{C} \} $ and $ p_{o_j,c-1} = I \circ c-1 \circ w \circ o_j \circ x $, then 
		\begin{align*}
		p_{o_j,c} & = p_{ o_j, c-1 }
		\end{align*}
		\noindent with probability $ 1 - \delta $. \\ \label{BprotInfected1-delta}

		\item if $ c = max \{ c \mid c \in \mathfrak{C} \} $ and $ p_{o_j,c-1} = w \circ o_i  \circ o_j \circ x $, then
		\begin{center}
			\noindent $ p_{o_j,c} = x $ 
		\end{center} \label{BprotLastcycle}
	\end{enumerate}
\end{Bdefinition}

\begin{Bremarknote}[SIS]
	Since we will be working with synchronous algorithmic networks, these global sharing protocols applies at the end of each cycle (or communication round) --- see Definition \ref{BdefCycle}. So, after the first cycle the diffusion of the biggest partial output works like a \textbf{spreading} in time-varying networks \cite{Guimaraes2013} \cite{Costa2015a}. However, under the SIS contagion scheme as in~\cite{Pastor-Satorras2001,Pastor-Satorras2001a,Pastor-Satorras2002}. And the last cycle (or more cycles --- see Definition \ref{BdefN'_BB}) is spent in order to make each node/program return a number --- from which we measure the complexity of the respective node/program as discussed in~\cite{Abrahao2017}.	
\end{Bremarknote}

\begin{Bremarknote}
	In order to simplify our notation we let $ w \circ \mathbf{U'}_R(x ) $ denote the prefix preserving concatenation $\circ$ (see Notation \ref{BdefConcatenation}) of the string $ w \in \mathbf{L_U} $ with the string $ y \in \mathbf{L_U} $ such that $y$ represents the number $\mathbf{U'}_R(x)$ in the language $\mathbf{L_U}$.
\end{Bremarknote}

\begin{Bremarknote}[SIS]\label{BdefProt'}
	Note that there is a program that runs on $ \mathbf{U'}_R $ rules \ref{BprotSusceptiblenumax}, \ref{BprotSusceptiblenuneqmax} and  \ref{BprotSusceptible1-nu} which we can denote by $ P_{prot_\nu} $. The same for: rules \ref{BprotInfected1-delta} and \ref{BprotInfecteddelta}, denoted by $ P_{prot_\delta} $; rules \ref{Bprotc=1max}, \ref{Bprotc=1neqmax} and \ref{Bprotcmax1}, denoted by $ P_{prot_1} $; rule \ref{BprotLastcycle} denoted by $ P_{prot_f} $; Thus, making Definition~\ref{BdefU'_R} sound.
	
	\begin{Bsubremarknote2}[SIS]
		In fact, the rules \ref{BprotSusceptible1-nu} and \ref{BprotSusceptiblenuneqmax} in Definition~\ref{BdefIFPSIS} may instead allow that the susceptible node still performs the IFP like in~\cite{Abrahao2017} while remaining Susceptible (i.e., of the form $ p_{ o_j , c }  = S \circ w  $). However, the reader is invited to note that it will only makes the expected emergent algorithmic complexity larger in our final results, so that it does not change our final conclusions.
	\end{Bsubremarknote2}

	\begin{Bsubremarknote2}[SIS]
		The reader is also invited to note that $ P_{prot'_R} $ may access and/or decide the value of the biggest output in cycle $1$ within the randomly generate population, for example, from two procedures: 
		\begin{enumerate}
			\item It calculates the biggest value of the biggest output of a program $ p \in \mathbf{L_U} $ running on $ \mathbf{U} $ where $ | p |  \leq \log(N) - \mathbf{O}(1) $. From Lemma~\ref{lemmaSLLNandAITSIS}, we know that a node that calculates such value is expected to occur for large enough populations. Thus, this procedure can only be performed by a hypercomputer or an oracle Turing machine, e.g., machine $ \mathbf{U'}_R $; \\

			\item As presented in~\cite{Abrahao2017}, it records its initial randomly generated program and performs the IFP for a number of cycles at least as big as the diameter of the network before starting the IFPSIS. Also note that the diameter of scale-free networks in \cite{Pastor-Satorras2001,Pastor-Satorras2001a,Pastor-Satorras2002} is expected to be dominated by $ \mathbf{O}( \log(N) ) $ (see \cite{Bollobas2004}). Thus, the reader is invited to note that our final results in Corollaries~\ref{corSIS} and~\ref{BcorBA} also hold for spending $ \mathbf{O}( \log(N) ) $ more cycles in order to run protocol IFP before starting to run IFPSIS. Also note that, in order to run to the IFP in first place, one needs to know the partial output of each node/program at the end of cycle $1$ in first place. Thus, this procedure can only be performed by a hypercomputer or an oracle Turing machine, e.g., machine $ \mathbf{U'}_R $; \\
		\end{enumerate}
	\end{Bsubremarknote2}

	\begin{Bsubremarknote2}[SIS] 
		Note that, if one enforces that the number of cycles needs to be informed to the this global information-sharing protocol, then the expected algorithmic complexity of the networked population will be even larger in respect to the expected algorithmic complexity of the isolated population in Corollary~\ref{corMainSIS}. That is, the additional input of the number of cycles in the networked case will ``cancel'' the one in the isolated case. Thus, our final results on the lower bound for the expected emergent algorithmic complexity of a node (EEAC) can even be increased. This is expected to happen for example in the case it was possible to simulate the entire algorithmic network on $ \mathbf{U'}_R $. Thus, it will be important for migrating our results to resource-bounded algorithmic networks as suggested in~\cite{Abrahao2017} for future research. 
	\end{Bsubremarknote2}
\end{Bremarknote}

\begin{Bsubdefinition}\label{BdefBBcontagionSIS}
	We call a \textbf{Busy Beaver contagion protocol} as a global information-sharing protocol in which every node/program runs the node/program of the neighbor that have output --- a partial output --- the largest integer instead of its own program \textit{iff} the partial output of this neighbor is bigger than the receiver's own partial output.

\end{Bsubdefinition}

\begin{Bsubremarknote}
	Note that a node/program only needs to take into account the biggest partial output that any of its neighbors have sent. If more than one sends the largest integer as partial output, the receiver node/program choose one of these respective neighbors accordingly to an arbitrary rule. Then, this partial output is the one that will be compared to the partial output from the receiver node/program.
\end{Bsubremarknote}

\begin{Bsubdefinition}\label{BdefMaxCoopSIS}
	In a \textbf{maximally cooperative protocol} every node/program shares its own program and its latest partial output with all its neighbors at the end of each cycle and before the next cycle begins.
	
	\begin{Bsubremarknote}
		In the model defined in \ref{BdefP'_BB} sharing only the last partial output turns out to be equivalent\footnote{ However, at the expense of using more computation time.} to the definition of maximal cooperation.	
	\end{Bsubremarknote} 
	
	\begin{Bsubremarknote}
		Remember that $ \mathfrak{N'}_{BB} (N, f, t, j)  $ only lets its nodes/programs perform computation in the first cycle (see Definition \ref{BdefITFOprotSIS}). This is the reason why only the network input $w$ matters in its respective maximally cooperative protocol.
	\end{Bsubremarknote}
	
	\begin{Bsubremarknote}
		As in \cite{Costa2015a} this diffusion process may be interpreted as following a Breadth-First Search (BFS), in which each node starts a diffusion by sending the specified information in Definition \ref{BdefIFPSIS} to all of its adjacent
		nodes. Then, these adjacent nodes relay information for their own adjacent nodes in the next time instant, and so on.
		%%%
	\end{Bsubremarknote}
	
\end{Bsubdefinition}

\begin{Bsubdefinition}\label{BdefITFOprotSIS}
	We call a \textbf{contagion-only protocol} as a global information-sharing protocol in which every node/program only plays the Busy Beaver contagion and does not perform any other computation after the first cycle \textbf{when networked} in some $ \mathfrak{N} = (G, \mathfrak{P}, b)$.

	\begin{Bsubremarknote}
		This is the condition that allows us to investigate the ``worst'' (see discussion in~\cite{Abrahao2017}) case in which no node/program spends computational resources other than playing its global sharing protocols.\footnote{ In our case, generating expected emergent algorithmic complexity of a node.} In other words, it forces the algorithmic network to rely on a diffusion process only.
	\end{Bsubremarknote} 
	
	\begin{Bsubremarknote}
		In the main model presented in this article the first time instant occurs after the first cycle --- see \ref{BdefN'_BB}. 
	\end{Bsubremarknote}
\end{Bsubdefinition}

\noindent \\

%%%%%%%

% New Definitions L_P'_BB

\begin{Bdefinition}[SIS]\label{BdefL'_BB}
	Let $ {\mathbf{L'}}_{BB} \subset \mathrm{\textbf{L}}_{\mathrm{\textbf{U}}} $  be a language of programs of the form $P_{prot'_R} \circ p$ where $p \in \mathbf{L}_{\textbf{U}}$  and the prefix $ P_{prot'_R} \in \mathbf{L}_{\textbf{U}} $ defined in~\ref{BdefU'_R} is any program that always ensures that $P_{prot'_R} \circ p$ obeys when running on $\textbf{U'}_R$ the \textbf{IFPSIS} protocol when it is \textbf{networked}. Otherwise, if the node/program $ P_{prot'_R} \circ p $ is \textbf{isolated}, then $\textbf{U'}_R(P_{prot'_R} \circ p) = \textbf{U'}_R(p)$ and every subsequent cycle works like a reiteration of partial outputs as immediate next input for the same node/program.\footnote{ Thus, the isolated case may be represented (and is equivalent to) by same algorithmic network built on a population in language $ \mathbf{L_U} $ that does not follow any information-sharing protocol and the topology of the MultiAspect Graph is composed by one-step loops on each node/program only.} 
\end{Bdefinition}

\begin{Bdefinition}[SIS]\label{BdefP'_BB}
	Let $\mathfrak{P'}_{BB}(N) \subseteq \mathrm{\textbf{L'}}_{BB}$ be a population of $N$ elements that is
	\textbf{synchronous}\footnote{ The procedure responsible for performing the synchronization may be abstract-hypothetical or defined on an underlying oracle machine that makes each individual cycle start at the same time (or after every node/program returns its partial output in the respective cycle).}, \textbf{sensitive to oracles} and with \textbf{randomly generated}\footnote{ Note that we are dealing with self-delimiting languages, so that one can always define algorithmic probabilities or an optimal prefix-free language \cite{Cover2005,Li1997,Chaitin2004}. } suffixes $ p \in \mathbf{L}_{ \mathfrak{P'}_{BB}(N) } \subset \mathbf{L_U} $ such that\footnote{ Hence, $ \mathbf{L}_{ \mathfrak{P'}_{BB}(N) } $ is defined as the population of suffix nodes/programs that were randomly generated in order to constitute $ \mathfrak{P'}_{BB}(N) $. Thus, $ \mathbf{L}_{ \mathfrak{P'}_{BB}(N) } $ is a population and not a language (see Definition~\ref{BdefPopulation} and Note~\ref{notePopulationnotlanguageSIS}), so that there may be repetitions within $ \mathbf{L}_{ \mathfrak{P'}_{BB}(N) } $. It is important to note this since the letter $L$ is used to denote languages in other parts in this paper.} 
	
	\[
	p \in \mathbf{L}_{ \mathfrak{P'}_{BB}(N) }
	\]
	\[
	iff 
	\]
	\[
	P_{prot'_R} \circ p \in \mathfrak{P'}_{BB}(N) \subseteq \mathrm{\textbf{L}}_{BB}
	\]
	
	\noindent where $ \mathbf{L}_{ \mathfrak{P'}_{BB}(N) } $ is a population of suffixes $p$.
	\begin{Bremarknote}
		Note that all conditions and protocols in Definition~\ref{BdefP'_BB} define the set of properties $Pr(\mathfrak{P'}_{BB}(N))$ of the population $\mathfrak{P'}_{BB}(N)$ as in Definition~\ref{BdefFunctionbinAN}.
	\end{Bremarknote}
	
	\begin{Bremarknote}\label{notePopulationnotlanguageSIS}
		There is a misplaced usage of the operator $ \subseteq $ in $ \mathfrak{P'}_{BB}(N) \subseteq \mathrm{\textbf{L'}}_{BB} $ here. Since $ \mathfrak{P'}_{BB}(N) $ is a population, it may contain repeated elements of $ \textbf{L'}_{BB} $. However, for the sake of simplicity, we say a population $ \mathfrak{ P } $ is contained in a language $ \mathbf{L} $ \textit{iff} \[  \forall p_i, 1 \leq i \leq | \mathfrak{ P } | \, \big( \, p_i \in \mathfrak{ P }  \implies p_i \in \mathbf{L} \, \big) \] 
	\end{Bremarknote}
	
\end{Bdefinition}

\begin{Bdefinition}\label{BdefL_USIS}
	For the sake of simplifying our notation, we denote the language\footnote{ Note that, since it is a language and not a population, no repetitions are allowed in $\mathbf{L_U}(N) $. } of the size-ordered\footnote{ That is, an ordering from the smallest to the largest size. Also note that ordering members with the same size may follow a chosen arbitrary rule.} smallest 
	\[ p \in \lim_{ N \to \infty } \mathbf{L}_{ \mathfrak{P'}_{BB}(N) }
	\]
	\noindent as
	\[ 
	\mathbf{L_U}(N) 
	\]
\end{Bdefinition}	

\subsection{Definitions on the algorithmic network model}

%%%%%% Old N_BB

%%%%%%% New N'_BB

\begin{Bdefinition}[SIS]\label{BdefN'_BB}
	Let
	\[
	\mathfrak{N'}_{BB} (N, f, t,   j)=(G_t, \mathfrak{P'}_{BB} (N),b_j)
	\]
	\noindent be an algorithmic network where $f$ is an arbitrary well-defined function such that 
	\[
	\myfunc{f}{ \mathbb{N^*} \times X \subseteq \mathrm{T}(G_t) } { \mathbb{N} } { ( x,t ) } { y \geq x }
	\]
	\noindent and $ G_t \in \mathbb{G}_{SIS}( f, t )$, $ | \mathrm{V}(G_t) | = N $, $ | \mathrm{T}(G_t) | > 0 $ and there are arbitrarily chosen\footnote{ Since they are arbitrarily chosen, one can take them as minimum as possible in order to minimize the number of cycles for example. That is, $ c_0 = 0 $ and $ n = | \mathrm{T}(G_t) | + 1 $ for example. } $ c_0, n \in \mathbb{N} $ where $ c_0 + | \mathrm{T}(G_t) | + 1 \leq n \in \mathbb{N} $ such that $b_j$ is an injective function
	\[
	\myfunc{b_j} { \mathrm{V}(G_t) \times \mathrm{T}(G_t) } {\mathfrak{P'}_{BB}(N) \times \mathbb{N}|_1^{ n }} { (v,t_{c-1}) } { b_j(v,t_{c-1})=( o, c_0 + c ) }  
	\]
	\noindent such that\footnote{ See Definition \ref{BdefFunctionbinAN}.}, since one has fixed the values of $c_0$ and $n$,
	\[
	\left| \left\{ 
	b_j \, \Bigg| \, 
	\myfunc{b_j} { \mathrm{V}(G_t) \times \mathrm{T}(G_t) } {\mathfrak{P}_{BB}(N) \times \mathbb{N}|_1^{ n }} { (v,t_{c-1}) } { b_j(v,t_{c-1})=( o, c_0 + c ) } 
	\right\} \right| \leq N^{ N }
	\]

	\begin{Bremarknote}[SIS]
		In summary, $ \mathfrak{N'}_{BB} ( N, f, t, j ) $ is an algorithmic network populated by $N$ nodes/programs from $ \mathfrak{P'}_{BB} (N) $ such that, after the first (or $c_0$ cycles) cycle, it starts a diffusion\footnote{ There may be other diffusions too. However, only the one from the biggest partial output is independent of neighbor's partial outputs.} process of the biggest partial output (given at the end of the first cycle) determined by network $ G_t $ that belongs to a family of graphs $ \mathbb{G}_{SIS}( f, t ) $ as defined in~\ref{BdefFamilyGSIS}. Then, at the last time instant diffusion stops and one cycle (or more) is spent\footnote{ This condition is necessary to make this algorithmic network defined even when $|\mathrm{T}(G_t)|=1$.} in order to make each node return a final output. Each node returns as final output its previous partial output determined at the last time instant --- see Definition \ref{BdefITFOprotSIS}. 
	\end{Bremarknote}
	
	\begin{Bremarknote}
		Note that in this model the aspects of the graphs in the family $\mathbb{G}_{SIS}$ that are mapped into the properties of the population $ \mathfrak{P'}_{BB}(N) $ by functions $b_j$ are nodes and time instants. 
	\end{Bremarknote}
	
	\begin{Bremarknote}\label{BnoteRestrictedfamilyGinN_BBSIS}
		The reader is invited to note that the main results presented in this paper also hold for only one function $b_j$ per graph in the family $ \mathbb{G}_{SIS} $ (see Note~\ref{BnoteunderBthmSinglegraphinfamilyGSIS}).
	\end{Bremarknote}
	
	\begin{Bsubdefinition}\label{BdefC_BBSIS}
		We also denote $ \mathbb{N}|_1^{ n } $ as $ \mathfrak{C_{BB}} $. 
	\end{Bsubdefinition}

\end{Bdefinition}

\noindent \\

\subsection{Definitions on emergent algorithmic information}

\begin{Bdefinition}\label{BdefEAC}
	The \textbf{emergent algorithmic complexity (EAC)} of a node/program $o_i$ in $c$ cycles is given in an algorithmic network that always produces partial and final outputs by

	\[
	{\displaystyle{\myDelta_{iso}^{net(b)}}} A (o_i,c) = 
	A (\mathrm{\textbf{U}}(p_{net}^{b} ( o_i , c )) - 
	A (\mathrm{\textbf{U}}(p_{iso} ( o_i , c) )
	\]
	
	\noindent where:
	
	\begin{enumerate}
		\item $ o_i \in \mathbf{L} $;
		
		\item $ p_{net}^{b} $ is the program that computes cycle-per-cycle the partial outputs of $o_i$ when networked assuming the position $ v $, where $ b(v,\mathbf{\bar{x}}) = (o_i, b(\mathbf{\bar{x}})) $, in the graph $G$ in the specified number of cycles $c$ with network input $w$; 
		
		\item $ p_{iso} $ is the program that computes cycle-per-cycle the partial outputs of $o_i$ when isolated in the specified number of cycles $ c $ with network input $w$;
	\end{enumerate}
	
	\begin{Bremarknote}
		Note that:
		\begin{enumerate}
			\item $ A (\mathrm{\textbf{U}}(p_{net}^{b} ( o_i , c) )) $ is the algorithmic complexity of what the node/program $o_i$ does \textbf{when networked};
			
			\item $ A (\mathrm{\textbf{U}}(p_{iso} ( o_i , c) ) $ is the algorithmic complexity of what the node/program $o_i$ does \textbf{when isolated};
		\end{enumerate}
	\end{Bremarknote}

	\begin{Bremarknote}
		While program $p_{iso}$ may be very simple, since it is basically a program that reiterates partial outputs of $o_i$ as inputs to itself at the beginning of the next cycle (up to $c$ times), program  $p_{net}^{b}$ may also comprise giving the sent partial outputs from $o_i$'s incoming neighbors at the end of each cycle as inputs to $o_i$ at the beginning of the respective next cycle, so that it may be only described by a much more complex procedure.  
	\end{Bremarknote}
	
	\begin{Bremarknote}
		Note that the algorithmic complexity of $p_{iso}$ or $p_{net}^{b}$ may be not directly linked to $ A (\mathrm{\textbf{U}}(p_{iso} ( o_i , c) ) $ or $ A (\mathrm{\textbf{U}}(p_{net}^{b} ( o_i , c) )) $ respectively, since the $ A (\mathrm{\textbf{U}}(p_{iso} ( o_i , c) ) $ and $ A (\mathrm{\textbf{U}}(p_{net}^{b} ( o_i , c) )) $ are related to the final outputs (if any) of each node's computation. 
	\end{Bremarknote}
	
	\begin{Bremarknote}
		Remember definitions \ref{Bdefsensitivetooracles} and \ref{BdefU'_R} which states that even when a node/program does not halt in some cycle, machine $ \mathbf{U'}_R $ was defined in order to assure that there is always a partial output for every node/program for every cycle. If population $ \mathfrak{ P } $ is defined in a way that eventually a partial or final output is not obtained when running on the respective theoretical machine (see Definition \ref{BdefPopulation}), Definition \ref{BdefEAC} would be inconsistent. This is the reason we stated in its formulation that there always is partial and final outputs. However, it is not necessary in the case of $ \mathfrak{N'}_{BB} (N, f, t, j) $ (see Definition \ref{BdefP'_BB} and \ref{BdefEAConN'_BB}).
	\end{Bremarknote}
	\begin{Bremarknote}
		If one defines the \textbf{emergent creativity} of a node/program as
		\[
		A ( \textbf{U}(p_{net}^{b} ( o_i , c )) | \textbf{U}(p_{iso} ( o_i , c) )
		\]
		our results also hold for replacing the expected emergent algorithmic complexity (EEAC) with expected emergent algorithmic creativity (EEACr). Since we are estimating lower bounds, note that from AIT we have that
		\[
		A ( \textbf{U}(p_{net}^{b} ( o_i , c )) | \textbf{U}(p_{iso} ( o_i , c) )
		\geq
		A (\mathrm{\textbf{U}}(p_{net}^{b} ( o_i , c )) - 
		A (\mathrm{\textbf{U}}(p_{iso} ( o_i , c) )
		+
		\mathbf{O}\big( 1 \big)
		\] \\
	\end{Bremarknote}

	% Old Definition
	
	% New definition
	
	\begin{Bsubdefinition}[SIS] \label{BdefEAConN'_BB}
		More specifically, one can denote the \textbf{emergent algorithmic complexity} of a node/program $o_i$ \textbf{in an algorithmic network} $ \mathfrak{N'}_{BB} (N, f, t , j) $ ($ \mathbf{ EAC_{BBSIS} } $) during $c$ cycles as

		\[
		{\displaystyle{\myDelta_{iso}^{net(b_j)}}} A (o_i,c) = 
		A (\mathrm{\textbf{U}}(p_{net}^{b_j} ( o_i ,  c ) )) - 
		A (\mathrm{\textbf{U}}(p_{iso} ( p_i ,  c )))
		\]
		
		\noindent where:
		
		\begin{enumerate}
			\item $ o_i = P_{prot'_R} \circ p_i \in \mathfrak{P'}_{BB}(N) \subseteq \mathrm{\textbf{L}}_{BB} $ ;
			
			\item $ p_{net}^{b_j} $ is the program that computes cycle-per-cycle what a program $o_i$ does \textbf{when networked} assuming the position $ v $, where $ b_j(v)=(o_i) $, in the graph $G_t$ in $c$ cycles with network input $w$; 
			
			\item $ p_{iso} $ is the program that computes cycle-per-cycle what a program $p_i$ does \textbf{when isolated}\footnote{ Which is a redundancy since we are refering to $p_i$ instead of $o_i$ here.} in $c$ cycles with network input $w$; 
		\end{enumerate}

		\begin{Bsubremarknote}
			In order to check if Definition \ref{BdefEAConN'_BB} is well-defined, remember Definitions \ref{BdefU'_R} and \ref{BdefL'_BB}. 
		\end{Bsubremarknote}
		
	\end{Bsubdefinition}
	
\end{Bdefinition}

\begin{Bdefinition}[Notation]
	For the sake of simplifying our notation, let $ \{ b_j \} $ denote 
	\[ 
	\left\{ b_j \, \Bigg| \, \myfunc{b_j} { \mathrm{V}(G_t) \times \mathrm{T}(G_t) } {\mathfrak{P'}_{BB}(N) \times \mathbb{N}|_1^{ n }} { (v,t_{c-1}) } { b_j(v,t_{c-1})=( o, x + c ) } \right\}
	\]
	and $ b_j $ in the sum $ \sum\limits_{ b_j } $ denote 
	\[ 
	\myfunc{b_j} { \mathrm{V}(G_t) \times \mathrm{T}(G_t) } {\mathfrak{P'}_{BB}(N) \times \mathbb{N}|_1^{ n }} { (v,t_{c-1}) } { b_j(v,t_{c-1})=( o, x + c ) }
	\] \\
\end{Bdefinition}

\begin{Bdefinition}\label{BdefAEAC}
	
	We denote the \textbf{average emergent algorithmic complexity of a node/program (AEAC)} for an algorithmic network $ \mathfrak{N} = (G, \mathfrak{ P }, b) $, as 
	\begin{align*}
	&\mathbf{E}_{ \mathfrak{N} } 
	\left(
	{ {\displaystyle{\myDelta_{iso}^{net}} A} (o_i,c)} 
	\right)
	= \\
	&=
	{\tiny 
		\sum\limits_{ b }
	}
	\frac{
		\frac{
			\sum\limits_{ o_i \in \mathfrak{P} } { {\displaystyle{\myDelta_{iso}^{net(b)}} A} (o_i,c)}
		}
		{N} 
	}
	{ | \{ b \} | } 
	\end{align*}
	\begin{Bremarknote}\label{BnoteRestrictedfamilyGinAEAC}
		As in note \ref{BnoteRestrictedfamilyGinN_BBSIS} , if only one function $b$ exists per population, then there is only one possible network's topology linking each node/program in the population. So, in this case, 
		\[
		\mathbf{E}_{ \mathfrak{N} } 
		\left(
		{ {\displaystyle{\myDelta_{iso}^{net}} A} (o_i,c)} 
		\right)
		=
		\frac{
			\sum\limits_{ o_i \in \mathfrak{P} } { {\displaystyle{\myDelta_{iso}^{net(b)}} A} (o_i,c)}
		}
		{ N } 
		\] \\

	\end{Bremarknote}
	
\end{Bdefinition}

% New Definition 

\begin{Bdefinition}[SIS]\label{BdefEEACN'_BB}
	
	We denote the \textbf{expected emergent algorithmic complexity of a node/program} for algorithmic networks $ \mathfrak{N'}_{BB}(N,f,t ,j) $ ($ \mathbf{ EEAC_{BBSIS} } $) with network input $w$, where $0<j \leq |\{ b_j \}|$, as 
	
	\begin{align*}
	&\mathbf{E}_{ \mathfrak{N'}_{BB}(N,f,t ) } 
	\left(
	{ {\displaystyle{\myDelta_{iso}^{net}} A} (o_i,c)} 
	\right)
	= \\
	&=
	{\tiny 
		\sum\limits_{ b_j  }
	}
	\frac{
		\frac{
			\sum\limits_{ o_i \in \mathfrak{P'}_{BB}(N) } { {\displaystyle{\myDelta_{iso}^{net(b_j)}} A} (o_i,c)}
		}
		{N} 
	}
	{ | \{ b_j \} | }
	= \\ 
	&= 
	\frac{
		{
			\sum\limits_{ b_j  }
		}
		\frac{
			\sum\limits_{ o_i \in \mathfrak{P'}_{BB}(N) } { A (\mathrm{\textbf{U}}(p_{net}^{b_j} ( o_i ,  c ) )) - 
				A (\mathrm{\textbf{U}}(p_{iso} ( p_i ,  c ))) }
		}
		{N}
	}
	{ | \{ b_j \} | } 
	\end{align*}
	
	\begin{Bremarknote}[SIS]\label{BnoteRestrictedfamilyGinEEAC_BBSIS}
		As in note \ref{BnoteRestrictedfamilyGinAEAC} , if only one function $b_j$ exists per population, then there is only one possible network's topology linking each node/program in the population. So, in this case, 
		\[
		\mathbf{E}_{ \mathfrak{N'}_{BB}(N,f,t ) } 
		\left(
		{ {\displaystyle{\myDelta_{iso}^{net}} A} (o_i,c)} 
		\right)
		=
		\frac{
			\sum\limits_{ o_i \in \mathfrak{P'}_{BB}(N) } { {\displaystyle{\myDelta_{iso}^{net(b_j)}} A} (o_i,c)}
		}
		{N} 
		\] \\

	\end{Bremarknote}
	
\end{Bdefinition}

\begin{Bdefinition}\label{BdefAEOE}
	We say an algorithmic network $ \mathfrak{N} $ with a population of $N$ nodes has the property of \textbf{average emergent open-endedness (AEOE)} for a given network input $w$ in $c$ cycles \textit{iff} 
	\[
	\lim_{ N \to \infty } \mathbf{E}_{ \mathfrak{N} } 
	\left(
	{ {\displaystyle{\myDelta_{iso}^{net}} A} (o_i,c)} 
	\right)
	=
	\infty
	\]  
\end{Bdefinition}

% New definition

\begin{Bsubdefinition}[SIS]\label{BdefEEOESIS}
	We say an algorithmic network $ \mathfrak{N'}_{BB}(N,f,t ) $ has the property of \textbf{expected emergent open-endedness (EEOE)} for a given network input $w$ in $c$ cycles \textit{iff} 
	\[
	\lim_{ N \to \infty } \mathbf{E}_{ \mathfrak{N'}_{BB}(N,f,t ) } 
	\left(
	{ {\displaystyle{\myDelta_{iso}^{net}} A} (o_i,c)} 
	\right)
	=
	\infty
	\] 
\end{Bsubdefinition}

\noindent \\

\subsection{Definitions on cycle-bounded halting probability}

% New Definition

\begin{Bdefinition}[SIS]\label{BdefHaltSIS}
	Let $ \mathfrak{N'}_{BB} (N, f, t , j) $ be an algorithmic network. We denote the population of $ p_i \in \mathfrak{P} $, where $ 1 \leq i \leq N $ and $ P_{prot'_R} \circ p_i \in \mathfrak{P'}_{BB} (N) $, such that $p_i$ always halts on network input $w$ in \textbf{every} cycle until $c$ when $ P_{prot'_R} \circ p_i $ is \textbf{isolated} as
	\[
	Halt_{iso} ( \mathfrak{P}, w, c )
	\]
	
	\begin{Bsubdefinition}[SIS]\label{BdefBarHaltSIS}
		Analogously, we denote the population of $ p_i \in \mathfrak{P} $, where $ 1 \leq i \leq N $ and $ P_{prot'_R} \circ p_i \in \mathfrak{P'}_{BB} (N) $ with $ \mathfrak{N'}_{BB} ( N, f, t,  j ) = (G_t, \mathfrak{P'}_{BB} (N),b_j) $, such that $p_i$ does \textbf{not} halt on network input $w$ in at least one cycle until $c$ when $ P_{prot'_R} \circ p_i $ is isolated as
		
		\[
		\overline{ Halt }_{iso}( \mathfrak{P}, w, c ) 
		\] 
	\end{Bsubdefinition}
	
	\begin{Bremarknote}[SIS]
		Note that both Definitions~\ref{BdefHaltSIS} and~\ref{BdefBarHaltSIS} are well-defined for an arbitrary language $L$ instead of a population $ \mathfrak{P} $, so that one can denote it as $ Halt_{iso}( L, w, c ) $ and $ \overline{ Halt }_{iso}( L, w, c ) $ respectively.
	\end{Bremarknote}
\end{Bdefinition}

\begin{Bdefinition}[SIS]\label{BdefOmegawcSIS}
	We denote the \textbf{cycle-bounded conditional halting probability}\footnote{ That is, a conditional Chaitin's Omega number for isolated programs in a population of an algorithmic network.} of a program in a language $\mathbf{L_U}$ that always halts for an initial input $w$ in $c$ cycles as 
	\[
	\Omega( w,c )
	=
	\sum\limits_{ p_i \in Halt_{iso}( \mathbf{L_U}, w, c )  } 
	{ \frac {1} { 2^{ |p_i| } } }
	=
	\lim\limits_{ N \to \infty }
	\sum\limits_{ p_i \in Halt_{iso}( \mathbf{L_U}(N), w, c )  } 
	{ \frac {1} { 2^{ |p_i| } } }
	\]
	\begin{Bremarknote}[SIS]
		Since $\mathbf{L_U}$ is self-delimiting, the algorithmic probability of each program is well-defined. Hence, one can define the halting probability $\Omega$ \footnote{ Note that the Greek letter $\Omega$ here does not stand for an asymptotic notation opposed to the big O notation. } for $\mathbf{L_U}$. Further, the same holds for conditional halting probability $\Omega(w)$, i.e. the probability that a program halts when $w$ is given as input. Then, the set of programs that always halt on initial input $w$ in $c$ cycles is a proper subset of the set of programs that halt, so that 
		\[
		\Omega(w,c) \leq \Omega(w) = \Omega(w,1) < 1
		\]
		\noindent In fact, one can prove that
		\[
		\Omega(w,c') \leq \Omega(w,c)
		\]
		\noindent when $ c' \geq c > 0 $. On the other hand, one can also build programs that always halt for every input and for every number of cycles. So, for every $ c > 0 $
		\[
		0 < \Omega(w,c) < 1
		\] 
	\end{Bremarknote}

\end{Bdefinition}

\subsection{Definitions on the prevalence of ``infected'' nodes}\label{appendixsubsectionPrevalence}

\begin{Bdefinition}
	Let $ A_{max}( \mathfrak{N} , c ) $ denote the algorithmic complexity of the biggest final output returned by a member of the population $\mathfrak{P}$ in a maximum number of $c$ cycles, where $\mathfrak{N}=(G, \mathfrak{P},b)$.

	% New definition
	\begin{Bsubdefinition}\label{BdefA'_max}
		In the case of $ \mathfrak{N'}_{BB}(N,f,t , j) $ and $c=1$, for the sake of simplifying our notation, we will just denote it as $ A'_{max} $. 
	\end{Bsubdefinition}
	
\end{Bdefinition}

	%%%

	\begin{Bdefinition}[SIS]\label{BdefPrevalence}
		Let an algorithmic network $ \mathfrak{N}=(G,\mathfrak{P},b) $ and a set $ \mathbf{X}_{ partial }(  \mathfrak{N} )|_{t}^{t'} $ of partial outputs of its nodes/programs $ o_i \in \mathfrak{P} $, which appear during time instants $ t $ until $ t' $, be well-defined. 
		Let $ t_i, t, t' \in \mathrm{T}(G_t) $ with $ t_i \leq t \leq t' $.
		We denote the fraction of nodes/programs in $ \mathfrak{P} $ with partial outputs in $ \mathbf{X}_{ partial }(  \mathfrak{N} )|_{t}^{t'} $ that have at time instant $t'$ a ``better'' (or equal to) partial output than any node/program's partial output at time instant $ t_i $ as
		\[ 
		{ \tau_{\rho}( \mathfrak{N}, t_i ) }|_{t}^{t'} 
		=
		\frac{ \left| \mathbf{X}_{ { \tau_{\rho}( \mathfrak{N}, t_i ) }|_{t}^{t'} } \right| }{ \left| \mathrm{ V(G) } \right| }
		\] 
		
		\noindent and we call it as \textbf{density of ``infected'' nodes/programs in a time interval} or \textbf{persistence\footnote{ This second term is usually seen in complex networks theory in the case which the density tends to spreads to the entire network or to remain stationary. For our purposes, we will study the stationary case. See \cite{Pastor-Satorras2001,Pastor-Satorras2001a,Pastor-Satorras2002}. } of the ``infection'' in a time interval}.
		
		\begin{Bremarknote}\label{BbestpartialoutputSIS}
			The notion of what is the ``best'' partial output may vary on how the algorithmic network $\mathfrak{N}$ is defined --- in some algorithmic networks the notion of ``best'' partial output may even be not defined. We consider the ``best'' partial output as being the one that always affects the neighbors to which it is shared by making them to return a final result that is at least as ``good'' as the one that the node with the ``best'' partial output --- that is, the one that started this diffusion --- initially had. How good is a final result also depends on how $ \mathfrak{N} $ is defined and on how one defines what makes a result better than another (e.g., a fitness function). While this general definition is not formally stated, in the particular case of the present paper, these matters become formal and precise in Definition \ref{BdefPrevalenceN'_BB} --- see also~\cite{Chaitin2012,Chaitin2014,Hernandez-Orozco2018,Abrahao2017,Chaitin2018} for the Busy Beaver function as a measure of fitness.  
		\end{Bremarknote}

		\begin{Bsubdefinition}[SIS]\label{BdefPrevalenceN'_BB}
			In the case of an algorithmic network $ \mathfrak{N'}_{BB}(N,f,t_i , j) $ we denote the fraction of nodes/programs in $ \mathfrak{P'}_{BB}(N) $ with partial outputs at time instant $ t' $ in $ \mathbf{X}_{ partial }(  \mathfrak{N'}_{BB}(N,f,t_i , j) )|_{t}^{t'} $ that are equal to the biggest partial output of a node/program at time instant $ t_i $ as
			
			\[
			{ \tau_{\rho}( N,f,t_i , j ) }|_{t}^{t'}
			=
			\frac{ \left| \mathbf{X}_{ { \tau_{\rho}( N,f,t_i , j ) }|_{t}^{t'} } \right| }
			{ N }
			\] 
		\end{Bsubdefinition}
		
		\begin{Bsubsubdefinition}[SIS]\label{BdefAveragePrevalenceN'_BB}
			In the average case for all possible respective node mappings into the population, we define the \textbf{prevalence in a time interval} (or \textbf{average/expected density of ``infected'' nodes}) as
			
			\[
			{ \tau_{\mathbf{E}(\rho)}( N,f,t_i  ) }|_{t}^{t'} 
			= 
			\sum_{\tiny 
				b_j
			} \frac{ { \tau_{\rho}( N,f,t_i , j ) }|_{t}^{t'}
			}
			{ | \{ b_j \} |
			}
			\] 
			
			\begin{Bsubsubremarknote}
				Note that this mean is being taken from a uniform distribution on the space of functions $ b_j $. An interesting future research will be to extend the results of this article to non-uniform cases on $ b_j $. 
			\end{Bsubsubremarknote}
		\end{Bsubsubdefinition}
	\end{Bdefinition}

\noindent \\

\subsection{Definitions of time centralities}

% New Definitions

\begin{Bdefinition}[SIS]\label{BdefTimecentrality1SIS}
	Let $ w \in \mathbf{L_U} $ be a network input.
	Let $ 0 < N \in \mathbb{N} $.
	Let $ c(x) $ be a non-decreasing total computable function where
	\[
	\myfunc{c}{ \mathbb{N} } { \mathbb{N}^{*} } { x } { c(x)=y }
	\]
	\noindent Let $ \mathfrak{N'}_{BB}(N,f,t_z ,j) = ( G_t, \mathfrak{ P' }_{BB}(N), b_j ) $, where $ 0 \leq z \leq | \mathrm{T}(G_t) | -1 $, be well-defined, where there is $ t_{z_0} \in \mathrm{T}(G_t) $ such that\footnote{ This condition directly assures that this definition of time centrality is well-defined.} 
	\[
	\lim\limits_{ N \to \infty } 
	\mathbf{E}_{ \mathfrak{N'}_{BB}(N,f,t_{z_0} ) } 
	\left(
	{ {\displaystyle{\myDelta_{iso}^{net}} A} (o_i, c( z_0 + f( N, t_{z_0} ) + 2 ) )} 
	\right)
	=
	\infty
	\]
	We define the central time $ t_{cen_1} $ in generating \textbf{unlimited} expected emergent algorithmic complexity of a node in a network $ \mathfrak{N'}_{BB}(N,f,t_{z_0}, j ) $ during $ c( t_{cen_1}(c) + f( N, t_{cen_1}(c)  ) + 1 ) $ cycles as
	
	\begin{align*}
	& t_{cen_1}(c) = \min \Bigg\{
	t_i \, 
	\Bigg| \,
	\lim\limits_{ N \to \infty } 
	\mathbf{E}_{ \mathfrak{N'}_{BB}(N,f,t_{i} ) } 
	\Big(
	{\displaystyle{\myDelta_{iso}^{net}} A} (o_i, c( i + f( N, t_{i}  ) + 2 ) ) 
	\Big)
	=
	\infty  
	\Bigg\} 
	\end{align*}

\end{Bdefinition}

\begin{Bdefinition}[SIS]\label{BdefTimecentrality2SIS}
	Let $ w \in \mathbf{L_U} $ be a network input.
	Let $ 0 < N \in \mathbb{N} $.
	Let $ c(x) $ be a non-decreasing total computable function where
	\[
	\myfunc{c}{ \mathbb{N} } { \mathbb{N}^{*} } { x } { c(x)=y }
	\]
	\noindent Let $ \mathfrak{N'}_{BB}(N,f,t_z ,j) = ( G_t, \mathfrak{ P' }_{BB}(N), b_j ) $, where $ 0 \leq z \leq | \mathrm{T}(G_t) | -1 $, be well-defined, where there is $ t_{z_0} \in \mathrm{T}(G_t) $ such that\footnote{ This condition directly assures that this definition of time centrality is well-defined.} 
	
	\begin{align*}
	& \forall x \in { \mathbb{N} }|_{0}^{ |\mathrm{T}(G_t)| - 1 } \Bigg( \lim\limits_{ N \to \infty } 
	\mathbf{E}_{ \mathfrak{N'}_{BB}(N,f,t_x ) } 
	\left(
	{ {\displaystyle{\myDelta_{iso}^{net}} A} (o_i, c( x + f( N, t_x  ) + 2 ) )} \right)
	\leq \\
	& \leq 
	\lim\limits_{ N \to \infty } 
	\mathbf{E}_{ \mathfrak{N'}_{BB}(N,f,t_{z_0} ) } 
	\left(
	{ {\displaystyle{\myDelta_{iso}^{net}} A} (o_i, c( z_0 + f( N, t_{z_0}  ) + 2 ) )} 
	\right)
	\Bigg)
	\end{align*}
	
	We define the central time $ t_{cen_2} $ in generating the \textbf{maximum} expected emergent algorithmic complexity of a node in a network $ \mathfrak{N'}_{BB}(N,f,t_{z_0}, j ) $ during $ c( t_{cen_1}(c) + f( N, t_{cen_1}(c)  ) + 1 ) $ cycles as
	
	\begin{align*}
	& t_{cen_2}(c) = \min \Bigg\{
	t_i \, 
	\Bigg| \,
	\forall x  \Big( 
	\lim\limits_{ N \to \infty } 
	\mathbf{E}_{ \mathfrak{N'}_{BB}(N,f,t_x ) } 
	\left(
	{ {\displaystyle{\myDelta_{iso}^{net}} A} (o_i, c( x + f( N, t_x  ) + 2 ) )} \right)
	\leq \\
	& \leq 
	\lim\limits_{ N \to \infty } 
	\mathbf{E}_{ \mathfrak{N'}_{BB}(N,f,t_{i} ) } 
	\left(
	{ {\displaystyle{\myDelta_{iso}^{net}} A} (o_i, c( i + f( N, t_{i}  ) + 2 ) )} 
	\right)
	\Big)
	\Bigg\}
	\end{align*}
	
	\begin{Bremarknote}\label{BnoteunderdefRelatingtwotimecentralitiesSIS}
		Note that, by definition, since $ | \mathrm{T}(G_t) |  \leq \infty $ and the expected emergent algorithmic complexity is always $ \leq \infty $, if $ t_{cen_1} $ is well-defined, then $ t_{cen_2} = t_{cen_1} $. 
	\end{Bremarknote}
	
	\begin{Bremarknote}[SIS]
		Note that these time centralities depends on fucntion $c$. If function $c$ gives the smaller upper bound $ c( x + f( N, t_x  ) + 2 ) - c_0 -1 $ on achieving stationary prevalence, then $ t_{cen_1} $ or $ t_{cen_2} $ will refer to the central time in generating expected emergent algorithmic complexity of a node in a network $ \mathfrak{N'}_{BB}(N,f,t_{z_0}, j ) $ during the minimum time interval for achieving stationary prevalence.
	\end{Bremarknote}
	
\end{Bdefinition}
%%%%%%%%%%%%%%

\noindent \\

%%%%%%% Extended lemmas and theorems %%%%%%

\subsection{Lemma 1 extended}
% New First Lemma Extended Revised 1.1

\begin{amslemma}\label{lemmaSLLNandAITSIS}
	\noindent \\
	Let $ \mathfrak{N'}_{BB}( N, f, t, \tau, j ) = ( G_t, \mathfrak{P'}_{BB}(N), b_j ) $ be an algorithmic network.\\
	Let $ N = | \mathbf{L}_{ \mathfrak{P'}_{BB}(N) } | $.\\
	Thus, on the average as $N$ grows, we will have that there is a constant $C_4$ such that
	\[ A'_{max} \geq \lg(N) - C_4 \] \\
	
	\begin{noteunderlemma}\label{noteSLLN}
		Let $ \mathbf{P} \left[ \, X = a \, \right] $ be the usual notation for the probability of a random variable $X$ assuming value $a$. Or $ \mathbf{P} \left[ \, \mathit{statement \; S} \, \right] $ denote the probability of a $ statement \; S $ be true.
		Thus, this theorem is formaly given by the strong law of large numbers \cite{Billingsley2012} as: there is a constant $C_4$ such that 
		
		\[
		\mathbf{P} \left[ \, \lim\limits_{ N \to \infty } A'_{max} - \lg(N) + C_4  \geq 0 \, \right] = 1
		\] 
		
		In fact, the main results \ref{thmMainSIS} and \ref{thmMainCentralTimeSIS} presented in this article can be translated into such probabilistic form by putting their last statements into the square brackets like we just did to the above. For example, the reader is invited to check that, for finite subsets of $ \mathbf{L_U} $, the strong law of large numbers on a re-normalized probability distribution in the form $ \frac{1}{C \, 2^{|p|}} $ straightforwardly holds. Hence, in the limit as the size of this subset tends to $\infty$, a multiplicative term that tends to $1$ or an additive term that tends to $0$ would appear in Lemmas~\ref{lemmaSLLNandAITSIS}, \ref{lemmaComplexityonHaltp_iSIS} and \ref{lemmaComplexityonBarHaltSIS} and Theorem~\ref{thmMainSIS}. For the sake of simplifying the notation and shortening the formulas, we have chosen to state our results without using this probabilistic form. \\
	\end{noteunderlemma}
	
	\begin{proof}
		\noindent \\
		From AIT, we know that the algorithmic probability of occurring a program $ p \in \mathbf{L_U} $ is
		
		\begin{equation}
		\frac
		{ 1 }
		{ 2^{ | p | } }
		\end{equation} \\
		
		Let $\phi_{N}(p)$ be the frequency that $p$ occurs in a random sample of size $N$. In the case, this random sample is the randomly generated population $ \mathfrak{P} $.\\
		
		Define a Bernoulli trial on a random variable $ \mathbf{X} $ that assumes value $1$ \textit{iff} program $p$ occurs and assumes value $0$ \textit{iff} otherwise. Since this random sample is identically distributed and/or define a binomial distribution where $ \sum\limits_{n=1}^{\infty} \frac{ Var[ X_n ] }{n^2} < \infty $, we will have from the \textbf{strong law of large numbers} that 
		
		\begin{equation}\label{stepSLLN}
		\mathbf{P} \left[ \, \lim\limits_{ N \to \infty } \phi_{N}(p) = \frac{ 1 }{ 2^{ | p | } } \, \right]
		=
		1
		\end{equation} \\
		
		Thus, when $N$ is large enough, one expects that $p$ occurs $ \frac{ N }{ 2^{ | p | } } $ times within $N$ random tries. That is, since $p$ was arbitrary, the probability distribution in $N$ random tries tends to match the algorithmic probability distribution on $\mathbf{L_U}$ when $N$ goes to $\infty$. \\
		
		Let $BB(k)$ be the Busy Beaver value for an arbitrary large enough $ k \in \mathbb{N} $ defined on machine $ \mathbf{U} $. We choose, for example, the definition of the Busy Beaver function in which $ BB(k) $ gives the biggest value that a program $ p \in \mathbf{L_U} $, where $ |p| \leq k $, returns when running on machine $ \mathbf{U} $. \\
		
		From AIT~\cite{Chaitin2012,Chaitin2004,Li1997}, we know there are constants $C_{\Omega}, C_{BB} \geq 0 $ and a program $p_{BB(k)} \in \mathbf{L_U} $ such that, for every $ w \in \mathbf{L_U} $,
		
		\[
		\mathbf{U}( p_{BB(k)} \circ w) = \mathbf{U}( p_{BB(k)} ) = BB(k)
		\]
		
		\begin{equation}\label{stepAITmagic}
		\text{and}
		\end{equation}
		
		\[
		k - C_{\Omega}  \leq A(BB(k)) \leq | p_{BB(k)} | \leq k + C_{BB} 
		\]

		\begin{equation*}
		\text{and}
		\end{equation*}

		\[
		\forall x \geq BB(k) \; \Big( \, A(x)  \geq k - C_{\Omega} \, \Big)
		\] \\
		
		Since $k$ was arbitrary, let $ k = \lg(N) - C_{BB} $. \\
		
		From Step \eqref{stepSLLN} we will have that, when $N$ is large enough, one should expect that $ p_{BB( \lg(N) - C_{BB} )} $ occurs at least $ \frac{ N }{ 2^{ | p_{BB( \lg(N) - C_{BB} )} | } }  $ times where
		
		\[ 
		\frac{ N }{ 2^{ | p_{BB( \lg(N) - C_{BB} )} | } } 
		\geq
		\frac{ N }{ 2^{ \lg(N) } } 
		=
		1
		\]

		\noindent That is, from conditional probabilities,
		
		\begin{equation}
		\mathbf{P} \left[ \, \lim\limits_{ N \to \infty } \phi_{N}( p_{BB( \lg(N) - C_{BB} )} ) N \geq \frac{ N }{ 2^{ \lg(N) } } = 1 \, \right]
		\geq
		\end{equation}
		\[
		\geq \mathbf{P} \left[ \, \lim\limits_{ N \to \infty } \phi_{N}(p_{BB( \lg(N) - C_{BB} )})N = \frac{ N }{ 2^{ | p_{BB( \lg(N) - C_{BB} )} | } } \, \right]
		=
		1
		\] \\
		
		Let $ C_4 =  2 \, C_{\Omega} + 2 \, C_{BB} $. %\footnote{ In fact, one can make $  C_4 =  C_{\Omega} + 2 \, C_{BB} + \max\{C_U, C_{\Omega} \} $  in order to achieve a better approximation in Step \eqref{stepInequalitieschain}. } \\
		
		From Definitions \ref{BdefL'_BB}, \ref{BdefA'_max} and \ref{BdefN'_BB} and Step \eqref{stepAITmagic} , since any node/program count as isolated from the network when $c=1$, we will have that, for large enough $ N $, \\
		\begin{center}if\end{center}
		
		\[
		 \phi_{N}( p_{BB( \lg(N) - C_{BB} )} ) N \geq  1
		\]
		
		\begin{equation} \label{stepInequalitieschain}
		\text{then}
		\end{equation}
		
		\begin{align*}
		& A'_{max} 
		\geq \\
		& \geq 
		\lg(N) - C_{BB} - C_{\Omega} 
		\geq \\
	    & \geq
	    \big| p_{BB( \lg(N) - C_{BB} )} \big| - C_{BB} - C_{\Omega} 
	    \geq \\
	    & \geq 
	    A(BB( \lg(N) - C_{BB} )) - C_{BB} - C_{\Omega}   = \\
	    & =
		A(\mathbf{U}( P_{prot'_R} \circ p_{BB( \lg(N) - C_{BB} )} )) - C_{BB} - C_{\Omega} 
		= \\
		&=
		A(\mathbf{U}(p_{BB( \lg(N) - C_{BB} )})) - C_{BB} - C_{\Omega}  = \\
		& =
		A(BB( \lg(N) - C_{BB} )) - C_{BB} - C_{\Omega} \geq \\
		& \geq 
		\lg(N) - C_{BB} - C_{\Omega} - C_{BB} - C_{\Omega} 
		= \\
		& = 
		\lg(N) - C_4  
		\end{align*} \\
		
		Thus, from conditional probabilities, we will have that
		
		\[
		\mathbf{P} \left[ \, \lim\limits_{ N \to \infty } A'_{max} - \lg(N) + C_4  \geq 0 \, \right] \geq
		\]
		\[
		\geq
		\mathbf{P} \left[ \, \lim\limits_{ N \to \infty } \phi_{N}( p_{BB( \lg(N) - C_{BB} )} ) N \geq 1 \, \right]
		=
		1
		\]
	\end{proof}
\end{amslemma}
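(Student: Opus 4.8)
The plan is to combine the strong law of large numbers applied to the randomly generated population with the standard incompressibility bounds on the Busy Beaver function. The guiding intuition is that a population of size $N$ whose members are drawn according to algorithmic probability $\tfrac{1}{2^{|p|}}$ will, for large $N$, almost surely contain \emph{every} program short enough that its expected multiplicity $N/2^{|p|}$ is at least one; in particular it will contain a program of length about $\lg(N)$ that computes a Busy-Beaver-sized integer, and such an integer cannot be compressed much below $\lg(N)$ bits.

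First I would set up the probabilistic core. Since $\mathfrak{P'}_{BB}(N)$ is randomly generated (Definition~\ref{BdefRandompopulation}), each suffix $p \in \mathbf{L_U}$ occurs in a single trial with probability $\tfrac{1}{2^{|p|}}$. Writing $\phi_N(p)$ for the empirical frequency of $p$ in the sample of size $N$ and viewing each trial as a Bernoulli variable, the strong law of large numbers yields
\[
\mathbf{P}\!\left[\, \lim_{N\to\infty} \phi_N(p) = \frac{1}{2^{|p|}} \,\right] = 1,
\]
so that for large $N$ the program $p$ appears about $N/2^{|p|}$ times.

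Next I would import the AIT facts about the Busy Beaver. Fixing $k \in \mathbb{N}$ and letting $BB(k)$ be the largest value output by any program of length $\leq k$ on $\mathbf{U}$, algorithmic information theory supplies constants $C_{BB}, C_{\Omega} \geq 0$ and a program $p_{BB(k)}$ with $|p_{BB(k)}| \leq k + C_{BB}$ such that $\mathbf{U}(p_{BB(k)}) = BB(k)$, together with the incompressibility bound $A(x) \geq k - C_{\Omega}$ for \emph{every} $x \geq BB(k)$. I would then choose $k = \lg(N) - C_{BB}$, so that $|p_{BB(k)}| \leq \lg(N)$ and hence $N/2^{|p_{BB(k)}|} \geq 1$; combined with the previous step this guarantees that $p_{BB(k)}$ occurs at least once in the population almost surely as $N \to \infty$. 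Because a node running $p_{BB(k)}$ (which ignores its network input) outputs $BB(k)$ in the first cycle, the biggest first-cycle output is at least $BB(k)$, and the uniform incompressibility bound gives $A'_{max} \geq k - C_{\Omega} = \lg(N) - C_{BB} - C_{\Omega}$. Setting $C_4 = 2C_{\Omega} + 2C_{BB}$ absorbs all constants, with room to spare for the $\mathbf{U} \to \mathbf{U'}_R$ translation (which changes complexity only additively), and yields $A'_{max} \geq \lg(N) - C_4$.

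The step I expect to be delicate is the passage from ``the largest output is at least $BB(k)$'' to a lower bound on its \emph{complexity}. Since Kolmogorov complexity is not monotone in the value of the integer, I cannot simply argue that a value exceeding $BB(k)$ is at least as hard to describe as $BB(k)$ itself; this is exactly why the uniform bound $A(x) \geq k - C_{\Omega}$ for all $x \geq BB(k)$ is required rather than merely $A(BB(k)) \geq k - C_{\Omega}$. That uniform bound is the usual counting argument: if some $x \geq BB(k)$ had complexity below $k - C_{\Omega}$, then from its short program one could compute an integer at least $BB(k)$ using fewer than $k$ bits of description, contradicting the maximality defining $BB(k)$. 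The remaining care is bookkeeping: casting everything in the almost-sure form of Note~\ref{noteSLLN} and checking that the renormalising constants vanish in the limit, which leaves the asymptotic inequality unaffected.
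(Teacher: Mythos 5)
Your proposal is correct and follows essentially the same route as the paper's own proof: the strong law of large numbers guarantees that the program $p_{BB(\lg(N)-C_{BB})}$ of length at most $\lg(N)$ appears in the randomly generated population, and the uniform incompressibility bound $A(x)\geq k-C_{\Omega}$ for all $x\geq BB(k)$ then lower-bounds $A'_{max}$, with $C_4=2C_{\Omega}+2C_{BB}$ absorbing the constants. The delicate point you single out (needing the uniform bound rather than monotonicity of complexity) is exactly the bound the paper invokes at its Step \eqref{stepAITmagic}, so there is no substantive divergence.
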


\noindent \\

\subsection{Lemma 2 extended}
% New Second Lemma Extended Revised 1

\begin{amslemma}\label{lemmaComplexityp_iSIS}
	Given a population $ \mathfrak{P'}_{BB}(N) $ defined in \ref{BdefP'_BB} , where \\ $ p_i \in Halt_{iso}( \mathbf{L}_{ \mathfrak{P'}_{BB}(N) }, w, c ) $ and $ P_{prot'_R} \circ p_i \in \mathfrak{P'}_{BB}(N) $ and $ N \in \mathbb{N^*} $ is arbitrary, there is a constant $C_1$ such  that 
	
	\begin{align*}
	A (\mathrm{\textbf{U}}(p_{iso} ( p_i , c )))
	\leq
	C_1 + |p_i| + A(w) + A(c)
	\end{align*} \\
	
	\begin{noteunderlemma}
		Note that from the Definition \ref{BdefEAConN'_BB} this result is independent of any topology in which $ \mathfrak{P'}_{BB}(N) $ could be networked.
	\end{noteunderlemma}
	
	\begin{proof}
		\noindent \\
		Let $ N \in \mathbb{N^*} $ be arbitrary.
		Remember the definition of $ \mathbf{L'}_{BB} $ in \ref{BdefL'_BB}. And note that $p_i$ is a program in $ \mathbf{L_U} $. \\
		Then, from Definition~\ref{BdefHaltSIS}, there is at least one program $p'$ such that 
		
		\[
		\mathrm{\textbf{U}}(p_{iso} ( p_i , c ))
		=			 \mathbf{U} (p' \circ p_i \circ w \circ c)
		\] 
		is a well-defined value for every $p_i \in Halt_{iso}( \mathbf{L}_{ \mathfrak{P'}_{BB}(N) }, w, c ) $.
		
		Take the smallest such program $p'$ and let $ C_1 = |p'| + C_{\circ 4} $, where from AIT there is constant $ C_{\circ 4} $ such that
		\[
		A ( \mathbf{U} ( p' \circ p_i \circ w \circ c ) )
		\leq
		C_{\circ 4} + |p'| + |p_i| + A(w) + A(c)
		\]
		
		Then, from AIT, we will have that
		
		\[
		A ( \mathrm{\textbf{U}}(p_{iso} ( p_i , c )) )			 =
		A ( \mathbf{U} (p' \circ p_i \circ w \circ c) )
		\leq
		C_1 + |p_i| + A(w) + A(c)
		\] \\
	\end{proof}
\end{amslemma}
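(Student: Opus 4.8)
The plan is to exhibit a single fixed program that, taking $p_i$, $w$, and $c$ as inputs, reproduces the entire isolated trajectory of $p_i$ over $c$ cycles and returns its final output, and then to read off the claimed bound from the standard subadditivity of prefix complexity under self-delimiting concatenation.

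First I would unpack the isolated dynamics from Definition~\ref{BdefL'_BB}: when $P_{prot'_R}\circ p_i$ is isolated one has $\mathbf{U'}_R(P_{prot'_R}\circ p_i)=\mathbf{U'}_R(p_i)$, and every subsequent cycle merely reiterates the previous partial output as the immediate next input to the same $p_i$. Hence the whole $c$-cycle isolated computation is nothing but iterating $p_i$ on the network input $w$ a total of $c$ times, feeding each partial output back in. Because $p_i\in Halt_{iso}(\mathbf{L}_{\mathfrak{P'}_{BB}(N)},w,c)$, this iteration halts at every cycle up to $c$, so $\mathbf{U}(p_{iso}(p_i,c))$ is a well-defined value; this is exactly why the halting hypothesis appears in the statement, and it is also what makes the bound independent of any topology, as noted below the lemma.

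Next I would fix, once and for all, a self-delimiting simulator $p'$ that decodes its (self-delimiting) arguments, runs the reiteration described above for $c$ cycles, and outputs the resulting final value, so that $\mathbf{U}(p_{iso}(p_i,c))=\mathbf{U}(p'\circ p_i\circ w\circ c)$. The key point is that $p'$ is a single fixed program, independent of $p_i$, $w$, $c$, and $N$, so its length contributes only a constant. To obtain the complexities $A(w)$ and $A(c)$ rather than the raw lengths $|w|$ and $|c|$, I would instead feed $p'$ the literal program $p_i$ together with a shortest program for $w$ and a shortest program for $c$, and have $p'$ first run $\mathbf{U}$ on those shortest programs to recover $w$ and $c$. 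Since concatenation in $\mathbf{L_U}$ preserves the prefix-free property (Notation~\ref{BdefConcatenation}), the resulting string is a legitimate self-delimiting program of length $|p'|+|p_i|+A(w)+A(c)+\mathbf{O}(1)$.

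Finally I would invoke the standard AIT upper bound that the prefix complexity of any output is at most the length of any program producing it, up to a fixed concatenation overhead $C_{\circ}$. Setting $C_1=|p'|+C_{\circ}$ then yields $A(\mathbf{U}(p_{iso}(p_i,c)))\leq C_1+|p_i|+A(w)+A(c)$, as required. There is no deep obstacle here; the only thing requiring care is that the simulator $p'$ genuinely realises the isolated semantics of Definition~\ref{BdefL'_BB} and that all four pieces are concatenated in a prefix-free way, so that the subadditivity bound applies cleanly and the halting hypothesis guarantees the simulated value exists.
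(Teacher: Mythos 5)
Your proposal is correct and follows essentially the same route as the paper's proof: exhibit a fixed simulator $p'$ realising the isolated reiteration semantics of Definition~\ref{BdefL'_BB}, concatenate it prefix-freely with $p_i$ and (shortest programs for) $w$ and $c$, and apply subadditivity of prefix complexity, with the halting hypothesis guaranteeing the value is defined. Your version is in fact slightly more careful than the paper's, which writes $p'\circ p_i\circ w\circ c$ literally yet states the bound with $A(w)$ and $A(c)$; your step of feeding shortest programs for $w$ and $c$ makes that passage explicit.
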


\noindent \\

\subsection{Lemma 3 extended}
% New Third Lemma Extended Revised 1

\begin{amslemma}\label{lemmaGibbsandalgorithmicentropySIS}
	Given a population $ \mathfrak{P'}_{BB}(N) $ defined in \ref{BdefP'_BB} , where $ p_i \in \mathbf{L_U} $ and $ P_{prot'_R} \circ p_i \in \mathfrak{P'}_{BB}(N) $, from AIT and Gibb's (or Jensen's) inequality, we will have that
	
	\[
	\frac
	{1}
	{ \Omega(w,c) }
	\left( \lim_{ N \to \infty } \sum_{ p_i \in Halt_{iso}( \mathbf{L_U}(N), w, c ) } \frac
	{ |p_i| }
	{ 2^{ |p_i| } } \right)
	+ 
	\lg( \Omega(w,c) )
	\leq
	\lim_{ N \to \infty }
	\lg(\Omega(w,c) N) 
	\] \\
	
	\begin{proof}
		\noindent \\
		Since $\mathbf{L_U}$ is self-delimited, from AIT and the Definition \ref{BdefOmegawcSIS} , we will have that
		\begin{equation}
		\lim_{ N \to \infty }
		\sum_{ p_i \in Halt_{iso}( \mathbf{L_U}(N), w, c ) } \frac
		{1}
		{ 2^{ |p_i| } }
		=
		\Omega(w,c)
		<
		1
		\end{equation} \\
		
		Hence,
		\begin{equation}\label{stepNormalizationofomegawc}
		\lim_{ N \to \infty }
		\sum_{ p_i \in Halt_{iso}( \mathbf{L_U}(N), w, c ) } \frac
		{1}
		{ \Omega(w,c) 2^{ |p_i| } }
		=
		\lim_{ N \to \infty }
		\sum_{ p_i \in Halt_{iso}( \mathbf{L_U}(N), w, c ) } \frac
		{1}
		{ 2^{ |p_i| + \lg( \Omega(w,c) ) } }
		=
		1
		\end{equation} \\
		
		Thus, from Step \eqref{stepNormalizationofomegawc} ,

		\begin{equation}\label{stepNormalizedalgorithmicentropy}
		\lim_{ N \to \infty }
		\sum_{ p_i \in Halt_{iso}( \mathbf{L_U}(N), w, c ) } \frac
		{ |p_i| + \lg(\Omega(w,c)) }
		{ 2^{ |p_i| + \lg(\Omega(w,c)) } }
		= 
		\end{equation}
		
		\[
		= \lim_{ N \to \infty }
		\sum_{ p_i \in Halt_{iso}( \mathbf{L_U}(N), w, c ) } \frac
		{ |p_i| }
		{ 2^{ |p_i| } \Omega(w,c) }
		+
		\lim_{ N \to \infty }
		\sum_{ p_i \in Halt_{iso}( \mathbf{L_U}(N), w, c ) } \frac
		{ \lg(\Omega(w,c)) }
		{ 2^{ |p_i| } \Omega(w,c) }
		=
		\]
		
		\[
		= \frac
		{1}
		{ \Omega(w,c) }
		\left( \lim_{N \to \infty} \sum_{ p_i \in Halt_{iso}( \mathbf{L_U}(N), w, c ) } \frac
		{ |p_i| }
		{ 2^{ |p_i| } } \right)
		+ 
		\lg( \Omega(w,c) )
		\] \\
		
		Since \eqref{stepNormalizationofomegawc} holds, from Gibb's (or Jensen's) inequality \cite{Cover2005} \cite{MacKay2005}, we will have that
		
		\begin{equation}\label{stepGibbsinequality}
		\lim_{ N \to \infty }
		\sum_{ p_i \in Halt_{iso}( \mathbf{L_U}(N), w, c ) } \frac
		{ |p_i| + \lg(\Omega(w,c)) }
		{ 2^{ |p_i| + \lg(\Omega(w,c)) } }
		\leq
		\lim_{ N \to \infty }
		\lg( | Halt_{iso}( \mathbf{L_U}(N), w, c ) | ) 
		\end{equation} \\
		
		We also have by Definitions \ref{BdefOmegawcSIS} and \ref{BdefL_USIS} and by the law of large numbers as in Lemma~\ref{lemmaSLLNandAITSIS} that\footnote{In fact, due to the repetitions in the population, one may have that $ | \mathbf{L_U}(N) | < N $. Thus, out final results can be even improved in the future. }
		
		\begin{equation}\label{stepLimitofHalt}
		\lim_{ N \to \infty }
		\lg( | Halt_{iso}( \mathbf{L_U}(N), w, c ) | )
		\leq
		\lim_{ N \to \infty }
		\lg( \Omega(w,c) \left| \mathbf{L_U}(N)  \right| )
		\leq
		\lim_{ N \to \infty }
		\lg( \Omega(w,c) N )
		\end{equation} \\
		
		Therefore, from Steps \eqref{stepNormalizedalgorithmicentropy} , \eqref{stepGibbsinequality} and \eqref{stepLimitofHalt} we will have that
		
		\[
		\frac
		{1}
		{ \Omega(w,c) }
		\left( \lim_{ N \to \infty } \sum_{ p_i \in Halt_{iso}( \mathbf{L_U}(N), w, c ) } \frac
		{ |p_i| }
		{ 2^{ |p_i| } } \right)
		+ 
		\lg( \Omega(w,c) )
		\leq
		\lim_{ N \to \infty }
		\lg(\Omega(w,c) N)	
		\]
	\end{proof}
\end{amslemma}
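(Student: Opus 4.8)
The plan is to recognize the left-hand side as the Shannon entropy of a suitably normalized algorithmic probability distribution over the isolated halting programs, to bound that entropy by $\lg$ of the number of such programs via Gibbs' (equivalently Jensen's) inequality, and finally to control this count through the strong law of large numbers established in Lemma~\ref{lemmaSLLNandAITSIS}.

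First I would establish the normalization. Since $\mathbf{L_U}$ is self-delimiting (prefix-free), the Kraft inequality forces the algorithmic probabilities $2^{-|p_i|}$ to sum below $1$, and by Definition~\ref{BdefOmegawcSIS} this sum is exactly
\[
\lim_{N\to\infty} \sum_{p_i \in Halt_{iso}( \mathbf{L_U}(N), w, c )} \frac{1}{2^{|p_i|}} = \Omega(w,c) < 1 .
\]
Dividing through by $\Omega(w,c)$ then produces a genuine probability distribution
\[
q_i = \frac{1}{\Omega(w,c)\, 2^{|p_i|}} = \frac{1}{2^{\,|p_i| + \lg(\Omega(w,c))}},
\]
whose terms sum to $1$ in the limit, since $\lg$ absorbs the factor $\Omega(w,c)^{-1}$ into the exponent.

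Next I would identify the target expression as the entropy of $\{q_i\}$. Writing $\lg(1/q_i) = |p_i| + \lg(\Omega(w,c))$ and expanding the entropy $\sum q_i \lg(1/q_i)$ splits, by linearity of the limit and the sum, into
\[
\frac{1}{\Omega(w,c)}\left( \lim_{N\to\infty} \sum_{p_i \in Halt_{iso}( \mathbf{L_U}(N), w, c )} \frac{|p_i|}{2^{|p_i|}} \right) + \lg(\Omega(w,c)),
\]
which is precisely the left-hand side of the claim. Applying Gibbs' inequality by comparing $\{q_i\}$ to the uniform distribution on $| Halt_{iso}( \mathbf{L_U}(N), w, c )|$ outcomes — i.e. the entropy of any distribution supported on $k$ outcomes is at most $\lg(k)$ — bounds the expression above by $\lim_{N\to\infty} \lg\big( | Halt_{iso}( \mathbf{L_U}(N), w, c )| \big)$.

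Finally I would bound the count of halting programs. Invoking the strong law of large numbers exactly as in Lemma~\ref{lemmaSLLNandAITSIS}, as $N \to \infty$ each program $p$ occurs with empirical frequency matching its algorithmic probability $2^{-|p|}$, so the number of halting elements is at most $\Omega(w,c)\, |\mathbf{L_U}(N)| \leq \Omega(w,c)\, N$; chaining this with the entropy bound yields the stated inequality. The hard part will be justifying the count bound $| Halt_{iso}( \mathbf{L_U}(N), w, c )| \leq \Omega(w,c)\, N$ in the limit, because $\mathbf{L_U}(N)$ is the size-ordered language of smallest suffixes and may contain strictly fewer than $N$ distinct elements; this forces one to rely on the frequency-matching supplied by the law of large numbers rather than on a naive cardinality argument, and to handle the interchange of the limit with the $\lg$ and with the (finite-support, then limiting) sum carefully so that the additive correction terms vanish.
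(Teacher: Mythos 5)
Your proposal is correct and follows essentially the same route as the paper's own proof: normalizing the algorithmic probabilities by $\Omega(w,c)$, recognizing the left-hand side as the Shannon entropy of the resulting distribution, bounding that entropy by $\lg$ of the support size via Gibbs' (Jensen's) inequality, and then controlling $\left| Halt_{iso}( \mathbf{L_U}(N), w, c ) \right|$ by $\Omega(w,c)\,N$ through the strong law of large numbers as in Lemma~\ref{lemmaSLLNandAITSIS}. Even your closing caveat --- that $\left| \mathbf{L_U}(N) \right|$ may be strictly smaller than $N$ because of repetitions in the population, so the cardinality bound must come from the frequency-matching argument --- mirrors the paper's own footnote on exactly this point.
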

\noindent \\

\subsection{Lemma 4 extended}
% New Fourth Lemma Extended Revised 1

\begin{amslemma}\label{lemmaComplexityonHaltp_iSIS}
	Given a population $ \mathfrak{P'}_{BB}(N) $ defined in \ref{BdefP'_BB} , where $ p_i \in \mathbf{L_U} $ and $ P_{prot'_R} \circ p_i \in \mathfrak{P'}_{BB}(N) $, from Lemma \ref{lemmaGibbsandalgorithmicentropySIS} we will have that
	
	\[
	\lim_{ N \to \infty }
	\frac{
		\sum\limits_{ b_j } 
		\frac{
			\sum\limits_{ p_i \in Halt_{iso}( \mathbf{L}_{\mathfrak{P'}_{BB}(N)} , w, c )  } 
			{ | p_i | }
		}
		{N}
	}
	{ | \{ b_j \} | }
	\leq
	\lim_{ N \to \infty }
	\Omega(w,c) \lg(N)
	\] \\
	
	\begin{noteunderlemma}
		This theorem gives an upper bound for the algorithmic complexity of the randomly generated part of the elements of the population $ \mathfrak{ P' }_{BB}(N) $. And it will be crucial to prove a lower bound for the emergent algorithmic complexity. However, the upper bound of Lemma \ref{lemmaComplexityonHaltp_iSIS} is overestimated since an algorithmic probability distribution is far from being uniform, which is the case where Gibb's equality applies on entropies.
	\end{noteunderlemma}

	\begin{proof}
		\noindent \\
		
		From the definition of language $ \mathbf{L'}_{BB} $ in \ref{BdefL'_BB} we have that $p_i$ is independent of any topology, so that
		
		\begin{equation} \label{stepTopologicalindependenceofp_i}
		\frac
		{
			\sum\limits_{\tiny b_j } 
			\frac{
				\sum\limits_{ p_i \in Halt_{iso}( \mathbf{L}_{\mathfrak{P'}_{BB}(N)}, w, c )  } 
				{ | p_i | }
			}
			{N}
		}
		{ | \{ b_j \} | }
		=
		\end{equation}
		\[
		=
		\frac{
			\sum\limits_{ p_i \in Halt_{iso}( \mathbf{L}_{\mathfrak{P'}_{BB}(N)}, w, c )  } 
			{ | p_i | }
		}
		{N}
		\] \\
		
		And from Definition~\ref{BdefL_USIS} we have that the ramdonly generated population $ \mathbf{L}_{\mathfrak{P'}_{BB}(N)} $ tends to include all programs in $ \mathbf{L_U} $ in the limit.		
		Since one can define algorithmic probabilities in $\mathbf{L_U}$, by the Strong Law of Large Numbers, as in Lemma~\ref{lemmaSLLNandAITSIS} , we have that in the limit $\mathbf{L_U}(N)$ tends to follow the same distribution.
		
		Thus, from Definition~\ref{BdefHaltSIS}, we will have that\footnote{ Note that, since $ | Halt_{iso}( \mathbf{L_U}(N), w, c ) | < N $, then $ \sum\limits_{ p_i \in Halt_{iso}( \mathbf{L_U}, w, c )  } 
			\frac
			{ 1 }
			{ 2^{ |p_i| }  } = \Omega(w,c) < 1$. Thus, $ \lim\limits_{ N \to \infty } \sum\limits_{ p_i \in Halt_{iso}( \mathbf{L_U}(N), w, c )  } 
			\frac
			{ |p_i| }
			{ 2^{ |p_i| }  } $ is not a proper Shannon entropy. Also note that $ \mathbf{L}_{\mathfrak{P'}_{BB}(N)} $ may contain equal $ p_i $'s. However, in $ \mathbf{L_U}(N) $, each $p_i$ is unique, since it is a language and not a population. }
		
		\begin{equation} \label{stepSLLNonalgorithmicentropy}
		\lim_{ N \to \infty }
		{ \frac{1}{N} }
		\sum\limits_{ p_i \in Halt_{iso}( \mathbf{L}_{\mathfrak{P'}_{BB}(N)}, w, c )  } { | p_i | }
		=
		\lim_{ N \to \infty }
		{ \frac{1}{N} }
		\sum\limits_{ p_i \in Halt_{iso}( \mathbf{L_U}(N), w, c )  } 
		\frac
		{ N \, |p_i| }
		{ 2^{ |p_i| }  }
		=
		\end{equation}
		\[
		=
		\lim_{ N \to \infty }
		\sum\limits_{ p_i \in Halt_{iso}( \mathbf{L_U}(N), w, c )  } 
		\frac
		{ |p_i| }
		{ 2^{ |p_i| }  } 
		\] \\
		
		From Lemma \ref{lemmaGibbsandalgorithmicentropySIS} we will have that
		
		\begin{equation} \label{stepLemmaGibbsandalgorithmicentropy}
		\lim_{ N \to \infty } \sum_{ p_i \in Halt_{iso}( \mathbf{L_U}(N), w, c ) } \frac
		{ |p_i| }
		{ 2^{ |p_i| } }
		\leq
		\lim_{ N \to \infty }
		\Omega(w,c) \lg( \Omega(w,c) N )		
		-
		\Omega(w,c) \lg( \Omega(w,c) )
		\end{equation} \\
		
		And
		
		\begin{equation} \label{stepSimplifyingLemmaGibbsandalgorithmicentropy}
		\Omega(w,c) \lg( \Omega(w,c) N )		
		-
		\Omega(w,c) \lg( \Omega(w,c) )
		=
		\end{equation}
		\[
		=
		\Omega(w,c) \lg( \Omega(w,c))
		+
		\Omega(w,c) \lg( N )
		-
		\Omega(w,c) \lg( \Omega(w,c) )=
		\]		
		\[
		=
		\Omega(w,c) \lg(N)
		\] \\
		
		So, from Steps \eqref{stepTopologicalindependenceofp_i} , \eqref{stepSLLNonalgorithmicentropy} , \eqref{stepLemmaGibbsandalgorithmicentropy} and \eqref{stepSimplifyingLemmaGibbsandalgorithmicentropy}
		
		\[
		\lim_{ N \to \infty }
		\frac
		{
			\sum\limits_{\tiny b_j } 
			\frac{
				\sum\limits_{ p_i \in Halt_{iso}( \mathbf{L}_{\mathfrak{P'}_{BB}(N)}, w, c )  } 
				{ | p_i | }
			}
			{N}
		}
		{ | \{ b_j \} | }
		=
		\]
		\[
		=
		\lim_{ N \to \infty } \sum_{ p_i \in Halt_{iso}( \mathbf{L_U}(N), w, c ) } \frac
		{ |p_i| }
		{ 2^{ |p_i| } }
		\leq
		\Omega(w,c) \lg(N)
		\]
	\end{proof}
\end{amslemma}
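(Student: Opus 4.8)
The plan is to reduce the doubly-averaged left-hand side to the algorithmic-probability expectation $\lim_{N\to\infty}\sum_{p_i\in Halt_{iso}(\mathbf{L_U}(N),w,c)}|p_i|/2^{|p_i|}$ and then to invoke Lemma~\ref{lemmaGibbsandalgorithmicentropySIS} directly, simplifying the resulting logarithms. First I would eliminate the outer average over the functions $b_j$: by the definition of the language $\mathbf{L'}_{BB}$ (Definition~\ref{BdefL'_BB}), each randomly generated suffix $p_i$ is independent of the topology encoded by $b_j$, so whether $p_i$ lies in $Halt_{iso}(\mathbf{L}_{\mathfrak{P'}_{BB}(N)},w,c)$ and its length $|p_i|$ do not depend on $b_j$. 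Hence the inner quantity $\frac{1}{N}\sum_{p_i}|p_i|$ is constant in $b_j$, and summing over $b_j$ and dividing by $|\{b_j\}|$ leaves it unchanged, collapsing the left-hand side to $\frac{1}{N}\sum_{p_i\in Halt_{iso}(\mathbf{L}_{\mathfrak{P'}_{BB}(N)},w,c)}|p_i|$.

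Next I would pass from this finite population average to the algorithmic expectation. By the strong law of large numbers as established in Lemma~\ref{lemmaSLLNandAITSIS}, in the limit the randomly generated population $\mathbf{L}_{\mathfrak{P'}_{BB}(N)}$ reproduces the algorithmic probability distribution on $\mathbf{L_U}$, so each program $p$ is expected to occur $N/2^{|p|}$ times. Consequently $\frac{1}{N}\sum_{p_i\in Halt_{iso}(\mathbf{L}_{\mathfrak{P'}_{BB}(N)},w,c)}|p_i|$ converges to $\sum_{p_i\in Halt_{iso}(\mathbf{L_U}(N),w,c)}|p_i|/2^{|p_i|}$, where (by Definition~\ref{BdefL_USIS}) the sum now ranges over the distinct programs of $\mathbf{L_U}(N)$ rather than over the population with its repetitions.

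Finally I would apply Lemma~\ref{lemmaGibbsandalgorithmicentropySIS}, which gives $\frac{1}{\Omega(w,c)}\left(\lim_{N\to\infty}\sum|p_i|/2^{|p_i|}\right)+\lg(\Omega(w,c))\le\lim_{N\to\infty}\lg(\Omega(w,c)N)$. Multiplying through by $\Omega(w,c)$ and using the identity $\lg(\Omega(w,c)N)-\lg(\Omega(w,c))=\lg(N)$ cancels the two $\lg(\Omega(w,c))$ contributions and yields $\lim_{N\to\infty}\sum|p_i|/2^{|p_i|}\le\lim_{N\to\infty}\Omega(w,c)\lg(N)$; chaining this with the two reductions above establishes the claim.

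The delicate step is the convergence in the second paragraph. One must keep in mind that $\sum_{p_i}|p_i|/2^{|p_i|}$ is \emph{not} a genuine Shannon entropy, since the algorithmic probabilities of the halting programs sum to $\Omega(w,c)<1$ rather than to $1$; this is exactly why Lemma~\ref{lemmaGibbsandalgorithmicentropySIS} first renormalizes by $\Omega(w,c)$ before applying Gibbs' inequality, and why the factor $\Omega(w,c)$ persists in the final bound. Tracking this renormalization correctly, together with the fact that $|\mathbf{L_U}(N)|$ may be strictly smaller than $N$ owing to repetitions in the population, is the main bookkeeping hurdle; the overestimation it introduces (Gibbs' inequality is tight only for uniform distributions, which the algorithmic distribution is far from) is harmless here because only an upper bound is needed.
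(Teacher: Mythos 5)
Your proposal is correct and follows essentially the same route as the paper's proof: collapsing the average over $b_j$ by topological independence of the suffixes, invoking the strong law of large numbers to replace the population average by the algorithmic-probability expectation over $\mathbf{L_U}(N)$, and then applying Lemma~\ref{lemmaGibbsandalgorithmicentropySIS} followed by the cancellation $\lg(\Omega(w,c)N)-\lg(\Omega(w,c))=\lg(N)$. Your closing remarks on the renormalization by $\Omega(w,c)$ and the non-entropy nature of the sum mirror the paper's own footnote and accompanying note.
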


\noindent \\

\subsection{Lemma 5 extended}
% New Fifth Lemma Extended Revised 1

\begin{amslemma}\label{lemmaComplexityonBarHaltSIS}
	Given a population $ \mathfrak{P'}_{BB}(N) $ defined in \ref{BdefP'_BB} , where $ p_i \in \mathbf{L_U} $ and $ P_{prot'_R} \circ p_i = o_i \in \mathfrak{P'}_{BB}(N) $, there is a constant $C_0$ such that
	
	\[
	\lim_{ N \to \infty }
	\frac{
		\sum\limits_{ b_j } 
		\frac{
			\sum\limits_{ p_i \in \overline{ Halt }_{iso} ( \mathbf{L}_{\mathfrak{P'}_{BB}(N)}, w, c )  } 
			{ A ( \mathrm{\textbf{U}}(p_{iso} ( p_i , c )) ) }
		}
		{N}
	}
	{ | \{ b_j \} | }
	=
	C_0 ( 1 - \Omega(w,c) )
	\] \\
	
	\begin{noteunderlemma}
		Note that every lemma until here deals with the behavior of \\ 
		$ A ( \mathrm{\textbf{U}}(p_{iso} ( p_i , c )) ) $, so that they gave tools to obtain an upper bound for the expected algorithmic complexity of what each node can do when isolated. Since it is an upper bound for the algorithmic complexity of what each node can do when isolated and in the emergent algorithmic complexity it contributes negatively as defined in~\ref{BdefEAC}, then these results will help us to achieve a lower bound for the expected emergent algorithmic complexity. Furthermore, these results are independent of any topological feature that the algorithmic network $ \mathfrak{N'}_{BB}( N, f, t, \tau, j ) $ might have. 
	\end{noteunderlemma}
	
	\begin{proof}
		\noindent \\
		As in Step \eqref{stepTopologicalindependenceofp_i} , $ A ( \mathrm{\textbf{U}}(p_{iso} ( p_i , c )) ) $ is independent of any topology, so that
		
		\begin{equation} \label{stepTopologicalindependenceonBarHalt}
		\lim_{ N \to \infty }
		\frac{
			\sum\limits_{ b_j } 
			\frac{
				\sum\limits_{ p_i \in \overline{ Halt }_{iso} ( \mathbf{L}_{\mathfrak{P'}_{BB}(N)}, w, c )  } 
				{ A ( \mathrm{\textbf{U}}(p_{iso} ( p_i , c )) ) }
			}
			{N}
		}
		{ | \{ b_j \} | }
		=
		\lim\limits_{ N \to \infty }
		\frac{
			\sum\limits_{ p_i \in \overline{ Halt }_{iso} ( \mathbf{L}_{\mathfrak{P'}_{BB}(N)}, w, c )  } 
			{ A ( \mathrm{\textbf{U}}(p_{iso} ( p_i , c )) ) }
		}
		{N}
		\end{equation} \\
		
		From the Definition \ref{BdefBarHaltSIS} , as in Step \eqref{stepSLLNonalgorithmicentropy} , we will have that
		
		\begin{equation} \label{stepSLLNonBarHalt}
		\lim\limits_{ N \to \infty }
		\frac{
			\sum\limits_{ p_i \in \overline{ Halt }_{iso} ( \mathbf{L}_{\mathfrak{P'}_{BB}(N)}, w, c )  } 
			{ A ( \mathrm{\textbf{U}}(p_{iso} ( p_i , c )) ) }
		}
		{N}
		=
		\lim\limits_{ N \to \infty }
		\sum\limits_{ p_i \in \overline{ Halt }_{iso} ( \mathbf{L_U}(N), w, c )  } 
		{ \frac {A ( \mathrm{\textbf{U}}(p_{iso} ( p_i , c )) ) } { 2^{ |p_i| } } }
		\end{equation}	\\
		
		Now, let $ A(0)=C_0 $. \\
		Since $ p_i \in \overline{ Halt }_{iso} ( \mathbf{L_U}(N), w, c ) $ and $ \mathfrak{P'}_{BB}(N) $ is sensitive to oracles as defined in \ref{BdefP'_BB}, then by the definition of $p_{iso}$ in \ref{BdefEAConN'_BB} we will have that, for every $p_i \in \overline{ Halt }_{iso} ( \mathbf{L_U}, w, c )$,
		
		\begin{equation}
		A ( \mathrm{\textbf{U}}(p_{iso} ( p_i , c )) ) = A(0) = C_0
		\end{equation} \\
		
		Thus, by the definition of $\Omega(w,c)$ in \ref{BdefOmegawcSIS} ,
		
		\begin{equation} \label{stepAlgorithmicentropyfromC_0}
		\lim\limits_{ N \to \infty }
		\sum\limits_{ p_i \in \overline{ Halt }_{iso} ( \mathbf{L_U}(N), w, c )  } 
		{ \frac {A ( \mathrm{\textbf{U}}(p_{iso} ( p_i , c )) ) } { 2^{ |p_i| } } }
		=
		\lim\limits_{ N \to \infty }
		\sum\limits_{ p_i \in \overline{ Halt }_{iso} ( \mathbf{L_U}(N), w, c )  } 
		{ \frac { C_0 } { 2^{ |p_i| } } }
		=
		\end{equation}
		\[ =
		\lim\limits_{ N \to \infty }
		C_0
		\sum\limits_{ p_i \in \overline{ Halt }_{iso} ( \mathbf{L_U}(N), w, c )  } 
		{ \frac {1} { 2^{ |p_i| } } }
		=
		C_0 ( 1 - \Omega(w,c) )
		\] \\
		
		And we conclude from Steps \eqref{stepTopologicalindependenceonBarHalt} , \eqref{stepSLLNonBarHalt} and \eqref{stepAlgorithmicentropyfromC_0} that
		
		\[
		\lim_{ N \to \infty }
		\frac{
			\sum\limits_{ b_j } 
			\frac{
				\sum\limits_{ p_i \in \overline{ Halt }_{iso} ( \mathbf{L}_{\mathfrak{P'}_{BB}(N)}, w, c )  } 
				{ A ( \mathrm{\textbf{U}}(p_{iso} ( p_i , c )) ) }
			}
			{N}
		}
		{ | \{ b_j \} | }
		=
		\lim\limits_{ N \to \infty }
		\sum\limits_{ p_i \in \overline{ Halt }_{iso} ( \mathbf{L_U}(N), w, c )  } 
		{ \frac {A ( \mathrm{\textbf{U}}(p_{iso} ( p_i , c )) ) } { 2^{ |p_i| } } }
		= \]
		\[
		= C_0 ( 1 - \Omega(w,c) )
		\]
	\end{proof}
\end{amslemma}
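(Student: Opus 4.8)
The plan is to collapse the double average (over $b_j$ and over the population) into a single algorithmic-probability-weighted sum, and then to exploit the fact that every isolated computation that fails to halt produces one and the same fixed output under $\mathbf{U'}$. First I would establish topological independence. By Definition~\ref{BdefL'_BB}, when a node $P_{prot'_R}\circ p_i$ is isolated its behaviour reduces to that of $p_i$ alone and is entirely insensitive to the mapping $b_j$; hence $A(\mathbf{U}(p_{iso}(p_i,c)))$ does not depend on $b_j$, and the inner sum $\sum_{p_i\in\overline{Halt}_{iso}}A(\mathbf{U}(p_{iso}(p_i,c)))$ is identical for every $b_j$. The outer average over $\{b_j\}$ therefore cancels against $|\{b_j\}|$, exactly as in Step~\eqref{stepTopologicalindependenceofp_i} of Lemma~\ref{lemmaComplexityonHaltp_iSIS}, leaving
\[
\lim_{N\to\infty}\frac{1}{N}\sum_{p_i\in\overline{Halt}_{iso}(\mathbf{L}_{\mathfrak{P'}_{BB}(N)},w,c)}A(\mathbf{U}(p_{iso}(p_i,c))).
\]

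Second, I would invoke the strong law of large numbers as in Lemma~\ref{lemmaSLLNandAITSIS}. Since the suffixes of $\mathfrak{P'}_{BB}(N)$ are randomly generated under the algorithmic probability distribution (Definition~\ref{BdefRandompopulation} with $C=1$), in the limit $\mathbf{L}_{\mathfrak{P'}_{BB}(N)}$ matches the distribution over $\mathbf{L_U}(N)$. This rewrites the population average as the measure-weighted sum
\[
\lim_{N\to\infty}\sum_{p_i\in\overline{Halt}_{iso}(\mathbf{L_U}(N),w,c)}\frac{A(\mathbf{U}(p_{iso}(p_i,c)))}{2^{|p_i|}}.
\]

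The crux of the argument is the next step. By Definition~\ref{Bdefsensitivetooracles} the population is sensitive to oracles, and by Definition~\ref{BdefU'_R} the machine $\mathbf{U'}$ returns $0$ whenever the underlying computation fails to halt. Hence any $p_i\in\overline{Halt}_{iso}$ --- any program that does not halt in at least one cycle up to $c$ --- yields final output $0$, so that $A(\mathbf{U}(p_{iso}(p_i,c)))=A(0)$. Setting $C_0:=A(0)$, a single fixed constant independent of $p_i$, I can factor it out and obtain $C_0\sum_{p_i\in\overline{Halt}_{iso}(\mathbf{L_U}(N),w,c)}2^{-|p_i|}$.

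Finally I would identify the remaining sum. Because $\mathbf{U'}$ is a total function, every program in $\mathbf{L_U}(N)$ lies in exactly one of $Halt_{iso}$ or $\overline{Halt}_{iso}$, so the two algorithmic probabilities are complementary; using the normalization of the self-delimiting language together with Definition~\ref{BdefOmegawcSIS} gives $\sum_{p_i\in\overline{Halt}_{iso}}2^{-|p_i|}=1-\Omega(w,c)$ in the limit, whence the whole expression equals $C_0(1-\Omega(w,c))$, as claimed. I expect the main obstacle to be precisely this last normalization: one must argue that the total algorithmic measure over $\mathbf{L_U}(N)$ tends to $1$, so that the complement of $\Omega(w,c)$ is genuinely $1-\Omega(w,c)$ rather than merely $\Omega-\Omega(w,c)$. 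This rests on the totality of $\mathbf{U'}$ making the halting/non-halting partition exhaustive and on the choice $C=1$ in Definition~\ref{BdefRandompopulation}, and it also requires justifying the interchange of the limit with the (now finite, then countable) sum over $\overline{Halt}_{iso}$.
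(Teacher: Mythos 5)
Your proposal is correct and follows essentially the same route as the paper's own proof: collapse the average over $b_j$ by topological independence, pass to the algorithmic-probability-weighted sum via the strong law of large numbers, use sensitivity to oracles to replace every non-halting term by the constant $A(0)=C_0$, and identify the residual measure of $\overline{Halt}_{iso}$ with $1-\Omega(w,c)$. The normalization issue you flag at the end (that the total algorithmic measure of $\mathbf{L_U}(N)$ must tend to $1$ for the complement to be $1-\Omega(w,c)$ rather than $\Omega-\Omega(w,c)$) is indeed the one point the paper asserts without comment, so your explicit attention to it is a refinement rather than a divergence.
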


\noindent \\

\subsection{Lemma 6 extended}
% New Lemma 6 Extended

\begin{amslemma}\label{lemmaMinComplexityonDiffusionSIS}
	Let $ \mathfrak{P'}_{BB}(N) $ be a population in an arbitrary algorithmic network $ \mathfrak{N'}_{BB} (N, f, t , j)=(G_t, \mathfrak{P'}_{BB} (N),b_j) $ as defined in \ref{BdefN'_BB} and \ref{BdefP'_BB}. \\
	Let $ t_0 \leq t \leq t' \leq t_{ |\mathrm{T}(G_t)|-1 } $. \\ 
	Let $ c \in \mathfrak{C_{BB}}$ be an arbitrary number of cycles where $ c_0 + t' + 1 \leq c $. \\ 
	Then, there is a constant $C_2$ such that
	
	\[
	\frac{
		{
			\sum\limits_{ b_j  }
		}
		\frac{
			\sum\limits_{ p_i \in \mathfrak{P}_{BB}(N) } { A (\mathrm{\textbf{U}}(p_{net}^{b_j} ( o_i ,  c ) ))  }
		}
		{N}
	}
	{ | \{ b_j \} | }
	\geq
	( A'_{max} - C_2 ) \,
	{ \tau_{\mathbf{E}(\rho)}( N,f,t  ) }|_{t}^{t'}
	+ C_2
	\] \\
	
	\begin{proof}
		\noindent \\
		
		Let  $ \mathbf{X}_{ { \tau_{\rho}( N,f,t , j ) }|_{t}^{t'} } $ denote the set of nodes/programs that belong to fraction $ { \tau_{\rho}( N,f,t , j ) }|_{t}^{t'} $ as defined in \ref{BdefPrevalenceN'_BB}. 
		
		Hence, 
		\[ 
		| \mathbf{X}_{ { \tau_{\rho}( N,f,t , j ) }|_{t}^{t'} } | = N { \tau_{\rho}( N,f,t , j ) }|_{t}^{t'}
		\] \\
		
		Let $ C_2 = \min \{ A(w) \mid \exists x \in \mathbf{L_U} ( \mathbf{U}( x ) = w ) \} $. \footnote{ Note that depending on the choice of the programming language one may have $ C_2 \leq A(0) $ for example .} \\
		
		From the Definition \ref{BdefPrevalenceN'_BB} we will have that
		
		\[
		\frac{
			{
				\sum\limits_{ b_j  }
			}
			\frac{
				\sum\limits_{ o_i \in \mathfrak{P'}_{BB}(N) } { A (\mathrm{\textbf{U}}(p_{net}^{b_j} ( o_i ,  c ) ))  }
			}
			{N}
		}
		{ | \{ b_j \} | } =
		\]
		\begin{align} \label{stepSumoffractionsmax2}
		& =
		\frac
		{ \sum\limits_{ b_j  } 
			\left(
			\frac
			{ \sum\limits_{ o_i \in \mathbf{X}_{ { \tau_{\rho}( N,f,t , j ) }|_{t}^{t'} } } { A (\mathrm{\textbf{U}}(p_{net}^{b_j} ( o_i ,  c ) ))  }  }
			{ { \tau_{\rho}( N,f,t , j ) }|_{t}^{t'} N }
			{ \tau_{\rho}( N,f,t , j ) }|_{t}^{t'}
			\right)
		}
		{ | \{ b_j \} | } +
		\end{align}
		\[
		+ 
		\frac
		{ \sum\limits_{ b_j  } 
			\left(
			\frac
			{ \sum\limits_{ o_i \in \mathfrak{P'}_{BB}(N) \setminus \mathbf{X}_{ { \tau_{\rho}( N,f,t , j ) }|_{t}^{t'} } } { A (\mathrm{\textbf{U}}(p_{net}^{b_j} ( o_i ,  c ) ))  }  }
			{ \left| \mathfrak{P'}_{BB}(N) \setminus \mathbf{X}_{ { \tau_{\rho}( N,f,t , j ) }|_{t}^{t'} } \right| }
			{ \left( \frac 
				{ \left| \mathfrak{P'}_{BB}(N) \setminus \mathbf{X}_{ { \tau_{\rho}( N,f,t , j ) }|_{t}^{t'} } \right| }
				{N} \right) }
			\right)
		}
		{ | \{ b_j \} | }
		\] \\

		And from Definitions \ref{BdefPrevalenceN'_BB}, \ref{BdefL'_BB} and \ref{BdefA'_max} we will have that
		
		\begin{equation} \label{stepA_maxandfractionmax2}
		\frac
		{ \sum\limits_{ o_i \in \mathbf{X}_{ { \tau_{\rho}( N,f,t , j ) }|_{t}^{t'} } } { A (\mathrm{\textbf{U}}(p_{net}^{b_j} ( o_i ,  c ) ))  }  }
		{ { \tau_{\rho}( N,f,t , j ) }|_{t}^{t'} N }
		{ \tau_{\rho}( N,f,t , j ) }|_{t}^{t'}
		\geq
		A'_{max} \, { \tau_{\rho}( N,f,t , j ) }|_{t}^{t'}
		\end{equation} \\
		
		\noindent
		and, analogously, the following always holds despite on which node fraction
		\[ 
		\frac 
		{ \left| \mathfrak{P'}_{BB}(N) \setminus \mathbf{X}_{ { \tau_{\rho}( N,f,t , j ) }|_{t}^{t'} } \right| }
		{N}
		\] 
		is centered and whenever it starts to diffuse
		
		\begin{equation} \label{stepC_2andfractionmax2}
		\frac
		{ \sum\limits_{ o_i \in \mathfrak{P'}_{BB}(N) \setminus \mathbf{X}_{ { \tau_{\rho}( N,f,t , j ) }|_{t}^{t'} } } { A (\mathrm{\textbf{U}}(p_{net}^{b_j} ( o_i ,  c ) ))  }  }
		{ \left| \mathfrak{P'}_{BB}(N) \setminus \mathbf{X}_{ { \tau_{\rho}( N,f,t , j ) }|_{t}^{t'} } \right|  }
		\frac 
		{ \left| \mathfrak{P'}_{BB}(N) \setminus \mathbf{X}_{ { \tau_{\rho}( N,f,t , j ) }|_{t}^{t'} } \right| }
		{N}
		\geq
		\end{equation} 
		\[
		\geq
		C_2 \, \frac 
		{ \left| \mathfrak{P'}_{BB}(N) \setminus \mathbf{X}_{ { \tau_{\rho}( N,f,t , j ) }|_{t}^{t'} } \right| }
		{N}
		\]\\
		
		Thus, since we have that $ { \tau_{\rho}( N,f,t , j ) }|_{t}^{t'} + \frac 
		{ \left| \mathfrak{P'}_{BB}(N) \setminus \mathbf{X}_{ { \tau_{\rho}( N,f,t , j ) }|_{t}^{t'} } \right| }{N} = 1 $, then by Steps \eqref{stepSumoffractionsmax2} , \eqref{stepA_maxandfractionmax2} and \eqref{stepC_2andfractionmax2}
		
		\begin{align}
		& \frac
		{ \sum\limits_{ b_j  } 
			\left(
			\frac
			{ \sum\limits_{ o_i \in \mathbf{X}_{ { \tau_{\rho}( N,f,t , j ) }|_{t}^{t'} } } { A (\mathrm{\textbf{U}}(p_{net}^{b_j} ( o_i ,  c ) ))  }  }
			{ { \tau_{\rho}( N,f,t , j ) }|_{t}^{t'} N }
			{ \tau_{\rho}( N,f,t , j ) }|_{t}^{t'}
			\right)
		}
		{ | \{ b_j \} | } +
		\end{align}
		\[
		+ 
		\frac
		{ \sum\limits_{ b_j  } 
			\left(
			\frac
			{ \sum\limits_{ o_i \in \mathfrak{P'}_{BB}(N) \setminus \mathbf{X}_{ { \tau_{\rho}( N,f,t , j ) }|_{t}^{t'} } } { A (\mathrm{\textbf{U}}(p_{net}^{b_j} ( o_i ,  c ) ))  }  }
			{ \left| \mathfrak{P'}_{BB}(N) \setminus \mathbf{X}_{ { \tau_{\rho}( N,f,t , j ) }|_{t}^{t'} } \right| }
			{ \left( \frac 
				{ \left| \mathfrak{P'}_{BB}(N) \setminus \mathbf{X}_{ { \tau_{\rho}( N,f,t , j ) }|_{t}^{t'} } \right| }
				{N} \right) }
			\right)
		}
		{ | \{ b_j \} | } \geq
		\]
		\[
		\geq
		\frac
		{ \sum\limits_{ b_j  } 
			\left( 
			A'_{max}
			{ \tau_{\rho}( N,f,t , j ) }|_{t}^{t'}
			+ 
			C_2
			\frac 
		{ \left| \mathfrak{P'}_{BB}(N) \setminus \mathbf{X}_{ { \tau_{\rho}( N,f,t , j ) }|_{t}^{t'} } \right| }{N}
			\right)
		}
		{ | \{ b_j \} | } =	
		\]
		\[
		=
		\frac
		{ \sum\limits_{ b_j  } 
			\left( 
			( A'_{max} - C_2 )
			{ \tau_{\rho}( N,f,t , j ) }|_{t}^{t'}
			+ 
			C_2
			\right)
		}
		{ | \{ b_j \} | }
		\]
	\end{proof}
\end{amslemma}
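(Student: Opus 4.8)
The plan is to bound, for each node mapping $b_j$ separately, the average networked output complexity from below by splitting the population into the ``infected'' nodes and the rest, and then to average the resulting affine bound over all $b_j$. First I would fix $C_2 = \min \{ A(w) \mid \exists x \in \mathbf{L_U} \, ( \mathbf{U}(x) = w ) \}$, the algorithmic complexity of the least complex producible output, so that every final output of every node satisfies $A (\mathrm{\textbf{U}}(p_{net}^{b_j} ( o_i ,  c ) )) \geq C_2$ irrespective of topology, cycle count, or mapping.

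Next I would partition $ \mathfrak{P'}_{BB}(N) $ into the set $ \mathbf{X}_{ { \tau_{\rho}( N,f,t , j ) }|_{t}^{t'} } $ of nodes whose partial output at time $t'$ equals the biggest partial output at time $t$, so that $ | \mathbf{X}_{ { \tau_{\rho}( N,f,t , j ) }|_{t}^{t'} } | = N \, { \tau_{\rho}( N,f,t , j ) }|_{t}^{t'} $ by Definition~\ref{BdefPrevalenceN'_BB}, and its complement. For a node in this ``infected'' set, its final output is exactly the largest cycle-$1$ integer, whose complexity is $A'_{max}$ by Definition~\ref{BdefA'_max}; since the cycle condition $ c_0 + t' + 1 \leq c $ guarantees that the diffusion has already propagated and that the extra output-returning cycle has elapsed, each such node contributes at least $A'_{max}$ to the sum. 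For a node in the complement, only the crude bound $C_2$ is needed, and it holds by the choice of $C_2$.

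Then, summing the two groups of contributions weighted by the fractions $ { \tau_{\rho}( N,f,t , j ) }|_{t}^{t'} $ and $ 1 - { \tau_{\rho}( N,f,t , j ) }|_{t}^{t'} $, and using the identity $A'_{max}\,\tau + C_2(1-\tau) = (A'_{max} - C_2)\tau + C_2$, I obtain for each $b_j$ the per-mapping lower bound $ (A'_{max} - C_2)\, { \tau_{\rho}( N,f,t , j ) }|_{t}^{t'} + C_2 $. Finally, because $A'_{max}$ and $C_2$ do not depend on $b_j$ ($A'_{max}$ is a cycle-$1$, hence topology-independent, quantity and $C_2$ is a language constant), averaging this affine expression over the uniform distribution on $\{ b_j \}$ converts $ { \tau_{\rho}( N,f,t , j ) }|_{t}^{t'} $ into $ { \tau_{\mathbf{E}(\rho)}( N,f,t ) }|_{t}^{t'} $ by Definition~\ref{BdefAveragePrevalenceN'_BB}, yielding the claimed inequality.

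The hard part will be justifying the two per-node bounds uniformly: specifically, that \emph{every} node in the infected fraction genuinely achieves networked complexity $\geq A'_{max}$ no matter where the diffusion is centered and no matter how the complementary fraction is distributed across the graph. This is precisely what the freedom in choosing $c$ (via $ c_0 + t' + 1 \leq c $) together with the stationary-prevalence characterization of $ { \tau_{\rho}( N,f,t , j ) }|_{t}^{t'} $ is designed to secure; once those two bounds are in place, the weighting and the averaging over $\{ b_j \}$ are routine since the bound is affine in the fraction.
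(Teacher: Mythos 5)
Your proposal is correct and follows essentially the same route as the paper's proof: the same choice of $C_2$ as the minimal complexity of a producible output, the same partition of $\mathfrak{P'}_{BB}(N)$ into $\mathbf{X}_{ { \tau_{\rho}( N,f,t , j ) }|_{t}^{t'} }$ and its complement, the same per-node bounds $A'_{max}$ and $C_2$, the same affine recombination via $\tau + (1-\tau)=1$, and the same final averaging over $\{ b_j \}$ to pass from $\tau_{\rho}$ to $\tau_{\mathbf{E}(\rho)}$. No substantive differences to report.
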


\noindent \\

\noindent \\

\subsection{Theorem 1 extended}

% New Theorem 1 Extended

\begin{thm}\label{thmMainSIS}
	\noindent \\
	
	Let $ w \in \mathbf{L_U} $ be a network input. \\
	
	Let $ 0 < N \in \mathbb{N} $.\\
	
	Let $ \mathfrak{N'}_{BB}(N,f,t ,j) = ( G_t, \mathfrak{ P' }_{BB}(N), b_j ) $ be well-defined.\\
	
	Let $ t_0 \leq t \leq t' \leq t_{ |\mathrm{T}(G_t)|-1 } $.\\
	
	Let $ \myfunc{c}{ \mathbb{N} } { \mathfrak{C_{BB}} } { x } { c(x)=y } $  be a total computable function where $ c(x) \geq c_0 + t'+1 $. \\

	Then, we will have that:
	
	%s25p13l1
	\begin{align*}
		& \lim\limits_{ N \to \infty }
		\mathbf{E}_{ \mathfrak{N'}_{BB}(N,f,t ) } 
		\left(
		{ {\displaystyle{\myDelta_{iso}^{net}} A} (o_i,c(x))} 
		\right)
		\geq
		\lim\limits_{ N \to \infty }
		\left( 
		{ \tau_{\mathbf{E}(\rho)}( N,f,t  ) }|_{t}^{t'}
		-
		\Omega(w,c(x))
		\right)
		\lg(N) - \\
		& - \Omega(w,c(x)) \lg(x) - 2 \, \Omega(w,c(x))\lg(\lg(x)) - A(w) - C_5
	\end{align*}
	
	\noindent \\
	
	\begin{noteunderthm}
		Thus, note for example that for bigger enough values of $x$ compared to $N$ one can make this lower bound always negative. One of the main ideas behind forthcoming results in this paper is to find optimal conditions where this lower bound is not only positive, but also goes to $\infty$.
	\end{noteunderthm}
	
	\begin{noteunderthm}
		Note that this lower bound for the expected emergent algorithmic complexity is dependent on the value in the domain of the function $c$ and not on function $c$ itself, even if it grows fast. And it holds as long as c is a total computable function. In fact, one may want to obtain this theorem for fixed values of $c$ in which it is not a function but an arbitrary value. And the same result also holds in this case. The reader is invited to check that, in addition to a slightly different constant $ C_5 $, a simple substitution of $c(x)$ for $c$ inside $ \Omega(w,c(x)) $ and of $x$ inside the logarithms for $c$ is enough\footnote{ Besides a slightly different constant $C_5$.}. 
	\end{noteunderthm}
	
	\begin{noteunderthm}\label{BnoteunderBthmSinglegraphinfamilyGSIS}
		The same result also holds if only one possible function $b_j$ is defined for each member of the family $ \mathbb{G}_{SIS}( f, t ) $. This way only one function $b_j$ will be taken into account within the sum in order to give the mean. Thus, in this case one can replace $\tau_{\mathbf{E}(\rho)}$ with $\tau_{\rho}$ not only in Theorem \ref{thmMainSIS} but also in \ref{corMainSIS} and \ref{thmMainCentralTimeSIS}. Such variation of these theorems becomes useful when one has algorithmic networks $\mathfrak{N'}_{BB}(N,f,t ,j)$ built upon a historical population-size growth in which each new node/program is linked (or not) to the previous existing algorithmic network. 
	\end{noteunderthm}
	
	\noindent \\
	
	\begin{proof}
		\noindent \\
		The proof will follow from Steps~\eqref{stepDefEEAC2} and~\eqref{stepMainthm1SIS} below.\\
		
		We have from our hypothesis on function $c$ and from AIT that there is $C_c \in \mathbb{N}$ such that, for every $ x \in \mathbb{N} $,  
		
		\begin{equation}\label{AITonfunctionc2}
			A( c(x) ) \leq C_c + A(x)
		\end{equation} \\
		
		Let $ C_5 = C_c + C_L + C_1 + C_4 - C_0 $. \\
		
		Note that, as in Step \eqref{stepSLLNonalgorithmicentropy}, we will have from Definition~\ref{BdefOmegawcSIS} that
		\begin{equation}\label{stepLLNinHaltisothm1SIS}
			\lim\limits_{ N \to \infty }
			\frac{\left| Halt_{iso}( L_{ \mathfrak{P'}_{BB}(N) }, w, c(x)) \right|}
			{ N }
			=
			\lim\limits_{ N \to \infty }
			\frac{1}{N}
			\sum\limits_{ p_i \in Halt_{iso}( \mathbf{L_U}(N), w, c(x)) }
				\frac{N}
					{2^{ | p_i | }}
			=
			\Omega( w, c(x) )
		\end{equation} \\
		
		From Definition \ref{BdefEEACN'_BB}, we have that the expected emergent algorithmic complexity of a node/program for $ \mathfrak{N'}_{BB}(N,f,t ,j) = ( G_t, \mathfrak{ P' }_{BB}(N), b_j ) $, where $ 0 < j \leq |\{ b_j \}| $ is given by
		
		\begin{align}\label{stepDefEEAC2}
			&\mathbf{E}_{ \mathfrak{N'}_{BB}(N,f,t ) } 
			\left(
			{ {\displaystyle{\myDelta_{iso}^{net}} A} (o_i,c(x))} 
			\right)
			=  
			\frac{
				{
					\sum\limits_{ b_j  }
				}
				\frac{
					\sum\limits_{ o_i \in \mathfrak{P'}_{BB}(N) } { A (\mathrm{\textbf{U}}(p_{net}^{b_j} ( o_i ,  c(x) ) )) - 
						A (\mathrm{\textbf{U}}(p_{iso} ( p_i ,  c(x) ))) }
				}
				{N}
			}
			{ | \{ b_j \} | }
		\end{align} \\
		
		And, from Definitions \ref{BdefHaltSIS} , \ref{BdefBarHaltSIS} , \ref{BdefOmegawcSIS} , \ref{BdefAveragePrevalenceN'_BB} , \ref{BdefU'_R} and Lemmas \ref{lemmaComplexityp_iSIS} , \ref{lemmaComplexityonBarHaltSIS} , \ref{lemmaComplexityonHaltp_iSIS} , \ref{lemmaMinComplexityonDiffusionSIS} , \ref{lemmaSLLNandAITSIS} and Steps \eqref{AITonfunctionc2} and \eqref{stepLLNinHaltisothm1SIS}, we will have that\footnote{ Note that $ \left| \mathbf{L_U}(N) \right| \leq N $.}
		
		%s1p1l1
		\begin{align}\label{stepMainthm1SIS}
			\lim\limits_{ N \to \infty }
			\frac{
				{
					\sum\limits_{ b_j  }
				}
				\frac{
					\sum\limits_{ o_i \in \mathfrak{P'}_{BB}(N) } { A (\mathrm{\textbf{U}}(p_{net}^{b_j} ( o_i ,  c(x)) )) - 
						A (\mathrm{\textbf{U}}(p_{iso} ( p_i ,  c(x)))) }
				}
				{N}
			}
			{ | \{ b_j \} | } 
			= 
		\end{align}
		%s2p2l1
		\[
		=
		\lim\limits_{ N \to \infty }
		\frac{
			{
				\sum\limits_{ b_j  }
			}
			\frac{
				\sum\limits_{ o_i \in \mathfrak{P'}_{BB}(N) } { A (\mathrm{\textbf{U}}(p_{net}^{b_j} ( o_i ,  c(x)) )) }
			}
			{N}
		}
		{ | \{ b_j \} | }
		-
		\]
		%s3p2l2
		\[
		-
		\left(
		\frac{ \sum\limits_{ b_j  }
			\left(
			\frac{
				\sum\limits_{ p_i \in Halt_{iso}( L_{ \mathfrak{P'}_{BB}(N) }, w, c(x)) } {  
					A (\mathrm{\textbf{U}}(p_{iso} ( p_i ,  c(x)))) }
			}
			{N}
			+
			\frac{
				\sum\limits_{ p_i \in \overline{Halt}_{iso}( L_{ \mathfrak{P'}_{BB}(N) }, w, c(x)) } { 
					A (\mathrm{\textbf{U}}(p_{iso} ( p_i ,  c(x)))) }
			}
			{N}
			\right)
		}
		{ | \{ b_j \} | }
		\right)
		=
		\]
		%s4p3l1
		\[
		=
		\lim\limits_{ N \to \infty }
		\frac{
			{
				\sum\limits_{ b_j  }
			}
			\frac{
				\sum\limits_{ o_i \in \mathfrak{P'}_{BB}(N) } { A (\mathrm{\textbf{U}}(p_{net}^{b_j} ( o_i ,  c(x)) )) }
			}
			{N}
		}
		{ | \{ b_j \} | }
		-
		\]
		%s5p3l2
		\[
		-
		\left(
		\frac{ \sum\limits_{ b_j  }
			\left(
			\frac{
				\sum\limits_{p_i \in Halt_{iso}( L_{ \mathfrak{P'}_{BB}(N) }, w, c(x)) } {  
					A (\mathrm{\textbf{U}}(p_{iso} ( p_i ,  c(x)))) }
			}
			{N}
			\right)
		}
		{ | \{ b_j \} | }
		+
		C_0 ( 1 - \Omega( w, c(x)) )
		\right)	
		\geq
		\] 
		%s6p4l1
		\[
		\geq
		\lim\limits_{ N \to \infty }
		\frac{
			{
				\sum\limits_{ b_j  }
			}
			\frac{
				\sum\limits_{ o_i \in \mathfrak{P'}_{BB}(N) } { A (\mathrm{\textbf{U}}(p_{net}^{b_j} ( o_i ,  c(x)) )) }
			}
			{N}
		}
		{ | \{ b_j \} | }
		-
		\]
		%s7p4l2
		\[
		-
		\left(
		\frac{ \sum\limits_{ b_j  }
			\left(
			\frac{
				\sum\limits_{p_i \in Halt_{iso}( L_{ \mathfrak{P'}_{BB}(N) }, w, c(x)) } { C_1 + | p_i | + A(w) + A(c(x)) }
			}
			{N}
			+
			C_0 ( 1 - \Omega( w, c(x)) )
			\right)
		}
		{ | \{ b_j \} | }
		\right)	
		=
		\] 
		%s8p5l1
		\[
		=
		\lim\limits_{ N \to \infty }
		\frac{
			{
				\sum\limits_{ b_j  }
			}
			\frac{
				\sum\limits_{ o_i \in \mathfrak{P'}_{BB}(N) } { A (\mathrm{\textbf{U}}(p_{net}^{b_j} ( o_i ,  c(x)) )) }
			}
			{N}
		}
		{ | \{ b_j \} | }
		-
		\]
		%s9p5l2
		\[
		-
		\left(
		\frac{ \sum\limits_{ b_j  }
			\left(
			\frac{
				\sum\limits_{p_i \in Halt_{iso}( L_{ \mathfrak{P'}_{BB}(N) }, w, c(x)) } { | p_i | }
			}
			{N}
			\right)
		}
		{ | \{ b_j \} | }
		+
		\Omega(w,c(x)) \big( C_1 + A(w) + A(c(x)) \big)
		+
		C_0 \big( 1 - \Omega( w, c(x)) \big)
		\right)
		\geq
		\] 
		%s10p6l1
		\[
		\geq
		\lim\limits_{ N \to \infty }
		\frac{
			{
				\sum\limits_{ b_j  }
			}
			\frac{
				\sum\limits_{ o_i \in \mathfrak{P'}_{BB}(N) } { A (\mathrm{\textbf{U}}(p_{net}^{b_j} ( o_i ,  c(x)) )) }
			}
			{N}
		}
		{ | \{ b_j \} | }
		-
		\]
		%s11p6l2
		\[
		-
		\bigg(
		\Omega(w,c(x)) \lg(N)
		+
		\Omega(w,c(x)) \big( C_1 + A(w) + A(c(x)) \big)
		+
		C_0 \big( 1 - \Omega( w, c(x)) \big)
		\bigg)
		\geq
		\] 
		%s12p7l1
		\[
		\geq
		\lim\limits_{ N \to \infty }
		\left( A'_{max} - C_2 \right) 
		{ \tau_{\mathbf{E}(\rho)}( N,f,t  ) }|_{t}^{t'}
		+
		C_2
		-
		\]
		%s13p7l2
		\[
		-
		\left(
		\Omega(w,c(x)) \lg(N)
		+
		\Omega(w,c(x)) \big( C_1 + A(w) + A(c(x)) \big)
		+
		C_0 \big( 1 - \Omega( w, c(x)) \big)
		\right)
		\geq
		\]
		%s14p8l1
		\[
		\geq
		\lim\limits_{ N \to \infty }
		\left( \lg(N) - C_4 - C_2 \right) 
		{ \tau_{\mathbf{E}(\rho)}( N,f,t ) }|_{t}^{t'}
		+
		C_2
		-
		\]
		%s15p8l2
		\[
		-
		\Big(
		\Omega(w,c(x)) \lg(N)
		+
		\Omega(w,c(x)) \big( C_1 + A(w) + A(c(x)) \big)
		+
		C_0 \big( 1 - \Omega( w, c(x)) \big)
		\Big)
		=
		\]
		\[
		=
		\lim\limits_{ N \to \infty }
		\left( \lg(N) - C_4 - C_2 \right) 
		{ \tau_{\mathbf{E}(\rho)}( N,f,t ) }|_{t}^{t'}
		+
		C_2
		-
		\]
		%sp
		\[
		-
		\Omega(w,c(x)) \lg(N)
		-
		\Omega(w,c(x)) C_1 - \Omega(w,c(x)) A(w) - \Omega(w,c(x)) A(c(x))
		-
		C_0 + C_0 \Omega( w, c(x))	
		=
		\]
		%
		%sp
		\[
		=
		\lim\limits_{ N \to \infty }
		\left( 
		{ \tau_{\mathbf{E}(\rho)}( N,f,t  ) }|_{t}^{t'}
		-
		\Omega(w,c(x))
		\right)
		\lg(N)
		-
		\Omega(w,c(x)) A( c(x))
		-
		\]
		%s17p9l2
		\[
		- ( C_4 + C_2 ) { \tau_{\mathbf{E}(\rho)}( N,f,t  ) }|_{t}^{t'}
		-
		\Omega(w,c(x)) C_1
		+ 
		\Omega(w,c(x)) C_0
		+ C_2 - C_0
		-
		\Omega(w,c(x)) A(w)
		\geq 	
		\]
		%s18p10l1
		\[
		\geq
		\lim\limits_{ N \to \infty }
		\left( 
		{ \tau_{\mathbf{E}(\rho)}( N,f,t  ) }|_{t}^{t'}
		-
		\Omega(w,c(x))
		\right)
		\lg(N)
		-
		\Omega(w,c(x)) A( c(x))
		-
		\]
		%s19p10l2
		\[
		- ( C_4 + C_2 ) 
		-
		C_1
		+ C_2 - C_0
		-
		A(w)
		=	
		\]
		%s20p11l1
		\[
		=
		\lim\limits_{ N \to \infty }
		\left( 
		{ \tau_{\mathbf{E}(\rho)}( N,f,t  ) }|_{t}^{t'}
		-
		\Omega(w,c(x))
		\right)
		\lg(N)
		-
		\Omega(w,c(x)) A( c(x))
		-
		\]
		%s21p11l2
		\[
		- C_4 
		-
		C_1
		-
		C_0
		-
		A(w)
		\geq	
		\]
		%s22p12l1
		\[
		\geq
		\lim\limits_{ N \to \infty }
		\left( 
		{ \tau_{\mathbf{E}(\rho)}( N,f,t  ) }|_{t}^{t'}
		-
		\Omega(w,c(x))
		\right)
		\lg(N)
		-
		\]
		%s23p12l2
		\[
		- \Omega(w,c(x)) \lg(x) - \Omega(w,c(x))(1+\epsilon)\lg(\lg(x)) - \Omega(w,c(x)) C_L
		-
		\]
		%s24p12l3
		\[
		- \Omega( w, c(x) ) C_c
		- C_4 
		-
		C_0
		-
		C_1
		-
		A(w)	
		\]
		%s25p13l1
		\[
		\geq
		\lim\limits_{ N \to \infty }
		\left( 
		{ \tau_{\mathbf{E}(\rho)}( N,f,t ) }|_{t}^{t'}
		-
		\Omega(w,c(x))
		\right)
		\lg(N)
		-
		\]
		%s26p13l2
		\[
		- \Omega(w,c(x)) \lg(x) 
		- 2 \, \Omega(w,c(x))\lg(\lg(x)) 
		- C_5 
		- A(w)
		\]
	\end{proof}

\end{thm}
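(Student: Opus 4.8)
The plan is to start directly from Definition~\ref{BdefEEACN'_BB}, which writes the expected emergent algorithmic complexity as the average over the maps $b_j$ and over the population of the networked complexity $A(\mathbf{U}(p_{net}^{b_j}(o_i,c(x))))$ minus the isolated complexity $A(\mathbf{U}(p_{iso}(p_i,c(x))))$. The entire argument then reduces to lower-bounding the networked average, upper-bounding the isolated average, and subtracting; the leading behaviour will come from pairing a $\lg(N)$ from each side.

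For the isolated term I would first split the inner population sum over $Halt_{iso}$ and $\overline{Halt}_{iso}$ as in Definitions~\ref{BdefHaltSIS} and~\ref{BdefBarHaltSIS}. The non-halting part is disposed of by Lemma~\ref{lemmaComplexityonBarHaltSIS}, which, using sensitivity to oracles, collapses it to the constant contribution $C_0(1-\Omega(w,c(x)))$. For the halting part I would apply the per-program estimate of Lemma~\ref{lemmaComplexityp_iSIS}, namely $A(\mathbf{U}(p_{iso}(p_i,c(x))))\leq C_1+|p_i|+A(w)+A(c(x))$, which separates the sum into a $\sum|p_i|$ piece, a factor $\Omega(w,c(x))$ multiplying the constants $C_1+A(w)+A(c(x))$ (the factor arising from the law-of-large-numbers count of halting programs in the limit), and nothing else. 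The $\sum|p_i|$ piece is then bounded from above by Lemma~\ref{lemmaComplexityonHaltp_iSIS}, which itself rests on the Gibbs/Jensen estimate of Lemma~\ref{lemmaGibbsandalgorithmicentropySIS}, yielding $\Omega(w,c(x))\lg(N)$ in the limit.

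For the networked term I would invoke Lemma~\ref{lemmaMinComplexityonDiffusionSIS} to get the lower bound $(A'_{max}-C_2)\,\tau_{\mathbf{E}(\rho)}(N,f,t)|_{t}^{t'}+C_2$, and then substitute the asymptotic lower bound $A'_{max}\geq\lg(N)-C_4$ from Lemma~\ref{lemmaSLLNandAITSIS}. Subtracting the isolated upper bound from this networked lower bound, the two occurrences of $\lg(N)$ combine into the coefficient $\big(\tau_{\mathbf{E}(\rho)}(N,f,t)|_{t}^{t'}-\Omega(w,c(x))\big)\lg(N)$, which is exactly the leading term of the claimed inequality.

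The final ingredient is to bound $A(c(x))$. Since $c$ is total computable there is a constant $C_c$ with $A(c(x))\leq C_c+A(x)$, and by the integer-encoding bound built into Definition~\ref{BdefU'_R} one has $A(x)\leq\lg(x)+(1+\epsilon)\lg(\lg(x))+C_L$; feeding this into the term $\Omega(w,c(x))A(c(x))$ produces the $-\Omega(w,c(x))\lg(x)$ and $-2\,\Omega(w,c(x))\lg(\lg(x))$ contributions, after which every remaining constant can be absorbed into a single $C_5$ together with the $-A(w)$ term. I expect the principal obstacle to be bookkeeping rather than conceptual: one must keep every inequality pointing the same way while collapsing mixed products such as $(C_4+C_2)\tau_{\mathbf{E}(\rho)}(N,f,t)|_{t}^{t'}$ and $\Omega(w,c(x))\,C_1$ into constants, using that both $\tau_{\mathbf{E}(\rho)}(N,f,t)|_{t}^{t'}$ and $\Omega(w,c(x))$ lie in $[0,1]$ so that discarding them only weakens the bound in the intended direction.
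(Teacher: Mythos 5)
Your proposal is correct and follows essentially the same route as the paper's proof: the same decomposition of the isolated term into halting and non-halting parts via Lemmas~\ref{lemmaComplexityonBarHaltSIS}, \ref{lemmaComplexityp_iSIS} and \ref{lemmaComplexityonHaltp_iSIS}, the same lower bound for the networked term via Lemmas~\ref{lemmaMinComplexityonDiffusionSIS} and \ref{lemmaSLLNandAITSIS}, and the same final bookkeeping with $A(c(x))\leq C_c+A(x)$, the integer-encoding bound, and the absorption of mixed products using $\tau_{\mathbf{E}(\rho)},\Omega\in[0,1]$. No gaps.
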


\noindent \\

%%%%%%%

\subsection{Corollary 1 extended}

\begin{corollaryundersubsection}\label{corMainSIS} 
	Let $ w \in \mathbf{L_U} $ be a network input.
	Let $ 0 < N \in \mathbb{N} $.
	Let $ \mathfrak{N'}_{BB}(N,f,t_z ,j) = ( G_t, \mathfrak{ P' }_{BB}(N), b_j ) $ be well-defined.
	Let $ t_z \in \mathrm{T}(G_t) $.
	Let $ \myfunc{c}{ \mathbb{N} } { \mathfrak{C_{BB}} } { x } { c(x)=y } $  be a total computable function where 
	\[ 
	c(z + f( N, t_z  ) + 2) \geq c_0 + z + f( N, t_z  ) + 2 
	\]
	and
	\[
	c( z + f( N, t_z  ) + 2 ) - c_0 - 1 \leq t_{ |\mathrm{T}(G_t)|-1 } 
	\]
	Then, we will have that:
	%
	%s25p13l1
	\begin{align*}
	& \lim\limits_{ N \to \infty } 
	\mathbf{E}_{ \mathfrak{N'}_{BB}(N,f,t_z ) } 
	\left(
	{ {\displaystyle{\myDelta_{iso}^{net}} A} (o_i, c( z + f( N, t_z  ) + 2 ) )} 
	\right)
	\geq \\
	& \geq
	\lim\limits_{ N \to \infty }
	\left( 
	{ \tau_{\mathbf{E}(\rho)}( N,f,t_z  ) }|_{ t_z }^{ 
		c( z + f( N, t_z  ) + 2 ) - c_0 - 1  
	}
	-
	\Omega(w, c( z + f( N, t_z  ) + 2 ))
	\right)
	\lg(N) - \\
	& - \Omega(w, c( z + f( N, t_z  ) + 2 )) \lg( z + f( N, t_z  ) + 2) - \\
	& - 2 \, \Omega(w, c( z + f( N, t_z  ) + 2 ))\lg(\lg( z + f( N, t_z  ) + 2)) - A(w) - C_5
	\end{align*}
	
	\begin{proof}
		Remember the Definition \ref{BdefN'_BB} and conditions for Theorem~\ref{thmMainSIS} to hold. Note that 
		\[ 
		c( z + f( N, t_z  ) + 2 ) \geq 
		\]
		\[
		\geq
		c_0 + c( z + f( N, t_z  ) + 2 ) - c_0 - 1 +1 
		\geq
		\]
		\[
		\geq
		c_0 + z + f( N, t_z  ) + 2 \geq  c_0 + t_z + 1
		\geq 
		t_z
		\geq
		t_0
		\]
		\noindent satisfying conditions $ c(x) \geq c_0 + t' +1 $ and $ t_0 \leq t \leq t' \leq t_{ |\mathrm{T}(G_t)|-1 } $  in Theorem~\ref{thmMainSIS}.
		
		Thus, the proof follows directly from replacing $t$ with $t_z$, $t'$ with $ c( z + f( N, t_z  ) + 2 ) - c_0 - 1 $ and $x$ with  $z + f( N, t_z  ) +2$ in Theorem~\ref{thmMainSIS}.
	\end{proof}
	
\end{corollaryundersubsection}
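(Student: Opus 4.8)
The plan is to derive Corollary~\ref{corMainSIS} as a specialization of Theorem~\ref{thmMainSIS}: the right-hand side of the corollary is exactly the lower bound of the theorem after the substitution $t \mapsto t_z$, $t' \mapsto c(z + f(N,t_z) + 2) - c_0 - 1$, and $x \mapsto z + f(N,t_z) + 2$. First I would fix the network input $w$, the size $N$, a well-defined $\mathfrak{N'}_{BB}(N,f,t_z,j)$, and the total computable function $c$ meeting the two displayed hypotheses; then the whole argument reduces to verifying that, under this substitution, the hypotheses of Theorem~\ref{thmMainSIS}---namely $t_0 \leq t \leq t' \leq t_{|\mathrm{T}(G_t)|-1}$ together with $c(x) \geq c_0 + t' + 1$---are satisfied.

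The verification is where the (modest) work lies, and it is precisely the inequality chain sketched in the statement. The upper endpoint $t' = c(z+f(N,t_z)+2) - c_0 - 1 \leq t_{|\mathrm{T}(G_t)|-1}$ is the second hypothesis on $c$ verbatim. For $c(x) \geq c_0 + t' + 1$, substituting $t'$ turns the requirement into the identity $c(z+f(N,t_z)+2) \geq c_0 + \bigl( c(z+f(N,t_z)+2) - c_0 - 1 \bigr) + 1$, so it holds with equality. For the lower endpoint I would invoke the natural ordering $t_i = i+1$ from Definition~\ref{BdefTVG} together with the first hypothesis $c(z+f(N,t_z)+2) \geq c_0 + z + f(N,t_z) + 2$, which gives $t' \geq z + f(N,t_z) + 1 \geq z + 1 = t_z$ since $f(N,t_z) \geq 0$; and $t_z = z+1 \geq 1 = t_0$ because $z \geq 0$. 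Hence $t_0 \leq t_z \leq t' \leq t_{|\mathrm{T}(G_t)|-1}$ and $c(x) \geq c_0 + t' + 1$ all hold.

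With the hypotheses secured, applying Theorem~\ref{thmMainSIS} with the above substitution yields the claim directly: the interval $|_{t}^{t'}$ becomes $|_{t_z}^{c(z+f(N,t_z)+2)-c_0-1}$, the argument of $\Omega$ becomes $c(z+f(N,t_z)+2)$, and the two logarithmic terms $\lg(x)$ and $\lg(\lg(x))$ become $\lg(z+f(N,t_z)+2)$ and $\lg(\lg(z+f(N,t_z)+2))$, while the additive $-A(w) - C_5$ is untouched. I expect the only real obstacle to be careful bookkeeping: making sure every occurrence of $t$, $t'$, and $x$---inside $\tau_{\mathbf{E}(\rho)}$, inside both logarithms, and inside $\Omega$---is rewritten consistently, and confirming that the constant $C_5$ and the term $A(w)$ do not depend on $t$, $t'$, or $x$ and so pass through the substitution unchanged. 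No new analytic estimate beyond Theorem~\ref{thmMainSIS} is required.
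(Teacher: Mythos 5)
Your proposal is correct and follows essentially the same route as the paper's own proof: both reduce the corollary to Theorem~\ref{thmMainSIS} via the substitution $t \mapsto t_z$, $t' \mapsto c(z+f(N,t_z)+2)-c_0-1$, $x \mapsto z+f(N,t_z)+2$, and both verify the hypotheses $c(x) \geq c_0+t'+1$ and $t_0 \leq t \leq t' \leq t_{|\mathrm{T}(G_t)|-1}$ by the same inequality chain built from the two displayed conditions on $c$. Your write-up is, if anything, slightly cleaner in separating the two conditions than the paper's single chained display.
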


\noindent \\

\subsection{Theorem 2 extended}
% New Theorem 2

\begin{thm}[\ref{BthmMainCentralTimeSIS}](SIS) \label{thmMainCentralTimeSIS}
	\noindent \\
	
	Let $ w \in \mathbf{L_U} $ be a network input.\\
	
	Let $ 0 < N \in \mathbb{N} $.\\
	
	Let $ \myfunc{c}{ \mathbb{N} } { \mathfrak{C_{BB}} } { x } { c(x)=y } $  be a total computable non-decreasing function where 
	\[ 
	c(z + f( N, t_z  ) + 2) \geq c_0 + z + f( N, t_z  ) + 2 
	\]
	and
	\[
	c( z + f( N, t_z  ) + 2 ) - c_0 - 1 \leq t_{ |\mathrm{T}(G_t)|-1 } 
	\] \\
	
	If there is $ 0 \leq z_0 \leq | \mathrm{T}(G_t) | -1 $ and $ \epsilon, \, \epsilon_2 > 0 $ such that 
	
	\[
	z_0 + f( N, t_{z_0}  )  + 2
	= 
	\mathbf{ O }
	\left( \frac
	{ N^{ C } }
	{ \lg(N) } 
	\right)
	\]
	
	\noindent where
	
	\[ 
	0
	<
	C = 
	\]
	\[
	=
	\frac
	{
		{ \tau_{\mathbf{E}(\rho)}( N,f,t_{z_0}  ) }|_{ t_{z_0} }^{c( z_0 + f( N, t_{z_0}  ) + 2 ) - c_0 - 1  }
		-
		\Omega(w, c_0 + z_0 + f( N, t_{z_0} ) + 2 )
		-
		\epsilon
	}
	{ \Omega(w,  c_0 + z_0 + f( N, t_{z_0}  ) + 2  ) }
	\leq
	\]
	\[
	\leq
	\frac{1}{ \epsilon_2 }
	\] \\
	
	\noindent and $ \mathfrak{N'}_{BB}(N,f,t_{z_0} ,j) = ( G_t, \mathfrak{ P' }_{BB}(N), b_j ) $ is well-defined.
	
	Then, there are $ t_{cen_2}(c) $ and $ t_{cen_1}(c) $ such that 
	
	\[
	t_{cen_2}(c) = t_{cen_1}(c) \leq t_{ z_0 }
	\]

	\noindent \\
	
	\begin{proof}
		\noindent \\
		We know from Corollary \ref{corMainSIS} that
		
		\begin{equation}\label{stepFromcorMain2}
		\lim\limits_{ N \to \infty } 
		\mathbf{E}_{ \mathfrak{N'}_{BB}(N,f,t_z ) } 
		\left(
		{ {\displaystyle{\myDelta_{iso}^{net}} A} (o_i, c( z + f( N, t_z  ) + 2 ) )} 
		\right)
		\geq
		\end{equation}
		
		\begin{align*}
		& \geq
		\lim\limits_{ N \to \infty }
		\left( 
		{ \tau_{\mathbf{E}(\rho)}( N,f,t_{z}  ) }|_{ t_{z} }^{ c( z + f( N, t_z  ) + 2 ) - c_0 - 1  
		}
		-
		\Omega(w, c( z + f( N, t_z  ) + 2 ))
		\right)
		\lg(N) - \\
		& - \Omega(w, c( z + f( N, t_z  ) + 2 )) \lg( z + f( N, t_z  ) + 2) - \\
		& - 2 \, \Omega(w, c( z + f( N, t_z  ) + 2 ))\lg(\lg( z + f( N, t_z  ) + 2)) - A(w) - C_5 \\
		\end{align*}
		
		Suppose that there is $ t_{z_0} \in \mathrm{T}(G_t) $, where $ 0 \leq {z_0} \leq | \mathrm{T}(G_t) | -1 $, and $\epsilon > 0$ such that
		
		\begin{equation}
		{z_0} + f( N, t_{z_0}  ) + 2
		= 
		\mathbf{ O }
		\left( \frac
		{ N^{ C } }
		{ \lg(N) } 
		\right)
		\end{equation}
		
		\noindent where
		
		\[ 
		0 < C = 
		\frac
		{
			{ \tau_{\mathbf{E}(\rho)}( N,f,t_{z_0}  ) }|_{ t_{z_0} }^{ c( z_0 + f( N, t_{z_0}  ) + 2 ) - c_0 - 1  }
			-
			\Omega(w,  c_0 + z_0 + f( N, t_{z_0}  ) + 2 )
			-
			\epsilon
		}
		{ \Omega(w, c_0 + z_0 + f( N, t_{z_0}  ) + 2 ) }
		\] \\
		
		From the Definition \ref{BdefOmegawcSIS} we have that, for every $y \in \mathbb{N}$, if $ y \geq c_0 + z_0 + f( N, t_{z_0}  ) + 2 $, then
		
		\begin{equation}\label{stepNestingOmegas2}
		\Omega(w,y) \leq \Omega( w, c_0 + z_0 + f( N, t_{z_0} ) + 2 )
		\end{equation} \\
		
		Thus, since we are assuming $ c(z_0 + f( N, t_{z_0}  ) + 2) \geq c_0 + z_0 + f( N, t_{z_0} ) + 2 $, for fixed values of $ { \tau_{\mathbf{E}(\rho)}( N,f,t_{z_0}  ) }|_{ t_{z_0} }^{ c( z + f( N, t_z  ) + 2 ) - c_0 - 1  } $ and $\epsilon$ we will have from Step \eqref{stepNestingOmegas2} that

		\begin{equation}\label{stepCvsC'2}
		\frac
		{
			{ \tau_{\mathbf{E}(\rho)}( N,f,t_{z_0}  ) }|_{ t_{z_0} }^{ c( z_0 + f( N, t_{z_0}  ) + 2 ) - c_0 - 1  }
			-
			\Omega(w, c( z_0 + f( N, t_{z_0}  ) + 2 ))
			-
			\epsilon
		}
		{ \Omega(w, c( z_0 + f( N, t_{z_0}  ) + 2 )) }
		\geq
		\end{equation}
		\[
		\geq
		\frac
		{
			{ \tau_{\mathbf{E}(\rho)}( N,f,t_{z_0}  ) }|_{ t_{z_0} }^{ c( z_0 + f( N, t_{z_0}  ) + 2 ) - c_0 - 1  }
			-
			\Omega(w,  c_0 + z_0 + f( N, t_{z_0}  ) + 2 )
			-
			\epsilon
		}
		{ \Omega(w, c_0 + z_0 + f( N, t_{z_0}  ) + 2 ) }
		=
		C
		\geq
		0
		\] \\
		
		Let \[ C' = \frac
		{
			{ \tau_{\mathbf{E}(\rho)}( N,f,t_{z_0}  ) }|_{ t_{z_0} }^{ c( z_0 + f( N, t_{z_0}  ) + 2 ) - c_0 - 1  }
			-
			\Omega(w, c( z_0 + f( N, t_{z_0}  ) + 2 ))
			-
			\epsilon
		}
		{ \Omega(w, c( z_0 + f( N, t_{z_0}  ) + 2 )) } \]
		
		\noindent \\

		Remember that for every $x>0$ and $  t, t' \in \mathrm{T}(G_t) $ there is $\epsilon_2$ such that\footnote{ Remember that one can always have a program that halts for every input, so it will also halts for every partial output and, hence, halt on every cycle --- see Definition~\ref{BdefOmegawcSIS}. }
		
		\begin{equation}
		0 < \epsilon_2 \leq \Omega( w, x ) \leq 1
		\end{equation} 
		
		\noindent and thus, from the Definition \ref{BdefAveragePrevalenceN'_BB} , we will also have that
		
		\begin{equation}\label{stepBoundingC2}
		\frac{-1 - \epsilon}{ \epsilon_2 }
		\leq
		\frac
		{
			{ \tau_{\mathbf{E}(\rho)}( N,f,t  ) }|_{ t }^{ t' }
			-
			\Omega( w, x )
			- \epsilon
		}
		{ \Omega( w, x ) } \leq \frac{1}{ \epsilon_2 }
		\end{equation} \\
		
		Hence, from Steps \eqref{stepCvsC'2} and \eqref{stepBoundingC2} we will have that 
		\[
		z_0 + f( N, t_{z_0}  )  + 2
		= 
		\mathbf{ O }
		\left( \frac
		{ N^{ C' } }
		{ \lg(N) } 
		\right)
		\]
		\noindent where
		\[ 
		0
		\leq
		C' = 
		\]
		\[
		=
		\frac
		{
			{ \tau_{\mathbf{E}(\rho)}( N,f,t_{z_0}  ) }|_{ t_{z_0} }^{ c( z_0 + f( N, t_{z_0}  ) + 2 ) - c_0 - 1  }
			-
			\Omega(w, c( z_0 + f( N, t_{z_0}  ) + 2 ) )
			-
			\epsilon
		}
		{ \Omega(w, c( z_0 + f( N, t_{z_0}  ) + 2 ) ) }
		\leq
		\]
		\[
		\leq
		\frac{1}{ \epsilon_2 }
		\]  \\
		And, since $ {z_0} + f( N, t_{z_0}  ) + 2 $ is now assymptotically dominated by $ \frac
		{ N^{ C' } }
		{ \lg(N) } $, then by definition we will have that there is a constant $ C_6 $ such that
		
		\begin{equation}\label{stepMainthmMainCentralTime2}
		\lim\limits_{ N \to \infty }
		\left( 
		{ \tau_{\mathbf{E}(\rho)}( N,f,t_{z_0}  ) }|_{ t_{z_0} }^{ c( z_0 + f( N, t_{z_0}  ) + 2 ) - c_0 - 1  }
		-
		\Omega(w, c( z_0 + f( N, t_{z_0}  ) + 2 ))
		\right)
		\lg(N) - \\
		\end{equation}
		
		\begin{align*}
		& - \Omega(w, c( z_0 + f( N, t_{z_0}  ) + 2 )) \lg( {z_0} + f( N, t_{z_0}  ) + 2) - \\
		& - 2 \, \Omega(w, c( z_0 + f( N, t_{z_0}  ) + 2 ))\lg(\lg( {z_0} + f( N, t_{z_0}  ) + 2)) - A(w) - C_5 
		\geq \\
		%p2
		& \geq
		\lim\limits_{ N \to \infty }
		\left( 
		{ \tau_{\mathbf{E}(\rho)}( N,f,t_{z_0}  ) }|_{ t_{z_0} }^{ c( z_0 + f( N, t_{z_0}  ) + 2 ) - c_0 - 1  }
		-
		\Omega(w, c( z_0 + f( N, t_{z_0}  ) + 2 ))
		\right)
		\lg(N) - \\
		& - \Omega(w, c( z_0 + f( N, t_{z_0}  ) + 2 )) \lg( C_6 \, \frac
		{ N^{ C' } }
		{ \lg(N) } ) - \\
		& - 2 \, \Omega(w, c( z_0 + f( N, t_{z_0}  ) + 2 ))\lg( \lg( C_6 \, \frac
		{ N^{ C' } }
		{ \lg(N) } ) ) - A(w) - C_5 
		\geq \\
		%p3
		& \geq
		\lim\limits_{ N \to \infty }
		\left( 
		{ \tau_{\mathbf{E}(\rho)}( N,f,t_{z_0}  ) }|_{ t_{z_0} }^{ c( z_0 + f( N, t_{z_0}  ) + 2 ) - c_0 - 1  }
		-
		\Omega(w, c( z_0 + f( N, t_{z_0}  ) + 2 ))
		\right)
		\lg(N) - \\
		& - \Omega(w, c( z_0 + f( N, t_{z_0}  ) + 2 )) \left( 
		\lg( C_6 ) + C' \, \lg( N ) - \lg(\lg(N))  )
		\right) - \\
		& - 2 \, \Omega(w, c( z_0 + f( N, t_{z_0}  ) + 2 ))\lg\left( 
		\lg( C_6 ) + \lg( N^{ C' } ) - \lg(\lg(N))  ) 
		\right) - A(w) - C_5 
		\geq \\
		%p4
		& \geq
		\lim\limits_{ N \to \infty }
		\left( 
		\epsilon
		\right)
		\lg(N) - \Omega(w, c( z_0 + f( N, t_{z_0}  ) + 2 )) \left( 
		\lg( C_6 ) - \lg(\lg(N))  )
		\right) - \\
		& - 2 \,  \Omega(w, c( z_0 + f( N, t_{z_0}  ) + 2 ))\lg\left( 
		\lg( C_6 ) + \lg( N^{ C' } ) - \lg(\lg(N))  ) 
		\right) - A(w) - C_5 
		\geq \\
		%p5
		& \geq
		\lim\limits_{ N \to \infty }
		\left( 
		\epsilon
		\right)
		\lg(N) - \Big( 
		\lg( C_6 ) - \lg(\lg(N))  
		\Big) - 2 \, \lg\Big( 
		\lg( C_6 ) + \lg( N^{ C' } ) - \lg(\lg(N))  
		\Big) - \\
		& - A(w) - C_5 
		\geq \\
		%p6
		& \geq
		\lim\limits_{ N \to \infty }
		\left( 
		\epsilon
		\right)
		\lg(N) - 
		\lg( C_6 ) + \lg(\lg(N)) - 2 \, \lg\left( \lg( N^{ \frac{1}{ \epsilon_2 } } ) \right) - A(w) - C_5 
		\geq \\
		%p7
		& \geq
		\lim\limits_{ N \to \infty }
		\left( 
		\epsilon
		\right)
		\lg(N) - 
		\lg( C_6 ) + \lg(\lg(N)) 
		- 2 \, \lg( \frac{1}{ \epsilon_2 } \, \lg( N ) ) 
		- A(w) - C_5 
		\geq \\
		%p7
		& \geq
		\lim\limits_{ N \to \infty }
		\left( 
		\epsilon
		\right)
		\lg(N) - 
		\lg( C_6 ) + \lg(\lg(N)) 
		- 2 \, \lg( \frac{1}{ \epsilon_2 } ) - 2 \, \lg( \lg( N ) )
		- A(w) - C_5 
		\geq \\
		%p8
		& \geq
		\lim\limits_{ N \to \infty }
		\left( 
		\epsilon
		\right)
		\lg(N) - 
		\lg( C_6 ) 
		- 2 \, \lg( \frac{1}{ \epsilon_2 } )
		- \lg( \lg( N ) )
		- A(w) - C_5 
		=
		\infty \\
		\end{align*}

		Thus, from Steps \eqref{stepFromcorMain2} and \eqref{stepMainthmMainCentralTime2}, we will have that

		\begin{equation}\label{stepPrevthmMainCentralTime2}
		\lim\limits_{ N \to \infty } 
		\mathbf{E}_{ \mathfrak{N'}_{BB}(N,f,t_{z_0} ) } 
		\left(
		{ {\displaystyle{\myDelta_{iso}^{net}} A} (o_i, c( z_0 + f( N, t_{z_0}  ) + 2 ) ) } 
		\right)
		=
		\infty
		\end{equation} \\
		
		Then, directly from the Definitions \ref{BdefTimecentrality1SIS} and \ref{BdefTimecentrality2SIS} and Step \eqref{stepPrevthmMainCentralTime2}, since $ t_{ z_0 } $ satisfies these definitions, we will have that
		
		\[
		t_{cen_2}(c) = t_{cen_1}(c) \leq t_{ z_0 }
		\]

	\end{proof}
	
	\begin{noteunderthm}
		The reader is also invited to note that the same result also hold for condition 
		\[
		C=
		\frac
		{
			{ \tau_{\mathbf{E}(\rho)}( N,f,t_{z_0}  ) }|_{ t_{z_0} }^{c( z + f( N, t_z  ) + 2 ) - c_0 - 1  }
			-
			\Omega(w, c( c_0 + z_0 + f( N, t_{z_0} ) + 2 ) )
			-
			\epsilon
		}
		{ \Omega(w,  c( c_0 + z_0 + f( N, t_{z_0} ) + 2 ) }
		\]
		\noindent instead of 
		\[
		\frac
		{
			{ \tau_{\mathbf{E}(\rho)}( N,f,t_{z_0}  ) }|_{ t_{z_0} }^{c( z + f( N, t_z  ) + 2 ) - c_0 - 1  }
			-
			\Omega(w, c_0 + z_0 + f( N, t_{z_0} ) + 2 )
			-
			\epsilon
		}
		{ \Omega(w,  c_0 + z_0 + f( N, t_{z_0}  ) + 2  ) }
		\]
		\noindent In order to prove it, just make $ C'=C $ in the proof of Theorem~\ref{thmMainCentralTimeSIS}.
	\end{noteunderthm}
	
\end{thm}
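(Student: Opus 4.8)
The plan is to reduce everything to the lower bound already established in Corollary~\ref{corMainSIS} and then to certify that the hypothesis on $C$ forces that bound to diverge. First I would invoke Corollary~\ref{corMainSIS} at $z = z_0$, which bounds $\lim_{N\to\infty}\mathbf{E}_{\mathfrak{N'}_{BB}(N,f,t_{z_0})}(\myDelta_{iso}^{net} A(o_i, c(z_0 + f(N,t_{z_0}) + 2)))$ from below by $(\tau_{\mathbf{E}(\rho)} - \Omega)\lg(N)$ minus two logarithmic correction terms of the shape $\Omega\,\lg(z_0+f+2)$ and $2\,\Omega\,\lg\lg(z_0+f+2)$, plus additive constants. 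The entire remaining task is then to show that this expression tends to $+\infty$, since by Definitions~\ref{BdefTimecentrality1SIS} and~\ref{BdefTimecentrality2SIS} any $t_{z_0}$ witnessing divergence of the expected emergent complexity automatically dominates the minimum $t_{cen_1}(c)$.

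The key preparatory step is to pass from the abstract $C$ of the hypothesis, whose halting probability is evaluated at $c_0 + z_0 + f + 2$, to the quantity $C'$ that actually appears inside Corollary~\ref{corMainSIS}, whose halting probability is evaluated at $c(z_0+f+2)$. Because $c$ is non-decreasing and $c(z_0+f+2) \geq c_0 + z_0 + f + 2$, the monotonicity of the cycle-bounded conditional halting probability from Definition~\ref{BdefOmegawcSIS} gives $\Omega(w, c(z_0+f+2)) \leq \Omega(w, c_0+z_0+f+2)$; dividing through yields $C' \geq C \geq 0$. I would also record the two-sided bound $0 < \epsilon_2 \leq \Omega(w,x) \leq 1$ from Definition~\ref{BdefOmegawcSIS} (the upper bound since $\Omega < 1$ for a self-delimiting language, the lower since one can always exhibit a program halting on every input in every cycle); this both keeps $C$ finite, $C \leq 1/\epsilon_2$, and later caps the exponent $N^{C'}$.

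Next I would unfold the big-$\mathbf{O}$ assumption: write $z_0 + f(N,t_{z_0}) + 2 \leq C_6 \, N^{C'}/\lg(N)$ for a suitable constant $C_6$ and large $N$, take binary logarithms to obtain $\lg(z_0+f+2) \leq \lg(C_6) + C'\lg(N) - \lg\lg(N)$, and substitute this into the two correction terms. The arithmetic then collapses the leading-order coefficient of $\lg(N)$ to $(\tau_{\mathbf{E}(\rho)} - \Omega - \Omega C')$, which by the very definition of $C'$ equals $\epsilon$; every surviving term is at worst $\mathbf{O}(\lg\lg(N))$. Hence the lower bound behaves like $\epsilon\,\lg(N) - \mathbf{O}(\lg\lg N) \to \infty$, so $\lim_{N\to\infty}\mathbf{E}_{\mathfrak{N'}_{BB}(N,f,t_{z_0})}(\myDelta_{iso}^{net}A(o_i,\,c(z_0+f+2))) = \infty$. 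Finally, since $t_{z_0}$ witnesses this divergence, Definitions~\ref{BdefTimecentrality1SIS} and~\ref{BdefTimecentrality2SIS} give $t_{cen_1}(c) \leq t_{z_0}$, and Note~\ref{BnoteunderdefRelatingtwotimecentralitiesSIS} supplies $t_{cen_2}(c) = t_{cen_1}(c)$.

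I expect the main obstacle to lie entirely in the asymptotic bookkeeping of the third paragraph: one must ensure that the single genuinely positive coefficient $\epsilon$ multiplying $\lg(N)$ survives after cancelling the $\Omega C'\lg(N)$ piece against the leading term, while simultaneously verifying that none of the iterated-logarithm terms $\lg\lg(N)$ arising from $\lg\lg(\cdot)$ can overtake it. This is precisely where the finiteness bound $C' \leq 1/\epsilon_2$ is indispensable, since it caps the exponent in $N^{C'}$ and thereby prevents $\lg\lg(N^{C'})$ from growing any faster than $\lg\lg(N)$, keeping the negative contributions strictly sublinear in $\lg(N)$.
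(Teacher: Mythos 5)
Your proposal is correct and follows essentially the same route as the paper's own proof: invoke Corollary~\ref{corMainSIS} at $z=z_0$, pass from $C$ to the larger exponent $C'$ via monotonicity of $\Omega(w,\cdot)$, cap $C'\leq 1/\epsilon_2$ using $0<\epsilon_2\leq\Omega\leq 1$, unfold the big-$\mathbf{O}$ so the $\Omega C'\lg(N)$ term cancels to leave $\epsilon\lg(N)$ minus $\mathbf{O}(\lg\lg N)$ corrections, and conclude via the time-centrality definitions. No gaps.
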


\noindent \\

\subsection{Main Corollary extended}
% New Corollary SIS

\begin{corollaryundersubsection}[\ref{BcorSIS}](SIS)\label{corSIS}
	\noindent \\
	
	Let $ w \in \mathbf{L_U} $ be a network input.\\
	
	Let $ 0 < N \in \mathbb{N} $. \\
	
	Let $ {\mathfrak{N'}_{BB}}(N,f,t_{z_0} ,j) = ( G_t, \mathfrak{ P' }_{BB}(N), b_j ) $ be well-defined.\\ 
	
	Let $ \myfunc{c}{ \mathbb{N} } { \mathfrak{C_{BB}} } { x } { c(x)=y } $  be a total computable non-decreasing function where 
	
	\[ 
	c(z_0 + f( N, t_{z_0}  ) + 2) \geq c_0 + z_0 + f( N, t_{z_0}  ) + 2 
	\]
	and
	\[
	c( z_0 + f( N, t_{z_0}  ) + 2 ) - c_0 - 1 \leq t_{ |\mathrm{T}(G_t)|-1 } 
	\] \\
	
	If 
	\[ f(N,t_{z_0} ) =\mathbf{O}\big( \lg( N ) \big) \]
	where every $ G_t \in \mathbb{G}_{SIS}( f, t_{z_0} ) $ achieves stationary  prevalence $ \rho $ in a number of time intervals
	\[
	 \Delta^*_{t_{z_0}}  \leq c( z_0 + f( N, t_{z_0}  ) + 2 ) - c_0 - 1
	 \]
	 \noindent after time instant $ t_{z_0} $ and
	\[ 
	\rho \sim \exp( - \frac{1}{m \lambda}) 
	>
	\Omega(w, c_0 + z_0 + f( N, t_{z_0}  ) + 2 ) 
	\]
	\noindent then, there are $ t_{cen_2}(c) $ and $ t_{cen_1}(c) $ such that
	\[
	t_{cen_2}(c) = t_{cen_1}(c) \leq t_{ z_0 }
	\] 
\end{corollaryundersubsection}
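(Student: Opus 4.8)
The plan is to obtain Corollary~\ref{corSIS} as a direct specialization of Theorem~\ref{thmMainCentralTimeSIS}, so that all of the work reduces to verifying the two hypotheses of that theorem under the present scale-free/epidemiological assumptions. First I would unpack the growth condition. Since by hypothesis $ f(N,t_{z_0}) = \mathbf{O}\big( \lg(N) \big) $, I obtain at once
\[
z_0 + f( N, t_{z_0}  ) + 2 = z_0 + \mathbf{O}\big( \lg(N) \big) + 2 = \mathbf{O}\big( \lg(N) \big).
\]
Because for every fixed $ C > 0 $ the quantity $ N^{C}/\lg(N) $ eventually dominates $ \lg(N) $ (as $ N^{C} $ outgrows $ (\lg N)^{2} $), this $ \mathbf{O}(\lg(N)) $ bound will entail $ z_0 + f( N, t_{z_0}  ) + 2 = \mathbf{O}\big( N^{C}/\lg(N) \big) $ as soon as I have produced an admissible exponent $ C $. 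Thus the first hypothesis of Theorem~\ref{thmMainCentralTimeSIS} will hold automatically.

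Second, I would construct the constant $ C $ together with the parameters $ \epsilon, \epsilon_2 > 0 $. The decisive step is the identification of the averaged diffusion density with the stationary prevalence: since the assumption $ \Delta^*_{t_{z_0}} \leq c( z_0 + f( N, t_{z_0}  ) + 2 ) - c_0 - 1 $ guarantees that the SIS process has already reached its stationary regime inside the interval $ [\, t_{z_0}, \, c( z_0 + f( N, t_{z_0}  ) + 2 ) - c_0 - 1 \,] $, Definitions~\ref{BdefFamilyGSIS} and~\ref{BdefAveragePrevalenceN'_BB} give
\[
{ \tau_{\mathbf{E}(\rho)}( N,f,t_{z_0}  ) }|_{ t_{z_0} }^{ c( z_0 + f( N, t_{z_0}  ) + 2 ) - c_0 - 1  } = \rho \sim \exp\left( - \frac{1}{m \lambda} \right).
\]
The hypothesis $ \rho > \Omega(w, c_0 + z_0 + f( N, t_{z_0}  ) + 2 ) $ then yields an $ \epsilon > 0 $ with $ \rho - \Omega(w, c_0 + z_0 + f( N, t_{z_0}  ) + 2 ) - \epsilon > 0 $, which forces the numerator of $ C $ to be positive, so $ C > 0 $. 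For the upper bound I would bound the density by $ 1 $ and choose, as permitted by Definition~\ref{BdefOmegawcSIS}, a value $ \epsilon_2 $ with $ 0 < \epsilon_2 \leq \Omega(w, x) \leq 1 $, obtaining
\[
0 < C \leq \frac{ 1 - \Omega(w, c_0 + z_0 + f( N, t_{z_0}  ) + 2 ) - \epsilon }{ \Omega(w, c_0 + z_0 + f( N, t_{z_0}  ) + 2 ) } \leq \frac{1}{ \epsilon_2 }.
\]

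With both hypotheses of Theorem~\ref{thmMainCentralTimeSIS} in place, its conclusion delivers $ t_{cen_2}(c) = t_{cen_1}(c) \leq t_{ z_0 } $, which is exactly the claim. I expect the only genuine obstacle to be the identification $ { \tau_{\mathbf{E}(\rho)} } = \rho $ above: it is the sole point where the epidemiological content of~\cite{Pastor-Satorras2001a,Pastor-Satorras2002,Pastor-Satorras2001} enters, and it rests on confirming that the window dictated by $ c $ is long enough (via the $ \Delta^*_{t_{z_0}} $ bound) for the averaged-over-$ b_j $ density of Definition~\ref{BdefAveragePrevalenceN'_BB} to coincide with the single stationary value $ \rho $. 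Everything else is bookkeeping on asymptotic orders and on the monotonicity $ \Omega(w,y) \leq \Omega(w, c_0 + z_0 + f( N, t_{z_0}  ) + 2 ) $ for $ y \geq c_0 + z_0 + f( N, t_{z_0}  ) + 2 $, already established within the proof of Theorem~\ref{thmMainCentralTimeSIS}.
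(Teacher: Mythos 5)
Your proposal is correct and follows essentially the same route as the paper's own proof: derive $z_0 + f(N,t_{z_0})+2 = \mathbf{O}(\lg N) = \mathbf{O}(N^C/\lg N)$, identify $\tau_{\mathbf{E}(\rho)}$ with the stationary prevalence $\rho$ via the $\Delta^*_{t_{z_0}}$ bound, extract $\epsilon>0$ from the strict inequality $\rho > \Omega(w, c_0+z_0+f(N,t_{z_0})+2)$ to get $0 < C \leq 1/\epsilon_2$, and invoke Theorem~\ref{thmMainCentralTimeSIS}. The only cosmetic difference is that the paper writes $\rho = \Omega(\cdot)+\epsilon'$ and then picks $\epsilon<\epsilon'$, which is exactly your choice of $\epsilon$ phrased differently.
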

\begin{proof}
	The proof comes directly from Theorem~\ref{thmMainCentralTimeSIS} and Definition~\ref{BdefN'_BB}: \\
	
	We have by supposition that
	\begin{align}\label{stepMaincentraltimeSIS}
	f(i,t ) =\mathbf{O}\big( \lg( i ) \big) 
	\end{align} 
	
	\noindent Thus, 
	\begin{align}\label{stepFunctionfandSIS}
	z_0 + f( N, t_{z_0}  )  + 2
	=
	z_0 + \mathbf{ O }\left( \lg(N) 
	\right)
	+ 2
	=
	\mathbf{ O }
	\left( \lg(N) 
	\right)
	\end{align}
	\noindent where $ 0 \leq z_0 \leq | \mathrm{T}(G_t) | -1 $. \\
	
	By supposition, we have that the time interval to achieve stationary prevalence is upper bounded by  $ c( z_0 + f( N, t_{z_0}  ) + 2 ) - c_0 - 1 $ and that there is $ \epsilon' > 0 $ such that
	\[
	\frac{1}{ e^{ \left(\frac{1}{m \lambda} \right) } } = \Omega(w, c_0 + z_0 + f( N, t_{z_0}  ) + 2 ) + \epsilon'  
	\]
	
	\noindent \\ 
	
	Hence, we will have, from Definitions~\ref{BdefPrevalenceN'_BB} and \ref{BdefN'_BB} and from the definition of stationary prevalence $\rho$ in~\cite{Pastor-Satorras2001,Pastor-Satorras2001a,Pastor-Satorras2002}, that there is $ \epsilon > 0 $ such that
	\begin{align}\label{stepCfromSIS}
	& \frac{-1 - \epsilon}{ \epsilon_2 } 
	< 0 
	<
	C = 
	\frac
	{
		\frac{1}{ e^{ \left(\frac{1}{m \lambda} \right) } }
		-
		\Omega(w, c_0 + z_0 + f( N, t_{z_0}  ) + 2 )
		-
		\epsilon
	}
	{ \Omega(w,  c_0 + z_0 + f( N, t_{z_0}  ) + 2  ) }
	=
	\end{align}
	\[
	=
	\frac
	{
		{ \tau_{\mathbf{E}(\rho)}( N,f,t_{z_0}  ) }|_{ t_{z_0} }^{ c( z_0 + f( N, t_{z_0}  ) + 2 ) - c_0 - 1  }
		-
		\Omega(w, c_0 + z_0 + f( N, t_{z_0}  ) + 2 )
		-
		\epsilon
	}
	{ \Omega(w,  c_0 + z_0 + f( N, t_{z_0}  ) + 2  ) }
	\leq
	\]
	\[
	\leq
	\frac
	{
		1
		-
		\Omega(w, c_0 + z_0 + f( N, t_{z_0}  ) + 2 )
		-
		\epsilon
	}
	{ \Omega(w,  c_0 + z_0 + f( N, t_{z_0}  ) + 2  ) }
	\leq
	\frac{1}{ \epsilon_2 }
	\]
	\noindent where $ \epsilon' > \epsilon $. \\
	From Step \eqref{stepFunctionfandSIS} we have that if $\frac{1}{\epsilon_2}\geq C > 0$, then
	\begin{align}\label{stepDiameterandSWSIS}
	z_0 + f( N, t_{z_0}  )  + 2
	=
	z_0 + \mathbf{ O }\left( \lg(N) 
	\right)
	+ 2
	=
	\mathbf{ O }
	\left( \lg(N) 
	\right)
	= 
	\mathbf{ O }
	\left(
	\frac
	{ N^{ C } }
	{ \lg(N) } 
	\right)
	\end{align}
	Then, from Steps \eqref{stepCfromSIS} and \eqref{stepDiameterandSWSIS} and Theorem~\ref{thmMainCentralTimeSIS}  we will have that there are $ t_{cen_2}(c) $ and $ t_{cen_1}(c) $ such that
	\[
	t_{cen_2}(c) = t_{cen_1}(c) \leq t_{ z_0 }
	\] 
\end{proof}

\end{document}